\newcommand{\myquot}[1]{``#1''}
\newcommand{\tabcenter}[1]{\multicolumn{1}{c}{#1}}
\renewcommand{\epsilon}{\varepsilon}
\newcommand{\bigo}[0]{\mathcal{O}}
\newcommand{\pow}[1]{2^{#1}}
\newcommand{\cceq}{::=}
\newcommand{\card}[1]{|#1|}
\newcommand{\size}[1]{|#1|}
\newcommand{\set}[1]{\{ #1 \}}
\newcommand{\nats}[0]{\mathbb{N}}
\newcommand{\bool}[0]{\mathbb{B}}
\renewcommand{\implies}{\rightarrow}
\newcommand{\cl}{\mathrm{cl}}
\newcommand{\ttrue}{\mathtt{tt}}
\newcommand{\ffalse}{\mathtt{ff}}
\newcommand{\init}[0]{I}
\newcommand{\marking}{t}
\newcommand{\aut}{\mathfrak{A}}
\newcommand{\autb}{\mathfrak{B}}
\newcommand{\autd}{\mathfrak{D}}
\newcommand{\autr}{\mathfrak{G}}
\newcommand{\autp}{\mathfrak{P}}
\newcommand{\pref}[2]{#1[0,#2)}
\newcommand{\suff}[2]{#1[#2,\infty)}
\newcommand{\accpar}{F}
\newcommand{\bplus}{\mathcal{B}^{+}}
\newcommand{\suc}[2]{\mathrm{Succ}_{#1}{#2}}
\newcommand{\ltl}{\text{LTL}}
\newcommand{\prompt}{\text{Prompt-LTL}}
\newcommand{\ldl}{\text{LDL}}
\newcommand{\rltl}{\text{rLTL}(\ensuremath{\Boxdot, \Diamonddot})}
\newcommand{\rprompt}{\text{rPrompt-LTL}}
\newcommand{\rldl}{\text{rLDL}}
\newcommand{\pldl}{\text{PLDL}}
\newcommand{\pltl}{\text{PLTL}}
\newcommand{\rpromptldl}{\text{rPrompt-LDL}}
\newcommand{\promptldl}{\text{Prompt-LDL}}
\newcommand{\prompteval}[0]{V^{\textsc{p}}}
\newcommand{\ldleval}[0]{V^{\textsc{d}}}
\newcommand{\rltleval}[0]{V^{\textsc{r}}}
\newcommand{\rprompteval}[0]{V^{\textsc{rp}}}
\newcommand{\rldleval}[0]{V^{\textsc{rd}}}
\newcommand{\rpromptldleval}[0]{V^{\textsc{rpd}}}
\newcommand{\promptldleval}[0]{V^{\textsc{pd}}}
\newcommand{\Rexp}{\mathcal{R}}
\newcommand{\robRexp}{\mathcal{R}^{\textsc{rd}}}
\newcommand{\robpromptRexp}{\mathcal{R}^{\textsc{rpd}}}
\newcommand{\conc}{\,;}
\newcommand{\tval}{\beta}
\newcommand{\halfthinspace}{{\kern .08333em}}
\newcommand{\ddiamond}[1]{\langle\/ #1 \/\rangle\,}
\newcommand{\bbox}[1]{[\halfthinspace#1\halfthinspace]\,}
\newcommand{\ddiamonddot}[1]{\langle\/ \cdot#1\cdot \/\rangle\,}
\newcommand{\promptddiamonddot}[1]{\langle\/ \cdot#1\cdot \/\rangle\textsubscript{\normalfont\textbf{p}}\,}
\newcommand{\promptddiamond}[1]{\langle\/ #1 \/\rangle\textsubscript{\normalfont\textbf{p}}\,}
\newcommand{\bboxdot}[1]{[\cdot#1\cdot]\,}
\tikzset{robust/.style={line width=.16ex,line join=round}}
\let\Box\relax
\DeclareMathOperator{\Box}{%
	\text{%
		\tikz[baseline]{%
    			\draw[robust] (0ex,-.1ex) -- (0ex, 1.4ex) -- (1.5ex, 1.4ex) -- (1.5ex, -.1ex) -- cycle;%
		}%
	}%
}
\DeclareMathOperator{\Boxdot}{%
	\text{%
		\tikz[baseline]{%
    			\draw[robust] (0ex, -.1ex) -- (0ex, 1.4ex) -- (1.5ex, 1.4ex) -- (1.5ex, -.1ex) -- cycle;%
	    		\fill (.75ex, .65ex) circle (.15ex);%
    		}%
	}%
}
\let\Diamond\relax
\DeclareMathOperator{\Diamond}{%
	\text{%
		\tikz[baseline]{%
			\draw[robust] (0ex,.6ex) -- (.95ex, 1.55ex) -- (1.9ex, .6ex) -- (.95ex, -.35ex) -- cycle;%
		}%
	}%
}
\DeclareMathOperator{\Diamonddot}{%
	\text{%
		\tikz[baseline]{%
			\draw[robust] (0ex,.6ex) -- (.95ex, 1.55ex) -- (1.9ex, .6ex) -- (.95ex, -.35ex) -- cycle;%
			\fill (.95ex, .6ex) circle (.15ex);%
		}%
	}%
}
\DeclareMathOperator{\R}{%
	\text{%
		\tikz[baseline]{%
			\node[inner sep=0pt, anchor=base, font=\bfseries] {R};%
		}%
	}%
}
\DeclareMathOperator{\Diamondprompt}{%
	\text{%
		\tikz[baseline]{%
			\draw[robust] (0ex,.6ex) -- (.95ex, 1.55ex) -- (1.9ex, .6ex) -- (.95ex, -.35ex) -- cycle;%
		}%
	}%
	\textsubscript{\normalfont\textbf{p}}%
}
\DeclareMathOperator{\Diamondpromptdot}{%
	\text{%
		\tikz[baseline]{%
			\draw[robust] (0ex,.6ex) -- (.95ex, 1.55ex) -- (1.9ex, .6ex) -- (.95ex, -.35ex) -- cycle;%
			\fill (.95ex, .6ex) circle (.15ex);%
		}%
	}%
	\textsubscript{\normalfont\textbf{p}}%
}
\DeclareMathOperator{\U}{%
	\text{%
		\tikz[baseline]{%
			\node[inner sep=0pt, anchor=base, font=\bfseries] {U};%
		}%
	}%
}
\DeclareMathOperator{\X}{%
	\text{%
		\tikz[baseline]{%
    			\draw[robust] (.75ex, .65ex) circle (.75ex);%
    		}%
	}%
}
\newcommand{\itotruthvalue}[1]{0^{#1-1}1^{5-#1}} 
\newcommand{\np}{\textsc{{NP}}}
\newcommand{\pspace}{\textsc{{PSpace}}}
\newcommand{\exptime}{\textsc{{ExpTime}}}
\newcommand{\twoexp}{\textsc{{2ExpTime}}}
\newcommand{\sys}{\mathcal{S}}
\newcommand{\trace}{\lambda}
\newcommand{\game}{\mathcal G}
\newcommand{\ggraph}{G}
\title{Robust, Expressive, and Quantitative Linear Temporal Logics: Pick any Two for Free (full version)\thanks{Supported by the Saarbrücken Graduate School of Computer Science.}}
\author{Daniel Neider\inst{1}, Alexander Weinert\inst{2} and Martin Zimmermann\inst{3}}
\institute{Max Planck Institute for Software Systems, 67663 Kaiserslautern, Germany \\ \email{neider@mpi-sws.org} \and
German Aerospace Center (DLR), Institute for Software Technology, 51147 Cologne, Germany\\
\email{alexander.weinert@dlr.de}
\\
\and
University of Liverpool, Liverpool L69 3BX, United Kingdom\\
\email{martin.zimmermann@liverpool.ac.uk}
}
\begin{document}

\maketitle

\begin{abstract}
Linear Temporal Logic (LTL) is the standard specification language for reactive systems and is successfully applied in industrial settings.
However, many shortcomings of LTL have been identified in the literature, among them the limited expressiveness, the lack of quantitative features, and the inability to express robustness.
There is work on overcoming these shortcomings, but each of these is typically addressed in isolation.
This is insufficient for applications where all shortcomings manifest themselves simultaneously.

Here, we tackle this issue by introducing logics that address more than one shortcoming.
To this end, we combine the logics Linear Dynamic Logic, Prompt-LTL, and robust LTL, each addressing one aspect, to new logics.
For all combinations of two aspects, the resulting logic has the same desirable algorithmic properties as plain LTL.
In particular, the highly efficient algorithmic backends that have been developed for LTL are also applicable to these new logics. 
Finally, we discuss how to address all three aspects simultaneously.
\end{abstract}

\section{Introduction}
\label{sec-intro}
Linear Temporal Logic ($\ltl$)~\cite{Pnueli77} is amongst the most prominent and most important specification languages for reactive systems, e.g., non-terminating controllers interacting with an antagonistic environment. Verification of such systems against $\ltl$ specifications is routinely applied in industrial settings nowadays~\cite{EisnerFismanPSL,Fix08}. Underlying this success story is the exponential compilation property~\cite{VardiWolper94}: every $\ltl$ formula can be effectively translated into an equivalent Büchi automaton of exponential size (and it turns out that this upper bound is tight). In fact, almost all verification algorithms for $\ltl$ are based on this property, which is in particular true for the popular polynomial space model checking algorithm and the doubly-exponential time synthesis algorithms. Other desirable properties of $\ltl$ include its compact and variable-free syntax and its intuitive semantics.

Despite the success of $\ltl$, a plethora of extensions of $\ltl$ have been studied, all addressing individual and specific shortcomings of $\ltl$, e.g., its limited expressiveness, its lack of quantitative features, and its inability to express robustness. Commonly, extensions of $\ltl$ as described above are only studied in isolation---the logics are either more expressive, or quantitative, or robust.
One notable exception is Parametric $\ldl$ ($\pldl$)~\cite{FaymonvilleZimmermann17}, which adds quantitative operators and increased expressiveness while maintaining the exponential compilation property and intuitive syntax and semantics.
In practical settings, however, it does not suffice to address one shortcoming of $\ltl$ while ignoring the others.
Instead, one needs a logic that combines multiple extensions while still maintaining the desirable properties of $\ltl$.
The overall goal of this paper is, hence, to bridge this gap, thereby enabling expressive, quantitative, and robust verification and synthesis. 

It is a well-known fact that $\ltl$ is strictly weaker than Büchi automata, i.e., it does not harness the full expressive power of the algorithmic backends.
Thus, increasing the expressiveness of $\ltl$ has generated much attention~\cite{LeuckerSanchez07,Vardi11,VardiWolper94,Wolper83} as it can be easily exploited:
as long as the new logic also has the exponential compilation property, the same optimized backends as for $\ltl$ can be used.
A prominent and recent example of such an extension that yields the full expressive power of Büchi automata is Linear Dynamic Logic ($\ldl$)~\cite{Vardi11}, which adds to $\ltl$ temporal operators guarded by regular expressions. 
In fact, the guarded operators can express all temporal operators of $\ltl$, i.e., we discard the temporal operators and only allow guarded operators.

As an example, consider the specification \myquot{p holds at every even time point, but may or may not hold at odd time points}. It is well-known that this property is not expressible in $\ltl$, as $\ltl$, intuitively, is unable to count modulo a fixed number. However, the specification is easily expressible in $\ldl$ as $\bbox{r} p$, where $r$ is the regular expression~$(\ttrue \cdot \ttrue)^*$. The formula requires $p$ to be satisfied at every position~$j$ such that the prefix up to position~$j$ matches the regular expression~$r$ (which is equivalent to $j$ being even), i.e., $\ttrue$ is an atomic regular expression that matches every letter. In this work, we consider $\ldl$ instead of the alternatives cited above for its conceptual simplicity: $\ldl$ has a simple and variable-free syntax based on regular expressions as well as intuitive semantics (assuming some familiarity with regular expressions).

Another serious shortcoming of $\ltl$ (and $\ldl$) is its inability to adequately express timing bounds. For example, consider the specification \myquot{every request~$q$ is eventually answered by a response~$p$}, which is expressed in $\ltl$ as $\Box (q \rightarrow \Diamond p)$.
It is satisfied, even if the waiting time between requests~$q$ and responses~$p$ diverges to infinity, although such a behavior is typically undesired. Again, a long line of research has addressed this second shortcoming of $\ltl$~\cite{AlurEtessamiLaTorrePeled01,FaymonvilleZimmermann17,Koymans90,KupfermanPitermanVardi09,Zimmermann18}.
The most basic one is $\prompt$~\cite{KupfermanPitermanVardi09}, which adds the prompt-eventually operator~$\Diamondprompt$ to $\ltl$.
To retain decidability~\cite{AlurEtessamiLaTorrePeled01}, one has to give up negation and implication when adding the prompt-eventually operator. 
This is no restriction for $\ltl$, as every formula has an equivalent one in negation normal form. 

The semantics is now defined with an additional parameter~$k$, which bounds the scope of $\Diamondprompt$: $\Box (q \rightarrow \Diamondprompt p)$ requires every request~$q$ to be answered within $k$ steps, when evaluated with respect to $k$.
The resulting logic is a quantitative one: either one quantifies the parameter~$k$ existentially and obtains a boundedness problem, e.g., \myquot{does there exist a bound~$k$ such that every request can be answered within $k$ steps}, or one even aims to determine the optimal bound~$k$.
Again, $\prompt$ retains the desirable properties of $\ltl$, i.e., the exponential compilation property as well as intuitive syntax and semantics. Furthermore, $\prompt$ captures the technical core of the alternatives cited above, e.g., decision problems for the more general logic~$\pltl$~\cite{AlurEtessamiLaTorrePeled01} can be reduced to those for $\prompt$. For these reasons, we study $\prompt$ in this work.

Finally, a third line of extensions of $\ltl$ is concerned with the concept of robustness, which is much harder to formalize. This is reflected by a multitude of incomparable notions of robustness in verification~\cite{DBLP:conf/cav/BloemCGHJ10,DallalNeiderTabuada16,DBLP:conf/formats/DonzeM10,DBLP:conf/acsd/DoyenHLN10,DBLP:journals/tcs/FainekosP09,DBLP:conf/rtss/MajumdarS09,NeiderWeinertZimmermann18,DBLP:journals/tac/TabuadaCRM14,TabuadaNeider16}.
Here, we are interested in robust $\ltl$ ($\text{r}\ltl$)~\cite{TabuadaNeider16}, which equips $\ltl$ with a five-valued semantics that captures different degrees of violations of universal specifications.
As an example, consider the specification \myquot{if property $\varphi$ always holds true, then property $\psi$ also always holds true}, which is expressed in $\ltl$ as $\Box \varphi \rightarrow \Box \psi$ and is typical for systems that have to interact with an antagonistic environment.
In classical semantics, the whole formula is satisfied as soon as the assumption~$\varphi$ is violated once, even if the guarantee~$\psi$ is violated as well.
By contrast, the semantics of robust $\ltl$ ensures that the degree of the violation of $\Box \psi$ is always proportional to the degree of the violation of $\Box \varphi$.
To this end, the degree of a violation of a property $\Box \varphi$ is expressed by five different truth values: either $\varphi$ always holds, or $\varphi$ is violated only finitely often, violated infinitely often, violated almost always, or violated always.
Again, robust $\ltl$ has the exponential compilation property and an intuitive syntax (though its semantics is more intricate). In this work, we consider robust $\ltl$, as it is the first logic that intrinsically captures the notion of robustness in $\ltl$. In particular, formulas of robust $\ltl$ are evaluated over traces with Boolean truth values for atomic propositions and do not require non-Boolean assignments, which are often hard to determine in real-life applications.

We consider here the fragment~$\rltl$ of $\text{r}\ltl$ that only contains the temporal operators eventually and always, as it already captures the most interesting aspects of robustness. 
	
\subsection{Our Contributions}
\label{sec:contributions}

In this paper, which is an extended version of an earlier conference paper~\cite{DBLP:journals/corr/abs-1909-08538}, we develop logics that address more than one shortcoming of $\ltl$ at a time. See Figure~\ref{fig:logics} for an overview. In comparison to the conference version, we have added Section~\ref{sec-prel} introducing the logics we combine, all proofs omitted due to space restrictions, and Section~\ref{subsec-fragment}.

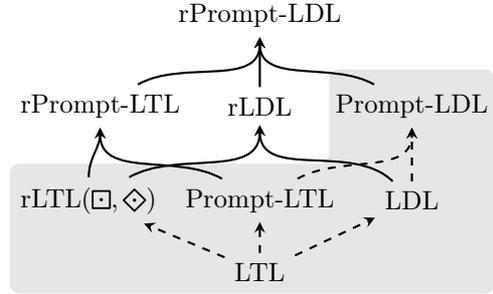
\begin{figure} \centering
	\begin{tikzpicture}[thick]
		
		\draw[rounded corners,fill=black!10,draw=white]
			(-5.3, .3) |- (2,2.5) |- (5.3,4.5) |- (0,.3) -- cycle;
	
		\node[align=center] (ltl) at (0,0.8) {\ltl};
		
		\begin{scope}[shift={(0,2)}]
			\node[align=center] (rltl) at (-4,0) {\rltl};
			\node[align=center] (promptltl) at (0,0) {\prompt};
			\node[align=center] (ldl) at (4,0) {\ldl};
		\end{scope}
		
		\begin{scope}[shift={(0,4)}]
			\node[align=center] (rpromptltl) at (-4,0) {\rprompt};
			\node[align=center] (rldl) at (0,0) {\rldl};
			\node[align=center] (promptldl) at (4,0) {\promptldl};
		\end{scope}
		
		\node[align=center] (rpromptldl) at (0,5.5) {\rpromptldl};
		
		\path[-stealth,]
			(ltl)
				edge[dashed] (rltl)
				edge[dashed] (promptltl)
				edge[dashed] (ldl)
			(rltl)
				edge[out=90,in=-90] (rpromptltl)
				edge[out=30,in=-90] (rldl)
			(promptltl)
				edge[out=150,in=-90] (rpromptltl)
				edge[dashed,out=30,in=-90] (promptldl)
			(ldl)
				edge[out=135,in=-90] (rldl)
				edge[dashed] (promptldl)
			(rpromptltl)
				edge[out=30,in=-90] (rpromptldl)
			(rldl)
				edge (rpromptldl)
			(promptldl)
				edge[out=150,in=-90] (rpromptldl);
	\end{tikzpicture}
	\caption{The logics studied in this work. Existing logics and influences are marked gray with dashed arrows.}
	\label{fig:logics}
\end{figure}

	In Section~\ref{sec-rprompt}, we ``robustify'' $\prompt$. More precisely, we introduce a novel logic, named $\rprompt$, by extending the five-valued semantics from robust $\ltl$ to $\prompt$. Our main result here shows that $\rprompt$ retains the exponential compilation property. 
Then, in Section~\ref{sec-rldl}, we ``robustify'' $\ldl$: we introduce a novel logic, named $\rldl$, by lifting the five-valued semantics of robust $\ltl$ to $\ldl$. Our main result shows that $\rldl$ also retains the exponential compilation property. 
Hence, one can indeed combine any two of the three extensions of $\ltl$ while still preserving the desirable algorithmic properties of $\ltl$. In particular, let us stress again that all highly sophisticated algorithmic backends developed for $\ltl$ are applicable to these novel logics as well, e.g., we show that the verification problem and the synthesis problem for each of these logics is solvable without an (asymptotic) increase in complexity.

Tabuada and Neider gave two proofs showing that robust $\ltl$ has the exponential compilation property. The first one presented a translation of robust $\ltl$ into equivalent Büchi automata of exponential size while the second one is based on a polynomial translation of robust $\ltl$ into (standard) $\ltl$, which is known to be translatable into equivalent Büchi automata of exponential size. We refer to those two approaches as the \emph{direct} approach and the \emph{reduction-based} approach. To obtain our results mentioned above, we need to generalize both. 
To prove the exponential compilation property for $\rldl$, we generalize the direct approach by exhibiting a direct translation of $\rldl$ into Büchi automata via alternating automata. In contrast, to prove the exponential compilation property for $\rprompt$, we present a generalization of the reduction-based approach translating $\rprompt$ into equivalent $\prompt$ formulas of linear size, which have the exponential compilation property.  
	
	 Finally, in Section~\ref{sec-towardsrpldl}, we discuss the combination of all three aspects. Recall that we present a direct translation to automata for $\rldl$ and a reduction-based one for $\rprompt$. For reasons we discuss in Section~\ref{sec-towardsrpldl}, it is challenging to develop a reduction from $\rldl$ to $\ldl$ or a direct translation for $\rprompt$ that witness the exponential compilation property. Hence, both approaches seem inadequate to deal with the combination of all three extensions. Ultimately, we leave the question of whether the logic combining all three aspects has the exponential compilation property for future work.

\section{Preliminaries}
\label{sec-prel}
We denote the non-negative integers by~$\nats$, the set~$\set{0,1}$ of Boolean truth values by~$\bool$, and the power set of~$S$ by~$\pow{S}$.
By convention, we have $\min \emptyset = 1$ and $\max \emptyset = 0$ when the operators range over subsets of $\bool$. Following Tabuada and Neider~\cite{TabuadaNeider16}, the set of truth values for robust semantics is $\bool_4 = \set{0000, 0001, 0011, 0111, 1111}$, which are ordered by $0000 \prec 0001 \prec 0011 \prec 0111 \prec 1111$.
We write $\preceq$ for the non-strict variant of $\prec$ and define $\min \emptyset = 1111$ and $\max \emptyset = 0000$ when the operators range over subsets of $\bool_4$.

Throughout this work, we fix a finite non-empty set~$P$ of atomic propositions and define the shorthands $\ttrue = p \vee \neg p$ and $\ffalse = p \wedge \neg p$ for some atomic proposition~$p$.
For a set $A \subseteq P$ and a propositional formula~$\phi$ over $P$, we write $A \models \phi$ if the variable valuation mapping elements in $A$ to $1$ and elements in $P\setminus A$ to $0$ satisfies $\phi$.
A trace (over $P$) is an infinite sequence~$w \in (\pow{P})^\omega$. Given a trace~$w = w(0) w(1) w(2) \cdots$ and a position~$j \in \nats$, we define $\pref{w}{j} = w(0) \cdots w(j-1)$ and $\suff{w}{j} = w(j) w(j+1) w(j+2) \cdots$, i.e., $\pref{w}{j}$ is the prefix of length~$j$ of $w$ and $\suff{w}{j}$ the remaining suffix. In particular, $\pref{w}{0}$ is empty  and $\suff{w}{0}$ is $w$.

In the remainder of this section, we introduce the logics we generalize in this work, namely Robust Linear Temporal Logic ($\rltl$)~\cite{TabuadaNeider16}, Linear Dynamic Logic ($\ldl$)~\cite{Vardi11}, and Prompt Linear Temporal Logic ($\prompt$)~\cite{KupfermanPitermanVardi09}.
Also, we introduce Prompt Linear Dynamic Logic ($\promptldl$)~\cite{FaymonvilleZimmermann17}, which we use in some proofs. 
See Table~\ref{tab:logics} for an overview.
References for the results mentioned in the table are given in the following subsections introducing the logics.

\begin{table}[h]\centering
	\begin{tabular}{llll} \toprule
		\tabcenter{\multirow{2}{*}{Logic}} & \tabcenter{\multirow{2}{*}{Operators}} & \multicolumn{2}{c}{Complexity} \\
		 &  & \tabcenter{Model Checking} & \tabcenter{Synthesis} \\ \midrule
		\rltl & $\neg, \land, \lor, \implies, \Boxdot, \Diamonddot$ & \np-hard/in \pspace &  \twoexp-compl. \\
		\ldl & $\neg, \land, \lor, \implies, \ddiamond{r}, \bbox{r}$ & \pspace-compl. & \twoexp-compl. \\
		\prompt & $\land, \lor, \X, \U, \R, \Diamondprompt$ & \pspace-compl. & \twoexp-compl. \\
		\promptldl & $\land, \lor, \ddiamond{r}, \bbox{r}, \promptddiamond{r}$ & \pspace-compl. & \twoexp-compl. \\ \bottomrule
	\end{tabular}
	\caption{The logics our work is based on.}
	\label{tab:logics}
\end{table}

 We define the semantics of all these logics by evaluation functions~$V$ mapping a trace, a formula, and a bound (in the case of a quantitative logic) to a truth value. This is prudent for robust semantics, as it allows us to introduce useful notation naturally. For the sake of consistency, we also use this approach for the other logics, whose semantics is typically defined via satisfaction relations. In particular, $\rltleval$, $\ldleval$, and $\prompteval$ denote the evaluation functions of $\rltl$, $\ldl$, and $\prompt$, respectively. Nevertheless, our definitions here are equivalent to the original definitions.

\subsection{Robust Linear Temporal Logic}
\label{subsec-briefrltl}
The main impetus behind the introduction of robust $\ltl$ was the need to capture the concept of robustness in temporal logics.
As a first motivating example consider the $\ltl$ formula~$\Box p$, stating that $p$ holds at every position.
Consequently, the formula is violated if there is a single position where $p$ does not hold.
However, this is a very \myquot{mild} violation of the property and there are much more \myquot{severe} violations.
As exhibited by Tabuada and Neider, there are four canonical \emph{degrees} of violation of $\Box p$:
(i) $p$ is violated at finitely many positions,
(ii) $p$ is violated at infinitely many positions,
(iii) $p$ is violated at all but finitely many positions, and
(iv) $p$ is violated at all positions.
These first three degrees are captured by the $\ltl$ formulas $\Diamond\Box p$, $\Box \Diamond p$, and $\Diamond p$, which are all weakenings of $\Box p$.
All five possibilities, satisfaction and four degrees of violation, are captured in robust $\ltl$ by the truth values 
\[ 1111 \succ 0111 \succ 0011 \succ 0001 \succ 0000 \]
introduced above.
By design, the formula~$\Boxdot p$ of robust $\ltl$\footnote{Following the precedent for robust $\ltl$, we use dots to distinguish operators of robust logics from those of classical logics throughout the paper.} has
\begin{itemize}
	\item truth value $1111$ on all traces where $p$ holds at all positions,
	\item truth value $0111$ on all traces where $p$ holds at all but finitely many positions,
	\item truth value $0011$ on all traces where $p$ holds at infinitely many positions and does not hold at infinitely many positions,
	\item truth value $0001$ on all traces where $p$ only holds at finitely many positions, and 
	\item truth value $0000$ on all traces where $p$ holds at no position.
\end{itemize}

As a further example, consider the formula~$\Boxdot p \rightarrow \Boxdot q$. For this formula, the robust semantics captures the intuition described in the introduction: the implication is satisfied (i.e., has truth value $1111$), if  the degree of violation of the property~\myquot{always q} is at most the degree of violation  of the property~\myquot{always p}. Thus, if $p$ is violated finitely often, then $q$ may also be violated finitely often (but not infinitely often) while still satisfying the implication.

Formally, the formulas of $\rltl$ are given by the grammar
\[\varphi \cceq p \mid \neg \varphi \mid \varphi \wedge \varphi \mid \varphi \vee \varphi \mid \varphi \implies \varphi \mid \Diamonddot \varphi 
  \mid \Boxdot \varphi,
\]
where $p$ ranges over the atomic propositions in $P$.
Note that the syntax of $\rltl$ explicitly contains implication and conjunction; due to the many-valued semantics of $\rltl$ introduced below, these two operators cannot be recovered from disjunction and negation.
We define the size~$\size{\varphi}$ of a formula as the number of distinct subformulas of $\varphi$. 

Intuitively, conjunction and disjunction are defined as usual using minimization and maximization relying on the order of truth values indicated above while negation is based on the intuition that $1111$ represents satisfaction and all other truth values represent degrees of violation.
Hence, a negation~$\neg \varphi$ is satisfied (i.e., has truth value $1111$), if $\varphi$ has truth value less than $1111$, and it is violated (i.e., has truth value $0000$) if $\varphi$ has truth value~$1111$.
Finally, the semantics of the eventually operator is defined as usual, i.e., the truth value of $\Diamonddot \varphi$ on $w$ is the maximal truth value that is assumed by $\varphi$ on some suffix of $w$.

This intuition is formalized in the evaluation function~$\rltleval$, which maps a trace~$w \in (\pow{P})^\omega$ and an $\rltl$ formula $\varphi$ to a truth value~$\rltleval(w,\varphi)$ in $\bool_4 $ and which is defined as follows~\cite{TabuadaNeider16}:

\label{rltlsemdef}
\begin{itemize}
\item $\rltleval(w,p) =\begin{cases}
	1111 &\text{if $p \in w(0)$,}\\
	0000 &\text{if $p \notin w(0)$,}
\end{cases}$ \quad\quad $\text{\normalfont\bfseries --} \,\,\,\rltleval(w,\neg \varphi) = \begin{cases} 
 	1111 &\text{if $\rltleval(w,\varphi) \neq 1111$,}\\ 
 	0000 &\text{if $\rltleval(w,\varphi) = 1111$,} 
 \end{cases}$	

\item
$\rltleval(w, \varphi_0 \wedge \varphi_1) = \min\set{\rltleval(w, \varphi_0), \rltleval(w,\varphi_1) }$, 
\item $\rltleval(w, \varphi_0 \vee \varphi_1) = \max\set{\rltleval(w, \varphi_0), \rltleval(w,\varphi_1) }$, 

\item
$\rltleval(w, \varphi_0 \implies \varphi_1) = \begin{cases}
 	1111 &\text{if $\rltleval(w,\varphi_0) \preceq \rltleval(w, \varphi_1)$,}\\
 	\rltleval(w, \varphi_1) &\text{if $\rltleval(w,\varphi_0) \succ \rltleval(w, \varphi_1)$,}
 \end{cases}$
 \item $\rltleval(w, \Diamonddot \varphi ) = b_1 b_2 b_3 b_4$ with $b_i = \max_{j \ge 0} \rltleval_i(\suff{w}{j}, \varphi )$, and 
 \item $\rltleval(w, \Boxdot \varphi) = b_1 b_2 b_3 b_4 $ with 
 	\begin{itemize}
 		\item $b_1 = \min_{j \ge 0} \rltleval_1(\suff{w}{j}, \varphi )$,
 		\item $b_2 = \max_{j' \in \nats} \min_{j \ge j'}\rltleval_2(\suff{w}{j}, \varphi )$,
 		\item $b_3 = \min_{j' \in \nats} \max_{j \ge j'}\rltleval_3(\suff{w}{j}, \varphi )$, and
 		\item $b_4 = \max_{j \ge 0} \rltleval_4(\suff{w}{j}, \varphi )$.
 	\end{itemize}
\end{itemize}
Here, $\rltleval_i(w, \varphi)$ denotes the projection of $\rltleval(w, \varphi)$ to its $i$-th component, i.e., we have 
\[\rltleval(w, \varphi) = \rltleval_1(w, \varphi)\rltleval_2(w, \varphi)\rltleval_3(w, \varphi)\rltleval_4(w, \varphi).\]

The first bit of the semantics captures the classical semantics of $\ltl$, i.e., we have $\rltleval_1(w,\varphi) = 1$ if and only if $w$ satisfies $\varphi$ classically.
Intuitively, the next three bits are obtained by weakening the semantics of the subformulas of the form~$\Boxdot \varphi$:
instead of (classically) requiring every position to satisfy $\varphi$, the second bit is one if almost all positions satisfy $\varphi$ (i.e., $\Diamond \Box \varphi$ holds), the third bit is one if infinitely many positions satisfy $\varphi$ (i.e., $\Box \Diamond \varphi$ holds), and the fourth bit is one if at least one position satisfies $\varphi$ (i.e., $\Diamond \varphi$ holds).
Note that the semantics of negation and implication are also non-classical and break the intuition given above, e.g., for formulas of the form~$\neg \Boxdot\varphi$.
For a full motivation and explanation of the semantics, we refer to the original work introducing $\rltl$~\cite{TabuadaNeider16} as well as follow-up work~\cite{AnevlavisPNT18,AnevlavisNPT19,monitoring}.

Verification problems with $\rltl$ specifications have a threshold $\tval \in \bool_4$ as an additional input and ask every trace to evaluate to at least~$\tval$.
Tabuada and Neider showed that the model checking problem with $\rltl$ specifications can be solved in polynomial space\footnote{Tabuada and Neider only showed that their algorithm runs in exponential time, but using standard on-the-fly techniques~\cite{VardiWolper94} it can also be implemented in polynomial space.} and that infinite games with $\rltl$ specifications can be solved in doubly-exponential time~\cite{TabuadaNeider16}.
The lower bounds presented in Table~\ref{tab:logics} are derived from the special case of $\ltl(\Box,\Diamond)$~\cite{gameswithboxes,AlurLatorre04}, which is a fragment of $\rltl$.

When Tabuada and Neider introduced robust $\ltl$, they first considered the fragment~$\rltl$ without the next, until, and release operators, which already captures the most interesting problems arising from adding robustness.
Then, they added the missing operators and studied the full logic~\cite{TabuadaNeider16}.
Here, we follow their approach and only consider generalizations of the fragment~$\rltl$ that only contains the temporal operators~$\Boxdot$ and~$\Diamonddot$.
We comment on the effect of this restriction when defining the combinations of logics. 

\subsection{Linear Dynamic Logic}
\label{subsec-briefldl}
The logic $\ldl$ has only two temporal operators, $\ddiamond{r}$ and $\bbox{r}$, which can be understood as guarded variants of the classical eventually and always operators from $\ltl$, respectively.
Both are guarded by regular expressions~$r$ over the atomic propositions that may contain tests, which are again $\ldl$ formulas.
These two operators together with Boolean connectives capture the full expressive power of the $\omega$-regular expressions, i.e., $\ldl$ exceeds the expressiveness of $\ltl$. 

The formulas of $\ldl$ are given by the grammar
\begin{align*}
\varphi &\cceq p \mid \neg \varphi \mid \varphi \wedge \varphi \mid \varphi \vee \varphi \mid \varphi \implies \varphi \mid \ddiamond{r} \varphi 
  \mid \bbox{r} \varphi \\
    r & \cceq \phi \mid \varphi? \mid r+r \mid r \conc r \mid r^*
\end{align*}
where $p$ ranges over the atomic propositions in $P$ and where $\phi$ ranges over arbitrary propositional formulas over $P$.
The regular expressions have two types of atoms: propositional formulas~$\phi$ over the atomic propositions and tests~$\varphi?$, where $\varphi$ is again an $\ldl$ formula.
As we will see later, the semantics of these two kinds of atoms differ significantly.
We refer to formulas of the form~$\ddiamond{r}\varphi$ and $\bbox{r}\varphi$ as diamond formulas and box formulas, respectively.
In both cases, we call $r$ the guard of the operator. 

We denote the set of subformulas of $\varphi$ by $\cl(\varphi)$.
Guards are not subformulas, but the formulas appearing in the tests are, e.g., we have 
	\[ \cl(\ddiamond{\halfthinspace p?\conc q}p') = \set{\halfthinspace p, p', \ddiamond{\halfthinspace p?\conc q}p'}. \]
The size~$\card{\varphi}$ of $\varphi$ is the sum of $\card{\cl(\varphi)}$ and the sum of the lengths of the guards appearing in $\varphi$ (measured in the number of operators),  taking each occurrence of a guard in the syntax tree (not the syntax DAG like for subformulas) into account.

Formally, a formula~$\ddiamond{r}\varphi$ is satisfied by a trace~$w$, if there is some~$j$ such that the prefix~$\pref{w}{j}$ matches the regular expression~$r$ and the corresponding suffix~$\suff{w}{j}$ satisfies $\varphi$.
Dually, a formula~$\bbox{r}\varphi$ is satisfied by a trace~$w$ if for every~$j$ with $\pref{w}{j}$ matching $r$, $\suff{w}{j}$ satisfies $\varphi$.
Thus, while the classical eventually and always operator range over all positions, the operators of $\ldl$ range only over those positions whose induced prefix matches the guard of the operator.

Analogously to the definition for $\rltl$, and slightly non-standard, we define the semantics of $\ldl$ by specifying an evaluation function~$\ldleval$ mapping a trace~$w$ and a formula~$\varphi$ to a truth value from~$\bool$ denoting whether $w$ satisfies $\varphi$ or not.
Also, our presentation of the semantics here is slightly cumbersome, in particular the definition for the implication, again to align with the definition for $\rltl$.
Nevertheless, our definition below is equivalent to the classical semantics of $\ldl$ (cf.~\cite{Vardi11,GiacomoVardi13,FaymonvilleZimmermann17}) via a satisfaction relation~$\models$ in the following sense:
we have $\ldleval(w, \varphi) = 1$ if and only if $w \models \varphi$.
\begin{itemize}
	\item $\ldleval(w,p) = \begin{cases}
		1 &\text{if $p \in w(0)$,} \\
		0 &\text{if $p \notin w(0)$,}
	\end{cases}$ \quad\quad $\text{\normalfont\bfseries --} \,\,\,\ldleval(w,\neg\varphi) = \begin{cases}
		1 &\text{if $\ldleval(w, \varphi)=0$,}\\
		0 &\text{if $\ldleval(w, \varphi)=1$,}
	\end{cases}$
	\item $\ldleval(w,\varphi_0 \wedge \varphi_1) = \min\set{\ldleval(w, \varphi_0), \ldleval(w, \varphi_1)}$,
	\item $\ldleval(w,\varphi_0 \vee \varphi_1) = \max\set{\ldleval(w, \varphi_0), \ldleval(w, \varphi_1) }$,
	\item $\ldleval(w,\varphi_0 \implies \varphi_1) = \begin{cases}
		1 &\text{if $\ldleval(w, \varphi_0) \le \ldleval(w, \varphi_1)$,}\\
		\ldleval(w, \varphi_1) &\text{if $\ldleval(w, \varphi_0) > \ldleval(w, \varphi_1)$,}
	\end{cases}$
	\item $\ldleval(w,\ddiamond{r}\varphi) = \max_{j \in \Rexp(w,r)} \ldleval(\suff{w}{j}, \varphi)$, and
	\item $\ldleval(w,\bbox{r}\varphi) = \min_{j \in \Rexp(w,r)} \ldleval(\suff{w}{j}, \varphi)$.
\end{itemize}

Here, the match set~$\Rexp(w,r) \subseteq \nats$ contains all positions~$j$ such that $\pref{w}{j}$ matches $r$. Recall that $\pref{w}{j}$ denotes the prefix of $w$ of length~$j$, i.e., $\pref{w}{j} =w(0)\cdots w(j-1)$. In particular, $\pref{w}{0}$ is empty and $\pref{w}{\infty}$ is~$w$. Now, $\Rexp(w,r)$ is defined inductively as follows:
\begin{itemize}

\item $\Rexp(w,\phi) = \set{1}$ if $w(0) \models \phi$ (i.e., we evaluate $\phi$ in standard Boolean semantics) and $\Rexp(w,\phi) =  \emptyset$ otherwise, for propositional~$\phi$.

\item $\Rexp(w,\varphi?) = \set{0}$ if $\ldleval(w, \varphi)=1$ and $\Rexp(w,\varphi?) =  \emptyset$ otherwise.

\item $\Rexp(w,r_0 + r_1) = \Rexp(w,r_0) \cup \Rexp(w,r_1)$.

\item $\Rexp(w,r_0 \conc r_1) = \set{j_0 + j_1 \mid j_0, j_1 \ge 0 \text{ and } j_0 \in \Rexp(w,r_0) \text{ and } j_1 \in \Rexp(\suff{w}{j_0}, r_1)}$. Thus, for $j$ to be in $\Rexp(w,r_0 \conc r_1)$, it has to be the sum of natural numbers $j_0$ and $j_1$ such that $w$ has a prefix of length $j_0$ that matches $r_0$ and $\suff{w}{j_0}$ has a prefix of length~$j_1$ that matches $r_1$. 

\item $\Rexp(w,r^*) = \set{0} \cup \set{j_1 + \cdots + j_\ell \mid 0 \le j_{\ell'} \in \Rexp(\suff{w}{j_1 + \cdots + j_{\ell'-1}},r) \text{ for all } \ell'\in\set{1, \ldots, \ell}}$, where we use $j_1 + \cdots + j_{0} = 0$.
Thus, for $j$ to be in $\Rexp(w,r^*)$, it has to be expressible as $j = j_1 +\cdots + j_\ell$ with non-negative~$j_{\ell'}$ such that the prefix of $w$ of length~$j_1$ matches $r$, the prefix of length~$j_2$ of $\suff{w}{j_1}$ matches $r$, and in general, the prefix of length~$j_{\ell'}$ of $\suff{w}{j_1+\cdots +j_{\ell'-1}}$ matches $r$, for every $\ell' \in \set{1, \ldots, \ell}$.

\end{itemize}

Due to tests, membership of some $j$ in $\Rexp(w,r)$ does, in general, not only depend on the prefix~$\pref{w}{j}$, but on the complete trace~$w$.
Also, the semantics of the propositional atom~$\phi$ differ from the semantics of the test~$\phi?$: the former consumes an input letter, while tests do not.
Hence, the guards of $\ldl$ feature both kinds of atoms. 

Fix some trace~$w$, a formula~$\varphi$, and a guard~$r$.
We say that a position~$j$ of $w$ is an $r$-match if $j \in \Rexp(w,r)$.
Further, $j$ is a $\varphi$-satisfying position of $w$ if $\ldleval(\suff{w}{j}, \varphi)=1$.
Thus, the formula~$\ddiamond{r}\varphi$ requires some $\varphi$-satisfying $r$-match to exist.
Dually, $\bbox{r}\varphi$ requires every $r$-match of $w$ to be $\varphi$-satisfying (in particular, this is the case if there is no $r$-match in $w$).
Thus, the diamond operator generalizes the eventually operator and the box operator generalizes the always operator, which are the respective special cases for a trivial guard that matches every position, e.g., $\ttrue^*$.
Similarly, the next, until, and release operator of $\ltl$ can be expressed in $\ldl$ (the latter two use tests in the guards).
Thus, $\ltl$ is a fragment of $\ldl$.
Furthermore, it is known that $\ldl$ captures the $\omega$-regular languages~\cite{Vardi11}.

Model checking against $\ldl$ specifications is $\pspace$-complete and solving $\ldl$ games is $\twoexp$-complete~\cite{FaymonvilleZimmermann17,Vardi11}.

\subsection{Prompt Linear Temporal Logic}
\label{subsec-briefprompt}
To express timing constraints, the logic $\prompt$ adds the prompt-eventually operator~$\Diamondprompt$ to $\ltl$.
Intuitively, the new operator requires its argument to be satisfied within a bounded number of steps. 

The formulas of~$\prompt$ are given by the grammar
\[\varphi \cceq p \mid \neg p \mid \varphi \wedge \varphi \mid \varphi \vee \varphi \mid \X \varphi \mid \varphi \U \varphi  \mid \varphi \R \varphi 
 \mid \Diamondprompt \varphi
\]
where $p$ ranges over the atomic propositions in $P$. The size~$\size{\varphi}$ of a formula~$\varphi$ is defined as the number of its distinct subformulas.

In \prompt, formulas are in negation normal form and implication is disallowed.
Both requirements are necessary to preserve monotonicity of the prompt-eventually~$\Diamondprompt$ with respect to the parameter~$k$ bounding it (Alur et al.~\cite{AlurEtessamiLaTorrePeled01} provide a detailed discussion).
We define the shorthands~$\Diamond \varphi = \ttrue \U \varphi$ and $\Box \varphi = \ffalse \R \varphi$. 

Again, we define the semantics by an evaluation function~$\prompteval$ mapping a trace~$w \in (\pow{P})^\omega$, a bound~$k \in \nats$ for the prompt operators, and a formula~$\varphi$ to a truth value in $\bool$ (which is again equivalent to the standard definition).
This function is defined as usual for all Boolean and standard temporal operators (ignoring the bound~$k$), while a formula~$\Diamondprompt \varphi$ is satisfied with respect to the bound~$k$ if $\varphi$ holds within the next $k$ steps, i.e., the prompt-eventually behaves like the classical eventually with a bounded scope~\cite{KupfermanPitermanVardi09}:

\begin{itemize}
	\item $\prompteval(w , k , p) = \begin{cases}
 1 &\text{if $p \in w(0)$,}\\
 0 &\text{if $p \notin w(0)$,}	
 \end{cases}$
\quad\quad $\text{\normalfont\bfseries --} \,\,\, \prompteval(w , k , \neg p) = \begin{cases}
 1 &\text{if $p \notin w(0)$,}\\	
 0 &\text{if $p \in w(0)$,}
 \end{cases}
$
	\item $\prompteval(w , k , \varphi_0 \wedge \varphi_1) = \min \set{\prompteval(w, k , \varphi_0), \prompteval(w, k , \varphi_1) }$,
	
	\item $\prompteval(w , k , \varphi_0 \vee \varphi_1) = \max \set{\prompteval(w, k , \varphi_0), \prompteval(w, k , \varphi_1) }$,
	
	\item $\prompteval(w, k, \X \varphi) = \prompteval(\suff{w}{1}, k, \varphi)$,
	
	\item $\prompteval(w , k , \varphi_0 \U \varphi_1) = \max_{j \in \nats} \min \set{ \prompteval(\suff{w}{j}, k, \varphi_1 )  ,  \min_{0 \le j' < j} \prompteval(\suff{w}{j'}), k ,\varphi_0   }$,
	
	\item $\prompteval(w , k , \varphi_0 \R \varphi_1) = \min_{j \in \nats} \max \set{ \prompteval(\suff{w}{j}, k, \varphi_1 )  ,  \max_{0 \le j' < j} \prompteval(\suff{w}{j'}), k ,\varphi_0   }$,
	
	\item $\prompteval(w , k ,  \Diamondprompt \varphi) = \max_{0 \le j \le k} \prompteval(\suff{w}{j} , k , \varphi)$.

\end{itemize}

In verification problems for $\prompt$, the bound~$k$ on the prompt-eventuallies is existentially quantified.
Kupferman et al.\ proved that $\prompt$ model checking is $\pspace$-complete and that solving games with $\prompt$ winning conditions is $\twoexp$-complete~\cite{KupfermanPitermanVardi09}.\footnote{Instead of games, they actually considered the related framework of realizability, an abstract type of game without underlying graph. However, realizability and graph-based games are interreducible (also, see~\cite{Zimmermann13}).}

\subsection{Prompt Linear Dynamic Logic}
\label{subsec-promptldl}
In our proofs, we also use $\promptldl$, which can be seen as a combination of $\ldl$ and $\prompt$. This logic has been studied by Faymonville and Zimmermann~\cite{FaymonvilleZimmermann17} as a fragment of Parametric $\ltl$~\cite{AlurEtessamiLaTorrePeled01} (although  the logic has never been explicitly named). 

The formulas of $\promptldl$ are given by the grammar
\begin{align*}
\varphi &\cceq p \mid \neg p \mid \varphi \wedge \varphi \mid \varphi \vee \varphi \mid \ddiamond{r} \varphi 
  \mid \bbox{r} \varphi \mid \promptddiamond{r} \varphi \\
    r & \cceq \phi \mid \varphi? \mid r+r \mid r \conc r \mid r^*
\end{align*}
where $p$ again ranges over the atomic propositions in $P$ and $\phi$ ranges over propositional formulas over $P$. As in $\prompt$, we have to disallow arbitrary negations and implications. The size of a formula is defined as for $\ldl$. 

Furthermore, the semantics of $\promptldl$ is obtained by combining the one of $\ldl$ and the one of $\prompt$: Again, we define an evaluation function~$\promptldleval$ mapping a trace~$w$, a bound~$k$, and a formula~$\varphi$ to a truth value. 
\begin{itemize}
	\item $\promptldleval(w,k,p) = \begin{cases}
		1 &\text{if $p \in w(0)$,}\\
		0 &\text{if $p \notin w(0)$,}
	\end{cases}$\quad\quad $\text{\normalfont\bfseries --} \,\,\,\promptldleval(w,k,\neg p) = \begin{cases}
		1 &\text{if $p \notin w(0)$,}\\
		0 &\text{if $p \in w(0)$,}
	\end{cases}$
	\item $\promptldleval(w,k,\varphi_0 \wedge \varphi_1) = \min\set{\promptldleval(w,k, \varphi_0), \promptldleval(w,k, \varphi_1)}$,
	\item $\promptldleval(w,k,\varphi_0 \vee \varphi_1) = \max\set{\promptldleval(w,k, \varphi_0), \promptldleval(w,k, \varphi_1) }$,
	\item $\promptldleval(w,k,\ddiamond{r}\varphi) = \max_{j \in \Rexp(w,k,r)} \promptldleval(\suff{w}{j},k, \varphi)$, 
	\item $\promptldleval(w,k,\bbox{r}\varphi) = \min_{j \in \Rexp(w,k,r)} \promptldleval(\suff{w}{j},k, \varphi)$, and
	\item $\promptldleval(w,k,\promptddiamond{r}\varphi) = \max_{j \in \Rexp(w,k,r) \cap \set{0, \ldots, k}} \promptldleval(\suff{w}{j},k, \varphi)$.
\end{itemize}

Here, $\Rexp(w,k,r)$ is defined as $\Rexp(w,r)$, but propagates the bound~$k$ to evaluate tests. Hence, we define  $\Rexp(w,k,\varphi?) = \set{0}$ if $\ldleval(w,k, \varphi)=1$ and $\Rexp(w,k,\varphi?) =  \emptyset$ otherwise. All other cases are defined as for $\ldl$, but propagate the bound~$k$.

$\promptldl$ as defined here is a syntactic fragment of Parametric $\ldl$~\cite{FaymonvilleZimmermann17} and subsumes $\ltl$. Hence, its model checking problem is $\pspace$-complete and the synthesis problem is $\twoexp$-complete. Here, the bound~$k$ is again uniformly existentially quantified in verification problems.

\section{Robust and Prompt Linear Temporal Logic}
\label{sec-rprompt}
We begin our treatment of combinations of the three basic logics by introducing robust semantics for $\prompt$, obtaining the logic~$\rprompt$. To this end, we add the prompt-eventually operator to $\rltl$ while disallowing implications and restricting negation to retain decidability (cf.~\cite{AlurEtessamiLaTorrePeled01}). The formulas of $\rprompt$ are given by
\[\varphi \cceq p \mid \neg p \mid \varphi \wedge \varphi \mid \varphi \vee \varphi \mid \Diamonddot \varphi 
  \mid \Boxdot \varphi\mid \Diamondpromptdot \varphi,
\]%
where $p$ ranges over the set $P$ of atomic propositions. 
The size~$\size{\varphi}$ of a formula~$\varphi$ is the number of its distinct subformulas. 

The semantics of $\rprompt$ is given by an evaluation function~$\rprompteval$ mapping a trace~$w$, a bound~$k$ for the prompt-eventuallies, and a formula~$\varphi$ to a truth value in $\bool_4$.  To simplify our notation, we write $\rprompteval_i(w,k,\varphi)$ for $i \in \set{1,2,3,4}$ to denote the $i$-th bit of $\rprompteval(w,k,\varphi)$, i.e., 
\[
\rprompteval(w,k,\varphi) = \rprompteval_1(w,k,\varphi)\rprompteval_2(w,k,\varphi)\rprompteval_3(w,k,\varphi)\rprompteval_4(w,k,\varphi).
\]
The semantics of Boolean connectives as well as of the eventually and always operators is defined as for robust $\ltl$. The motivation behind these definitions is carefully and convincingly discussed by Tabuada and Neider~\cite{TabuadaNeider16}. The semantics of the prompt-eventually operator bounds its scope to the next $k$ positions as in classical $\prompt$~\cite{KupfermanPitermanVardi09}.
\begin{itemize}
\item $\rldleval(w,k,p) =\begin{cases}
	1111 &\text{if $p \in w(0)$,}\\
	0000 &\text{if $p \notin w(0)$},
\end{cases}$ \quad\quad $\text{\normalfont\bfseries --} \,\,\, \rldleval(w,k,\neg p) =\begin{cases}
	1111 &\text{if $p \notin w(0)$,}\\
	0000 &\text{if $p \in w(0)$},
 \end{cases}$

\item
$\rldleval(w, k,\varphi_0 \wedge \varphi_1) = \min\set{\rldleval(w,k, \varphi_0), \rldleval(w,k,\varphi_1) }$, 
\item $\rldleval(w, k,\varphi_0 \vee \varphi_1) = \max\set{\rldleval(w, k,\varphi_0), \rldleval(w,k,\varphi_1) }$, 
	\item $\rprompteval(w,k,\Diamonddot \varphi) = b_1 b_2 b_3 b_4$ where $b_i = \max_{j \in \nats}\rprompteval_i(\suff{w}{j},k,\varphi )$,\footnote{This definition is equivalent to $\rprompteval(w,k,\Diamonddot \varphi) = \max_{j \in \nats}\rprompteval(\suff{w}{j},k,\varphi )$ due to monotonicity of the truth values, which is closer to the classical semantics of the eventually operator. A similar equivalence holds for $\Diamondprompt\varphi$.} and
	\item $\rprompteval(w,k, \Boxdot\varphi) = b_1 b_2 b_3 b_4$ where
\begin{itemize}
	
	\item $b_1 = \min_{j \in \nats} \rprompteval_1(\suff{w}{j},k,\varphi )$, i.e., $b_1 = 1$ iff $\varphi$ holds always,
	 
	\item $b_2 = \max_{j' \in \nats} \min_{j' \le j} \rprompteval_2(\suff{w}{j},k,\varphi )$, i.e., $b_2 = 1$ iff $\varphi$ holds almost always, 
		
	\item $b_3 = \min_{j' \in \nats} \max_{j' \le j} \rprompteval_3(\suff{w}{j},k,\varphi )$, i.e., $b_3 = 1$ iff $\varphi$ holds infinitely often, and 
	
	\item $b_4 = \max_{j \in \nats} \rprompteval_4(\suff{w}{j},k,\varphi )$ i.e., $b_4 = 1$ iff $\varphi$ holds at least once.

\end{itemize}
\item $\rprompteval(w,k, \Diamondpromptdot \varphi) = b_1 b_2 b_3 b_4$ where $b_i = \max_{0 \le j \le k}\rprompteval_i(\suff{w}{j},k,\varphi )$.
\end{itemize}
It is easy to verify that $\rprompteval(w,k,\varphi)$ is well-defined, i.e., $\rprompteval(w,k,\varphi) \in \bool_4$ for all $w$, $k$, and $\varphi$. 

\begin{example}
\label{example-rprompt}
Consider the formula~$\varphi = \Boxdot \Diamondpromptdot s$, where we interpret occurrences of the atomic proposition~$s$ as synchronizations. Then, the different degrees of satisfaction of the formula express the following possibilities, when evaluating it with respect to $k \in \nats$: (i) the distance between synchronizations is bounded by $k$, (ii) from some point onwards, the distance between synchronizations is bounded by $k$, (iii) there are infinitely many synchronizations, and (iv) there is at least one synchronization. Note that the last two possibilities are independent of $k$, which is explained by simple logical equivalences, e.g., the third possibility reads actually as follows: there are infinitely many positions such that a synchronization occurs within distance~$k$. However, it is easy to see that is equivalent to the property stated above. 
\end{example}

In the next two sections, we solve the model checking problem and the synthesis problem for $\rprompt$. To this end, we translate every $\rprompt$ formula into a sequence of five $\prompt$ formulas that capture the five degrees of satisfaction and violation by making the semantics of the robust always operator explicit. This is a straightforward generalization of the, in the terms of the introduction, reduction-based approach to robust $\ltl$~\cite{TabuadaNeider16}. 

\begin{lemma}
\label{lemma-prltl2pltl}
For every $\rprompt$ formula~$\varphi$ and every $\tval \in \bool_{4}$, there is a $\prompt$ formula~$\varphi_\tval$ of size~$\bigo(\size{\varphi})$ such that $\rprompteval(w, k, \varphi) \succeq \tval$ if and only if $\prompteval(w, k, \varphi_\tval) = 1$.
\end{lemma}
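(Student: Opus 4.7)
The plan is to construct, by induction on the structure of $\varphi$, four $\prompt$ formulas $\varphi^{(1)}, \varphi^{(2)}, \varphi^{(3)}, \varphi^{(4)}$ such that $\prompteval(w, k, \varphi^{(i)}) = 1$ if and only if $\rprompteval_i(w, k, \varphi) = 1$. Having these, the lemma follows immediately: since the five truth values in $\bool_4$ are precisely $\set{0000, 0001, 0011, 0111, 1111}$, the condition $\rprompteval(w,k,\varphi) \succeq \tval$ is equivalent to requiring a certain suffix of bits of $\rprompteval(w,k,\varphi)$ to equal $1$; by the monotonicity of bits within $\bool_4$, each such condition reduces to asserting that a single bit~$b_i$ equals $1$. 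Hence I can set $\varphi_\tval = \varphi^{(i(\tval))}$ for the appropriate index (with $\varphi_{0000}$ chosen to be an arbitrary tautology such as $p \vee \neg p$).

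For the inductive definitions, the atomic cases $p^{(i)} := p$ and $(\neg p)^{(i)} := \neg p$ work because the robust semantics assigns either $1111$ or $0000$ to these formulas, so all four bits agree. For Boolean connectives I set $(\varphi \wedge \psi)^{(i)} := \varphi^{(i)} \wedge \psi^{(i)}$ and $(\varphi \vee \psi)^{(i)} := \varphi^{(i)} \vee \psi^{(i)}$, using that $\min$ and $\max$ on $\bool_4$ act bitwise. For $\Diamonddot$ and $\Diamondpromptdot$, the definition in the excerpt is explicitly bitwise ($b_i = \max_j \rprompteval_i(\cdots)$), so $(\Diamonddot \varphi)^{(i)} := \Diamond \varphi^{(i)}$ and $(\Diamondpromptdot \varphi)^{(i)} := \Diamondprompt \varphi^{(i)}$ transcribe the semantics verbatim. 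The key case is $\Boxdot$, where the four bits capture the successively weakened properties \emph{always}, \emph{almost always}, \emph{infinitely often}, and \emph{at least once}; I encode these directly as
\[
(\Boxdot\varphi)^{(1)} := \Box \varphi^{(1)}, \quad
(\Boxdot\varphi)^{(2)} := \Diamond\Box \varphi^{(2)}, \quad
(\Boxdot\varphi)^{(3)} := \Box\Diamond \varphi^{(3)}, \quad
(\Boxdot\varphi)^{(4)} := \Diamond \varphi^{(4)}.
\]
Correctness of each case is a one-line calculation unfolding the definition of $\rprompteval$ and the inductive hypothesis; since $\prompt$ extends $\ltl$ conservatively on $\ltl$ fragments that do not contain $\Diamondprompt$, the standard $\ltl$ semantics of $\Box$ and $\Diamond$ applies unchanged.

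For the size bound, a straightforward induction shows that each translation step introduces at most a bounded number of new subformulas per subformula of $\varphi$; in particular, the $\Boxdot$ case adds at most two operators. Therefore $\size{\varphi^{(i)}} \le 2 \size{\varphi}$ for each $i$, which yields $\size{\varphi_\tval} = \bigo(\size{\varphi})$.

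The main obstacle I anticipate is the $\Boxdot$ case, where correctness requires matching the nested $\min/\max$ quantifications in the semantics of bits $b_2$ and $b_3$ with the $\ltl$ idioms $\Diamond\Box$ and $\Box\Diamond$. This is formally a small but careful verification: $b_2 = 1$ corresponds to $\exists j'\, \forall j \ge j'$, i.e., $\Diamond\Box$, while $b_3 = 1$ corresponds to $\forall j'\, \exists j \ge j'$, i.e., $\Box\Diamond$. Establishing these bitwise correspondences, and then checking that they compose cleanly through the inductive cases (in particular that the bitwise evaluations are preserved under $\min$, $\max$, and the temporal quantifications), constitutes the technical core of the argument.
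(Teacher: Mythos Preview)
Your proposal is correct and is essentially identical to the paper's proof: the paper defines $\varphi_\tval$ directly by induction with the same clauses you give (literals unchanged, Boolean connectives componentwise, $\Diamonddot\mapsto\Diamond$, $\Diamondpromptdot\mapsto\Diamondprompt$, and the four $\Boxdot$-cases $\Box$, $\Diamond\Box$, $\Box\Diamond$, $\Diamond$), with $\varphi_{0000}=\ttrue$. The only difference is notational---you index by the bit position~$i$ and then set $\varphi_\tval=\varphi^{(i(\tval))}$, whereas the paper indexes by $\tval$ directly---but the translations coincide.
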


\begin{proof}
If $\tval = 0000$, then we can pick $\varphi_\tval = \ttrue$, independently of $\varphi$. Otherwise, we obtain the result by induction over the construction of $\varphi$ implementing the intuition behind the robust semantics:
\begin{itemize}

	\item $p_\tval = p$ and $(\neg p)_\tval = \neg p$ for all atomic propositions~$p \in P$ and all $\tval \succ 0000$. 
	
	\item $(\varphi_0 \wedge \varphi_1)_\tval = (\varphi_0)_\tval \wedge (\varphi_1)_\tval$ for all $\tval \succ 0000$. 

	\item $(\varphi_0 \vee \varphi_1)_\tval = (\varphi_0)_\tval \vee (\varphi_1)_\tval$ for all $\tval \succ 0000$. 
	
	\item $(\Diamonddot\varphi )_\tval = \Diamond(\varphi_\tval)$ for all $\tval \succ 0000$.
	
	\item 

	$(\Boxdot \varphi)_{1111} = \Box (\varphi_{1111})$.
\item
	$(\Boxdot \varphi)_{0111} = \Diamond\Box (\varphi_{0111})$.
\item
	$(\Boxdot \varphi)_{0011} = \Box\Diamond (\varphi_{0011})$. 
\item
	$(\Boxdot \varphi)_{0001} = \Diamond (\varphi_{0001})$.
	
	\item $(\Diamondpromptdot \varphi )_\tval = \Diamondprompt (\varphi_\tval)$ for all $\tval \succ 0000$.

\end{itemize}
A straightforward induction shows that the resulting formula has the desired properties.
\end{proof}

Note that the logic $\rltl$ is not a fragment of $\rprompt$ as we have to disallow negation and implication to retain decidability~\cite{AlurEtessamiLaTorrePeled01}.
Conversely, $\prompt$ is also not a fragment of $\rprompt$ as we omitted the next, until, and release operator. However, we present a reduction-based approach from $\rprompt$ to $\prompt$. Thus, one could easily add the additional temporal operators to $\rprompt$ while maintaining the result of Lemma~\ref{lemma-prltl2pltl}. We prefer not to do so for the sake of accessibility and brevity.

\subsection{Model Checking}
\label{subsec-rpromptresults-mc}

Let us now consider the $\rprompt$ model checking problem, which asks whether all executions of a given finite transition system satisfy a given specification expressed as an $\rprompt$ formula with truth value at least $\tval \in \bool_4$.
More formally, we assume the system under consideration to be modeled as a (labeled and initialized) transition system~$\sys = (S, s_\init, E, \lambda)$ over~$P$ consisting of a finite set~$S$ of states containing the initial state~$s_\init$, a directed edge relation~$E \subseteq S \times S$, and a state labeling~$\lambda \colon S \rightarrow 2^P$ that maps each state to the set of atomic propositions that hold true in this state. A path through $\sys$ is a sequence~$\rho = s_0 s_1 s_2 \cdots $ satisfying $s_0 = s_\init$ and $(s_j, s_{j+1}) \in E$ for every $j \in \nats$, and $\Pi_\sys$ denotes the set of all paths through $\sys$. Finally, the trace of a path~$\rho = s_0 s_1 s_2 \cdots \in \Pi_\sys$ is the sequence $\trace(\rho) = \lambda(s_0) \lambda(s_1) \lambda(s_2)\cdots$ of labels induced by $\rho$.
\begin{problem} \label{prob:prLTL-model-checking}
Let $\varphi$ be an $\rprompt$ formula, $\sys$ a transition system, and $\tval \in \bool_4$. Is there a $k \in\nats$ such that $\rprompteval(\trace(\rho), k, \varphi) \succeq \tval$ holds true for all paths~$\rho \in \Pi_\sys$?
\end{problem}

Our solution relies on Lemma~\ref{lemma-prltl2pltl} and on $\prompt$ model checking being in $\pspace$~\cite{KupfermanPitermanVardi09}.

\begin{theorem}
\label{thm-rpltlmodelchecking}
$\rprompt$ model checking is in $\pspace$.
\end{theorem}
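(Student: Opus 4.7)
The plan is to obtain the result as a direct reduction to $\prompt$ model checking via Lemma~\ref{lemma-prltl2pltl}. Given an input consisting of an $\rprompt$ formula~$\varphi$, a transition system~$\sys$, and a threshold~$\tval \in \bool_4$, I first compute the $\prompt$ formula~$\varphi_\tval$ whose existence is guaranteed by Lemma~\ref{lemma-prltl2pltl}. By the defining property of that lemma, for every trace~$w$ and every bound~$k \in \nats$, we have $\rprompteval(w, k, \varphi) \succeq \tval$ if and only if $\prompteval(w, k, \varphi_\tval) = 1$. Instantiating this equivalence at $w = \trace(\rho)$ for each $\rho \in \Pi_\sys$, the original question is equivalent to asking whether there exists some $k \in \nats$ such that $\prompteval(\trace(\rho), k, \varphi_\tval) = 1$ for all paths~$\rho \in \Pi_\sys$.

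This reformulated question is exactly an instance of the standard $\prompt$ model checking problem over the transition system~$\sys$ and the formula~$\varphi_\tval$. Since $\prompt$ model checking is known to be solvable in polynomial space~\cite{KupfermanPitermanVardi09}, it remains to check that the reduction does not blow the input up: Lemma~\ref{lemma-prltl2pltl} guarantees $\size{\varphi_\tval} = \bigo(\size{\varphi})$, and the construction in its proof can be carried out in polynomial time in $\size{\varphi}$. Consequently, combining the translation with the $\prompt$ model checking algorithm yields a polynomial-space procedure.

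The only subtlety that needs attention is that the existential quantification over the prompt-bound~$k$ must be preserved by the reduction. This happens automatically, because Lemma~\ref{lemma-prltl2pltl} fixes the same $k$ on both sides of its equivalence; thus the outer $\exists k$ from Problem~\ref{prob:prLTL-model-checking} translates verbatim to the outer $\exists k$ expected by the $\prompt$ model checker. I do not anticipate genuine obstacles here, since all real work is already encapsulated in Lemma~\ref{lemma-prltl2pltl} and in the cited $\pspace$ upper bound for $\prompt$; the proof is essentially a black-box composition of the two.
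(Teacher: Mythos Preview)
Your proposal is correct and follows exactly the approach of the paper: reduce to $\prompt$ model checking via Lemma~\ref{lemma-prltl2pltl} and invoke the $\pspace$ result of~\cite{KupfermanPitermanVardi09}. Your write-up is in fact more explicit than the paper's, as you spell out the size bound on $\varphi_\tval$ and the preservation of the existential quantifier over $k$, but the underlying argument is identical.
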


\begin{proof}
By Lemma~\ref{lemma-prltl2pltl}, there exists a $k \in\nats$ such that $\rprompteval(\trace(\rho), k, \varphi) \succeq \tval$ holds true for all paths~$\rho \in \Pi_\sys$ if and only if there exists a $k \in \nats$ such that $\prompteval(\trace(\rho), k, \varphi_\tval) =1$ for all paths~$\rho \in \Pi_\sys$. The latter is an instance of the $\prompt$ model checking problem, which is in $\pspace$~\cite{KupfermanPitermanVardi09}.
\end{proof}

We do not claim $\pspace$-hardness because model checking the fragment of $\ltl$ with disjunction, conjunction, always, and eventually operators only (and classical semantics) is $\np$-complete~\cite{AlurLatorre04}. Since this fragment can be embedded into $\rprompt$ (via a translation of this $\ltl$ fragment into $\rprompt$ using techniques similar to those presented by Tabuada and Neider~\cite{TabuadaNeider16} for translating $\ltl$ into $\rltl$), we obtain at least $\np$-hardness for Problem~\ref{prob:prLTL-model-checking}. As we have no next, until, and release operators (by our own volition), we cannot easily claim $\pspace$-hardness. In contrast, the solution of the $\prompt$ model checking problem consists of a reduction to $\ltl$ model checking that introduces until operators (see~\cite{KupfermanPitermanVardi09}). Hence, we leave the fragment mentioned above, for which $\np$ membership is known. However, adding next, until, and release to $\rprompt$ yields a $\pspace$-hard model checking problem. 

\subsection{Synthesis}
\label{subsec-rpromptresults-synt}

Next, we consider the problem of synthesizing reactive controllers from $\rprompt$ specifications. In this context, we rely on the classical reduction from reactive synthesis to infinite-duration two-player games over finite graphs. In particular, we show how to construct a finite-state winning strategy for games with $\rprompt$ winning conditions, which immediately correspond to implementations of reactive controllers.
Throughout this section, we assume familiarity with games over finite graphs~(see, e.g., \cite[Chapter~2]{GraedelThomasWilke02}).

We consider $\rprompt$ games over~$P$, which are triples $\game = (\ggraph, \varphi, \tval)$ consisting of
	 a labeled game graph $\ggraph$,
an $\rprompt$ formula $\varphi$, and a truth value $\tval \in \bool_4$.
A labeled game graph~$\ggraph = (V_0, V_1, E, \lambda)$ consists of a  directed graph $(V_0 \cup V_1, E)$, two finite, disjoint sets of vertices~$V_0$ and~$V_1$, and a function $\lambda \colon V_0 \cup V_1 \to 2^P$ mapping each vertex~$v$ to the set~$\lambda(v)$ of atomic propositions that hold true in $v$.
We denote the set of all vertices by $V = V_0 \cup V_1$ and assume that game graphs do not have terminal vertices, i.e., $\{ v \} \times V \cap E \neq \emptyset$ for each $v \in V$.

As in the classical setting, $\rprompt$ games are played by two players, Player~0 and Player~1, who move a token along the edges of the game graph ad infinitum (if the token is currently placed on a vertex $v \in V_i$, $i \in \{ 0, 1\}$, then Player~$i$ decides the next move). 
The resulting infinite sequence $\rho = v_0 v_1 v_2\cdots \in V^\omega$ of vertices is called a {play} and induces a trace $\lambda(\rho) = \lambda(v_0) \lambda(v_1)\lambda(v_2) \cdots \in (2^P)^\omega$.

A strategy of Player~$0$ is a mapping $f \colon V^\ast V_0 \to V$ that prescribes where to move the token depending on the finite play prefix constructed so far. A play~$v_0v_1v_2 \cdots$ is played according to $f$ if $v_{j+1} = f(v_0 \cdots v_j)$ for every $j$ with $v_j \in V_0$. A strategy $f$ of Player~$0$ is winning from a vertex~$v \in V$ if there is a $k \in \nats$ such that all plays~$\rho$ that start in $v$ and that are played according to $f$ satisfy $\rprompteval(\lambda(\rho), k, \varphi) \succeq \tval$, i.e., the evaluation of $\varphi$ with respect to $k$ on $\lambda(\rho)$ determines the winner of the play~$\rho$. Further, a (winning) strategy is a finite-state strategy if there exists a finite-state machine computing it in the usual sense (see~\cite[Chapter~2]{GraedelThomasWilke02} for details).

We are interested in solving $\rprompt$ games, i.e., in solving the following problem.

\begin{problem} 
Let $\game$ be an $\rprompt$ game and $v$ a vertex. Determine whether Player~$0$ has a winning strategy for $\game$ from $v$ and compute a finite-state winning strategy if so.
\end{problem}

Again, our solution to this problem relies on Lemma~\ref{lemma-prltl2pltl} and the fact that solving $\prompt$ games is in $\twoexp$~\cite{KupfermanPitermanVardi09,Zimmermann13}.

\begin{theorem} \label{thm-rpltlgames}
Solving $\rprompt$ games is $\twoexp$-complete. 	
\end{theorem}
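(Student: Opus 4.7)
The plan is to prove membership in $\twoexp$ by reducing to $\prompt$ game solving via Lemma~\ref{lemma-prltl2pltl}, and to prove $\twoexp$-hardness by embedding a known hard $\ltl$ game-solving problem into $\rprompt$.

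For the upper bound, given an $\rprompt$ game $\game = (\ggraph, \varphi, \tval)$ and a vertex $v$, I would apply Lemma~\ref{lemma-prltl2pltl} to obtain a $\prompt$ formula $\varphi_\tval$ of size $\bigo(\size{\varphi})$ satisfying $\rprompteval(w,k,\varphi) \succeq \tval$ if and only if $\prompteval(w,k,\varphi_\tval) = 1$, uniformly in $w$ and $k$. Consequently, a strategy $f$ of Player~$0$ is winning from $v$ in $\game$ with witnessing bound $k$ if and only if $f$ is winning from $v$ in the $\prompt$ game $(\ggraph, \varphi_\tval)$ with the same bound $k$. Since solving $\prompt$ games and extracting a finite-state winning strategy is in $\twoexp$~\cite{KupfermanPitermanVardi09,Zimmermann13} and the reduction is only polynomial in the formula size, the overall algorithm stays in $\twoexp$; the strategy returned for the $\prompt$ game is directly a finite-state winning strategy for the $\rprompt$ game.

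For the lower bound, I would reduce from a $\twoexp$-hard $\ltl$ game-solving problem, choosing a fragment that embeds faithfully into $\rprompt$. A natural candidate is the fragment built from Boolean connectives together with the eventually and always operators, for which $\twoexp$-hardness of game solving is known. Using the translation technique Tabuada and Neider~\cite{TabuadaNeider16} developed for embedding $\ltl$ into $\rltl$, every formula of this fragment can be converted, in polynomial size, into an equivalent $\rprompt$ formula (using no prompt-eventualities) such that classical satisfaction corresponds to the truth value $1111$. Picking $\tval = 1111$ then produces an $\rprompt$ game equivalent to the original $\ltl$ game, transferring $\twoexp$-hardness.

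The main obstacle lies in the lower bound. Because $\rprompt$ deliberately lacks the next, until, and release operators and restricts the placement of negations, the standard $\twoexp$-hardness constructions for $\ltl$ games cannot be applied verbatim. One must identify an $\ltl$ fragment that both suffices for $\twoexp$-hardness of game solving and embeds into $\rprompt$ via the Tabuada--Neider encoding, while checking that all occurrences of negation can be pushed to atoms. The upper bound, by contrast, is essentially immediate once Lemma~\ref{lemma-prltl2pltl} is in hand, since the reduction preserves the bound $k$ pointwise and hence matches the winning conditions of the two games exactly.
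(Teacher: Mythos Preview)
Your proposal is correct and follows essentially the same route as the paper: the upper bound is obtained exactly as you describe, by applying Lemma~\ref{lemma-prltl2pltl} and invoking the $\twoexp$ algorithm for $\prompt$ games~\cite{KupfermanPitermanVardi09,Zimmermann13}, and the lower bound is inherited from $\ltl(\Box,\Diamond)$ games. The obstacle you flag for the lower bound is in fact already resolved in the literature: the paper simply cites~\cite{gameswithboxes}, which establishes $\twoexp$-hardness for games with $\ltl$ winning conditions using only $\Box$, $\Diamond$, and Boolean connectives (in negation normal form), so the syntactic embedding $\Box\mapsto\Boxdot$, $\Diamond\mapsto\Diamonddot$ together with $\tval=1111$ suffices directly, without needing to go through the full Tabuada--Neider translation machinery.
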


\begin{proof}
The lower bound follows from the special case of $\ltl(\Box, \Diamond)$~\cite{gameswithboxes}, which is a fragment of $\rprompt$. On the other hand, the upper bound is again proven by a reduction to $\prompt$: Player~$0$ having a winning strategy for $(\ggraph, \varphi, \tval)$ from $v$ is equivalent to her having a winning strategy for the $\prompt$ game~$(\ggraph, \varphi_\tval)$ from $v$. The latter problem can be solved in doubly-exponential time and a finite-state strategy can effectively be computed~\cite{Zimmermann13}.
\end{proof}

Here we have a matching lower bound, as solving games with $\ltl$ conditions without next, until, and release is already $\twoexp$-hard~\cite{gameswithboxes}.

\section{Robust Linear Dynamic Logic}
\label{sec-rldl}
Next, we \myquot{robustify} $\ldl$ by generalizing the ideas underlying robust $\ltl$ to $\ldl$, obtaining the logic $\rldl$.  Again, following the precedent of robust $\ltl$, we equip robust operators with dots to distinguish them from non-robust ones.
The formulas of $\rldl$ are given by the grammar
\begin{align*}
\varphi &\cceq p \mid \neg \varphi \mid \varphi \wedge \varphi \mid \varphi \vee \varphi \mid \varphi \implies \varphi \mid  \ddiamonddot{r} \varphi 
  \mid \bboxdot{r} \varphi \\
    r  &\cceq \phi \mid \varphi? \mid r+r \mid r \conc r \mid r^*
\end{align*}
where $p$ ranges over the atomic propositions in $P$ and $\phi$ over propositional formulas over $P$. We refer to formulas of the form~$\ddiamonddot{r}\varphi$ and $\bboxdot{r}\varphi$ as diamond formulas and box formulas, respectively. In both cases, $r$ is the guard of the operator. An atom~$\varphi?$ of a regular expression is a test. We use the abbreviations~$\ttrue = p \vee \neg p$ and $\ffalse = p \wedge \neg p$ for some $p \in P$ and note that both are formulas and guards.
We denote the set of subformulas of $\varphi$ by $\cl(\varphi)$. Guards are not subformulas, but the formulas appearing in the tests are, e.g., we have \[\cl(\ddiamonddot{\halfthinspace p?\conc q}p') = \set{\halfthinspace p, p', \ddiamonddot{\halfthinspace p?\conc q}p'}.\] 
The size~$\card{\varphi}$ of $\varphi$ is the sum of $\card{\cl(\varphi)}$ and the sum of the lengths of the guards appearing in $\varphi$ (measured in the number of operators),  taking each occurrence of a guard in the syntax tree (not the syntax DAG like for subformulas) into account.

Before we introduce the semantics of $\rldl$ we first recall the semantics of the robust always operator~$\Boxdot\varphi$ in robust $\ltl$. To this end, call a position~$j$ of a trace $\varphi$-satisfying if the suffix starting at position~$j$ satisfies $\varphi$. Now, the robust semantics are based on the following five cases, where the latter four distinguish various degrees of violating the formula~$\Box\varphi$: either all positions are $\varphi$-satisfying ($\Box$), almost all positions are $\varphi$-satisfying ($\Diamond\Box$), infinitely many positions are $\varphi$-satisfying ($\Box\Diamond$), some position is $\varphi$-satisfying ($\Diamond$), or no position is $\varphi$-satisfying.

A similar approach for a  formula~$\bboxdot{r}\varphi$ would be to consider the following possibilities, where a position~$j$ of a trace~$w$ is an $r$-match if the prefix of $w$ up to and including position $j-1$ is in the language of $r$: all $r$-matches are $\varphi$-satisfying, almost all $r$-matches are $\varphi$-satisfying, infinitely many $r$-matches are $\varphi$-satisfying, some $r$-match is $\varphi$-satisfying, or no $r$-match is $\varphi$-satisfying.
On a trace~$w$ with infinitely many $r$-matches, this is the natural generalization of the robust semantics. A trace, however, may only contain finitely many $r$-matches, or none at all. In the former case, there are not infinitely many $\varphi$-satisfying $r$-matches, but all $r$-matches could satisfy $\varphi$. Thus, the monotonicity of the cases is violated. 
We overcome this by interpreting \myquot{almost all} as \myquot{all} and \myquot{infinitely many} as \myquot{some} if there are only finitely many $r$-matches.\footnote{\label{footnote-altsemantics}There is an alternative definition inspired by the semantics of $\ltl$ on finite traces: Here, both $\Diamond\Box\varphi$ and $\Box\Diamond\varphi$ are equivalent to \myquot{$\varphi$ holds at the last position}. This suggests interpreting \myquot{almost all $r$-matches are $\varphi$-satisfying} and \myquot{infinitely many $r$-matches are $\varphi$-satisfying} as \myquot{the last $r$-match is $\varphi$-satisfying} in case there are only finitely many $r$-matches. Arguably, this definition is less intuitive than the one we propose to pursue.}

Also, the guard~$r$ may contain tests, which have to be evaluated to determine whether a position is an $r$-match. For this, we have to use the appropriate semantics for the robust box operator. For example, if we interpret $\bboxdot{r}\varphi$ to mean \myquot{almost all $r$-matches satisfy $\varphi$}, then the robust box operators in tests of $r$ are evaluated with this interpretation as well. This may, however, violate monotonicity (see Example~\ref{example-monotonicityviolation}), which we therefore hardcode in the semantics.

We now formalize the informal description above and subsequently show that this formalization satisfies all desired properties. To this end, we again define an evaluation function~$\rldleval$ mapping a trace~$w$ and a formula~$\varphi$ to a truth value. Also, we again denote the projection of $\rldleval(w,\varphi)$ to its $i$-th bit by $\rldleval_i(w,\varphi)$. For atomic propositions, conjunction, disjunction, negation, and implication, the definition is the same as for robust $\ltl$ on  Page~\pageref{rltlsemdef}.

To define the semantics of the diamond and the box operator, we need to first define the semantics of the guards: The match set~$\robRexp_i(w,r) \subseteq \nats$ for $i \in \set{1,2,3,4}$ contains all positions~$j$ of $w$ such that $\pref{w}{j}$ matches $r$ (with tests in $r$ being evaluated depending on the value of $i$) and is defined inductively as follows: 
\begin{itemize}

\item $\robRexp_i(w,\phi) = \set{1}$ if $w(0) \models \phi$ and $\robRexp_i(\phi,w) =  \emptyset$ otherwise, for propositional~$\phi$.

\item $\robRexp_i(w,\varphi?) = \set{0}$ if $\rldleval_i(w, \varphi)=1$ and $\robRexp_i(w,\varphi?) =  \emptyset$ otherwise.

\item $\robRexp_i(w,r_0 + r_1) = \robRexp_i(w,r_0) \cup \robRexp_i(w,r_1)$.

\item $\robRexp_i(w,r_0 \conc r_1) = \set{j_0 + j_1 \mid j_0, j_1 \ge 0 \text{ and } j_0 \in \robRexp_i(w,r_0) \text{ and } j_1 \in \robRexp_i(\suff{w}{j_0}, r_1)}$, i.e., for $j$ to be in $\robRexp_i(w,r_0 \conc r_1)$, it has to be the sum of natural numbers $j_0$ and $j_1$ such that $w$ has a prefix of length $j_0$ that matches $r_0$ and $\suff{w}{j_0}$ has a prefix of length~$j_1$ that matches~$r_1$ (where in both cases the tests are again evaluated depending on the value of $i$). 

\item $\robRexp_i(w,r^*) = \set{0} \cup \set{j_1 + \cdots + j_\ell \mid 0 \le j_{\ell'} \in \robRexp_i(\suff{w}{j_1 + \cdots + j_{\ell'-1}},r) \text{ for all } \ell'\in\set{1, \ldots, \ell}}$, where we use $j_1 + \cdots + j_{0} = 0$. Thus, for $j$ to be in $\robRexp_i(w,r^*)$, it has to be expressible as $j = j_1 +\cdots + j_\ell$ with non-negative~$j_{\ell'}$ such that the prefix of $w$ of length~$j_1$ matches $r$, the prefix of length~$j_2$ of $\suff{w}{j_1}$ matches $r$, and in general, the prefix of length~$j_{\ell'}$ of $\suff{w}{j_1+\cdots +j_{\ell'-1}}$ matches $r$, for every $\ell' \in \set{1, \ldots, \ell}$ (where the tests are evaluated depending on $i$).

\end{itemize}
Due to tests, membership of $j$ in $\robRexp_i(w,r)$ does, in general, not only depend on the prefix~$\pref{w}{j}$, but on the complete trace~$w$.
Also, the semantics of the propositional atom~$\phi$ differs from the semantics of the test~$\phi?$: the former consumes an input letter, while the latter one does not.
Thus, $\rldl$ (as $\ldl$) features both kinds of atoms. 
We define the intuition given above via  
\begin{itemize}
\item $\rldleval(w,  \ddiamonddot{r}\varphi) = b_1 b_2 b_3 b_4$ where 
$b_i = \max\nolimits_{j \in \robRexp_i(w,r)} \rldleval_i(\suff{w}{j}, \varphi)$ and
\item $\rldleval(w, \bboxdot{r}\varphi) = b_1 b_2 b_3 b_4$ with $b_i = \max\set{b_1', \ldots, b_i'}$ for every $i \in \set{ 1,2,3,4}$, where
\begin{itemize}
	
	\item $b_1' = \min_{j \in \robRexp_1(w,r)} \rldleval_1(\suff{w}{j},\varphi)$,
	
	\item $b_2' = \begin{cases}
	\max_{j' \in \nats} \min_{j \in \robRexp_2(w,r) \cap \set{j', j'+1, j'+2, \ldots}} \rldleval_2(\suff{w}{j},\varphi)&\text{if $\size{\robRexp_2(w,r)} = \infty$},\\
	
	\min_{j \in \robRexp_2(w,r)} \rldleval_2(\suff{w}{j},\varphi)&\text{if $0 < \size{\robRexp_2(w,r)} < \infty$},\\
	
	1 &\text{if $\size{\robRexp_2(w,r)} = 0$},
	\end{cases}$
	
	\item $b_3' = \begin{cases}
	\min_{j' \in \nats} \max_{j \in \robRexp_3(w,r) \cap \set{j', j'+1, j'+2, \ldots}} \rldleval_3(\suff{w}{j},\varphi)&\text{if $\size{\robRexp_3(w,r)} = \infty$},\\
	
	\max_{j \in \robRexp_3(w,r)} \rldleval_3(\suff{w}{j},\varphi)&\text{if $0 < \size{\robRexp_3(w,r)} < \infty$},\\
	
	1 &\text{if $\size{\robRexp_3(w,r)} = 0$},
	\end{cases}$
	
	\item $b_4' =\begin{cases} \max_{j \in \robRexp_4(w,r)} \rldleval_4(\suff{w}{j},\varphi) &\text{if $\size{\robRexp_4(w,r)} > 0$,}\\
1&\text{if $\size{\robRexp_4(w,r)} = 0$.}
\end{cases}
$
\end{itemize}
\end{itemize}

To give an intuitive description of the semantics, let us first generalize the notion of $r$-matches and $\varphi$-satisfiability.
We say that a position~$j$ of $w$ is an $r$-match of degree~$\tval$ if $j \in \robRexp_i(w,r)$ for the unique $i$ with $\tval = \itotruthvalue{i}$, which requires all tests in~$r$ to be evaluated w.r.t.\ $\rldleval_i$ (i.e., to some truth value at least $\tval$). Similarly, we say that a position~$j$ of $w$ is $\varphi$-satisfying of degree~$\tval$ if $\rldleval(\suff{w}{j} ,\varphi) \succeq \tval$, or if, equivalently, $\rldleval_i(\suff{w}{j},\varphi) =1$ for the unique $i$ with $\tval = \itotruthvalue{i}$.

Now, consider the $b_i'$ defining the semantics of the robust box operator: We have $b_1' = 1$ if all $r$-matches of degree~$1111$ are $\varphi$-satisfying of degree~$1111$. This is in particular satisfied if there is no such match.
Further, if there are infinitely (finitely) many $r$-matches of degree~$0111$, then $b_2' =1$ if almost all (if all) those matches are $\varphi$-satisfying of degree~$0111$. 
Dually, if there are infinitely (finitely) many $r$-matches of degree~$0011$, then $b_3' =1$ if infinitely many (at least one) of those matches are (is) $\varphi$-satisfying of degree~$0011$.
Finally, if there is at least one $r$-match of degree~$0001$, then $b_4' =1$ if at least one of those matches is $\varphi$-satisfying of degree~$0001$.
The cases where there is no $r$-match are irrelevant due to monotonicity, so we hardcode them to $1$.

\begin{example}
Consider the formula~$\bboxdot{r}q \implies \bboxdot{\ttrue\conc r}p$ with $r = (\ttrue; \ttrue)^*$, which expresses that the degree of violation of $q$ at \emph{even} positions should at most be the degree of violation of $p$ at \emph{odd} positions. Such a property cannot be expressed in $\rltl$, as even $\bboxdot{r}q$ is known to be inexpressible in $\ltl$~\cite{BaierKatoen08}.
\end{example}

First, we state that the semantics is well-defined. This is not obvious due to the case distinctions and the use of the matching sets~$\robRexp_i$ for different $i$.

\begin{lemma}
\label{lemma-semanticswelldefineds}
We have $\rldleval(w, \varphi) \in \bool_4$ for every trace~$w$ and every formula~$\varphi$.
\end{lemma}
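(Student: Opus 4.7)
The plan is to prove by structural induction on $\varphi$ that $\rldleval(w,\varphi) \in \bool_4$ for every trace $w$. Since $\bool_4 = \{0000, 0001, 0011, 0111, 1111\}$, this is equivalent to the statement that the bit sequence is non-decreasing, i.e., $\rldleval_i(w,\varphi) = 1$ implies $\rldleval_{i+1}(w,\varphi) = 1$ for all $i \in \{1,2,3\}$. Phrasing well-definedness this way is what makes the induction go through.

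The atomic, Boolean, negation, and implication cases are immediate: the evaluation function outputs either $1111$, $0000$, or (for implication) $\rldleval(w,\varphi_1)$, which lies in $\bool_4$ by the induction hypothesis; and $\min$ and $\max$ preserve $\bool_4$ since it is totally ordered. For the robust box operator $\bboxdot{r}\varphi$, well-definedness is immediate from the definition $b_i = \max\{b_1', \ldots, b_i'\}$: each $b_j'$ lies in $\{0,1\}$, and adding more arguments to $\max$ can only increase the value, so $b_1 \le b_2 \le b_3 \le b_4$ without any appeal to the induction hypothesis. This is precisely the point of hardcoding monotonicity into the semantics of $\bboxdot{}$.

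The one genuinely interesting case is the robust diamond operator $\ddiamonddot{r}\varphi$, where $b_i = \max_{j \in \robRexp_i(w,r)} \rldleval_i(\suff{w}{j},\varphi)$. Suppose $b_i = 1$; then some $j \in \robRexp_i(w,r)$ satisfies $\rldleval_i(\suff{w}{j},\varphi) = 1$. The outer induction hypothesis applied to the proper subformula $\varphi$ yields $\rldleval_{i+1}(\suff{w}{j},\varphi) = 1$, so it remains to establish the auxiliary monotonicity statement $\robRexp_i(w,r) \subseteq \robRexp_{i+1}(w,r)$, from which $b_{i+1} = 1$ follows. I would prove this auxiliary statement by a subsidiary structural induction on $r$: for propositional atoms $\phi$ the set does not depend on $i$; for tests $\psi?$, the formula $\psi$ is a proper subformula of $\ddiamonddot{r}\varphi$ (tests lie in $\cl(\cdot)$), so the outer induction hypothesis applied to $\psi$ supplies exactly the required implication $\rldleval_i(w,\psi) = 1 \Rightarrow \rldleval_{i+1}(w,\psi) = 1$; and the cases of $+$, $\conc$, and $^*$ follow from the set-theoretic definitions together with the subsidiary induction hypothesis.

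The main obstacle is coordinating the two inductions cleanly: the inner induction on $r$ must invoke the outer induction on $\varphi$ in its test case. This is well-founded because every formula appearing inside a test of a guard of $\ddiamonddot{r}\varphi$ is a proper subformula of $\ddiamonddot{r}\varphi$ in the sense of $\cl$. Once this is phrased carefully the argument goes through, and the statement $\rldleval(w,\varphi) \in \bool_4$ follows in every case.
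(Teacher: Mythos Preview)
Your proposal is correct and follows essentially the same approach as the paper: structural induction on $\varphi$, with the box case handled directly by the built-in maximization~$b_i=\max\{b_1',\ldots,b_i'\}$, and the diamond case reduced to the inclusion $\robRexp_i(w,r)\subseteq\robRexp_{i+1}(w,r)$ via a subsidiary induction on~$r$ that invokes the outer induction hypothesis on the tests. Your write-up is in fact more explicit than the paper's about why the nested induction is well-founded (tests are proper subformulas in the sense of~$\cl$), but the argument is the same.
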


 \begin{proof}
We proceed by induction over the structure of $\varphi$. The cases of atomic propositions and Boolean connectives are trivial, as they return values from $\bool_4$ by definition, provided the arguments are from $\bool_4$. Similarly, we have $\rldleval(w, \bboxdot{r}\varphi) = b_1 b_2 b_3 b_4 \in \bool_4$, as the maximization \myquot{$b_i = \max\set{b_1', \ldots, b_i'}$} in the definition  enforces the desired monotonicity of the bits~$b_i$. 

To conclude, consider a diamond formula~$ \ddiamonddot{r}\varphi$. Applying the induction hypothesis to the tests of $r$ and an induction over the construction of $r$ shows 
\[\robRexp_1(w,r) \subseteq \robRexp_2(w,r) \subseteq \robRexp_3(w,r) \subseteq \robRexp_4(w,r)  
\]
for every trace~$w$. Hence, an application of the induction hypothesis for $\varphi$ yields
\begin{align*}
\max_{j \in \robRexp_1(w,r)} \rldleval_1(\suff{w}{j}, \varphi)& \le 
\max_{j \in \robRexp_2(w,r)} \rldleval_2(\suff{w}{j}, \varphi) \\ 
&\le \max_{j \in \robRexp_3(w,r)} \rldleval_3(\suff{w}{j}, \varphi) \le
\max_{j \in \robRexp_4(w,r)} \rldleval_4(\suff{w}{j}, \varphi) 
\end{align*}
for every trace~$w$. Hence, $\rldleval(w,  \ddiamonddot{r}\varphi) \in \bool_4$.
\end{proof}

To conclude the definition of the semantics, we give an example witnessing that the maximization over the $b_i'$ in the semantics of the box operator is indeed necessary to obtain monotonicity.

\begin{example}
\label{example-monotonicityviolation}
Let $\varphi = \bboxdot{r}\ffalse$ with $r = (\bboxdot{\ttrue^*}p)?$.
Moreover, consider the trace~$w = \emptyset\set{p}^\omega$. 
Then, we have $\rldleval(w, \bboxdot{\ttrue^*}p) = 0111$ and consequently $\robRexp_1(w,r) = \emptyset$ and $\robRexp_2(w,r) = \set{0}$. Therefore, $\min_{j \in \robRexp_1(w,r)} \rldleval_1(\suff{w}{j},\ffalse) = \min \emptyset = 1$, but $\min_{j \in \robRexp_2(w,r)} \rldleval_2(\suff{w}{j},\ffalse) = \min \set{0} = 0$. Thus, the bits~$b_1'$ and $b_2'$ inducing  $\rldleval(w, \bboxdot{r}\ffalse)$ are not monotonic, which explains the need to maximize  over the $b_i'$ to obtain the semantics of the robust box operator.
The traces~$(\emptyset \set{p})^\omega$ and~$\set{p} \emptyset^\omega$ witness that monotonicity can also be violated for the pairs $b_2',b_3'$ and $b_3',b_4'$.
\end{example}

We prove that $\rldl$ has the exponential compilation property. This allows us to solve the model checking and the synthesis problem using well-known and efficient automata-based algorithms. Furthermore, we are able to show that the complexity of these algorithms is asymptotically the same as the complexity of the algorithms for plain $\ldl$ and $\ltl$. 
In the terminology introduced in the introduction, we present a direct translation, i.e., we translate $\rldl$ directly into automata. 

\begin{theorem}
\label{theorem-translation-oldcor}
Let $\varphi$ be an $\rldl$ formula, $n = \size{\varphi}$, and $\tval \in \bool_4$.
There is a non-deterministic Büchi automaton~$\autb_{\varphi, \tval}$ with $2^{\bigo(n )}$ states recognizing the language~$\set{w \in (\pow{P})^\omega \mid \rldleval(w,\varphi) \succeq \tval}$.

\end{theorem}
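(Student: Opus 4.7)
The plan is to follow the direct approach: construct an alternating automaton with an appropriate acceptance condition and $\bigo(n)$ states that recognizes $\set{w \mid \rldleval_i(w,\varphi) = 1}$, where $i \in \set{1,2,3,4}$ is the unique index with $\tval = \itotruthvalue{i}$, and then apply the standard translation from alternating automata to non-deterministic Büchi automata, which incurs the claimed $2^{\bigo(n\log n)}$ blowup. A preparatory observation is that, by the shape of truth values in $\bool_4$, we have $\rldleval(w,\varphi) \succeq \tval$ if and only if $\rldleval_i(w,\varphi) = 1$, so it suffices to design an automaton that verifies one specific bit of the evaluation.

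The states of the alternating automaton are pairs $(\psi, i')$ for $\psi \in \cl(\varphi)$ and $i' \in \set{1,2,3,4}$, augmented with auxiliary states that track positions inside the regular guards. Transitions for atomic propositions, negation, and Boolean connectives are defined by direct recursion on $\rldleval_{i'}$, with implication handled via its case distinction on the ordered comparison of truth values. For a diamond formula $\ddiamonddot{r}\psi$ at bit $i'$, the automaton simulates a small alternating automaton for $r$ on a finite prefix, using existential branching to locate a match and resolving any test $\chi?$ in $r$ by branching into $(\chi, i')$; once $r$ accepts, control passes to $(\psi, i')$ on the remaining suffix. For a box formula $\bboxdot{r}\psi$ at bit $i$, the monotonicity fix-up $b_i = \max\set{b_1', \ldots, b_i'}$ is realized by taking the disjunction over $i' \le i$ of four sub-automata, one for each case.

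Each case $b_{i'}'$ is encoded as follows: $b_1'$ universally simulates $r$ and, at every match of degree~$1111$, spawns a check of $(\psi,1)$; $b_2'$ additionally layers an \myquot{eventually always} condition so that almost every match triggers $(\psi,2)$, with a nondeterministic side-branch that accepts if from some point no further matches occur (handling the finite-match case, which is hardcoded to $1$ in the semantics); $b_3'$ encodes \myquot{infinitely many matches satisfy $(\psi,3)$} using a Büchi condition, again with a side-branch covering the finite-match case; and $b_4'$ is the existential analogue, which is vacuously accepting when $\robRexp_4(w,r)$ is empty. Combining everything yields an alternating automaton with $\bigo(n)$ states, and the standard translation gives the desired NBA with $2^{\bigo(n\log n)}$ states. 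The main obstacle will be the box operator: each of the four sub-automata has to thread the regular-expression skeleton, split off the nondeterministic guess distinguishing finite from infinite match sets, and do so without introducing more than $\bigo(1)$ states per subformula. A secondary care point is that tests inside a guard must always be evaluated at the same bit $i'$ as the match set $\robRexp_{i'}$ being computed, which needs to be hardwired into the state space.
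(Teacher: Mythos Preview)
Your overall strategy matches the paper's: build a linear-size alternating parity automaton by structural induction and then dealternate. The diamond case and the Boolean cases are essentially as in the paper. There is, however, a genuine gap in the box case. You write that for $b_2'$ the finite-match case is ``hardcoded to $1$ in the semantics'' and have your side-branch accept whenever the match set is finite. This misreads the definition: only the \emph{empty} match set is hardcoded to~$1$. When $0 < \size{\robRexp_2(w,r)} < \infty$, the semantics sets $b_2' = \min_{j \in \robRexp_2(w,r)} \rldleval_2(\suff{w}{j},\varphi)$, i.e., \emph{all} matches must be $\varphi$-satisfying of degree~$0111$. For a concrete failure, take a test-free~$r$ with a single match~$j$ on~$w$ (so all four $\robRexp_i(w,r) = \set{j}$) and $\rldleval(\suff{w}{j},\varphi') = 0000$: then the correct value is $b_2 = \max\set{b_1',b_2'} = \max\set{0,0} = 0$, but your automaton outputs~$1$. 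The analogous issue arises for $b_3'$, where the finite-nonempty case is a $\max$ over matches (at least one match must be $\varphi$-satisfying), not vacuous acceptance.

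The paper fixes this by an explicit three-way split: for each~$i$ it builds alternating parity automata recognizing $\size{\robRexp_i(w,r)} = 0$, $0 < \size{\robRexp_i(w,r)} < \infty$, and $\size{\robRexp_i(w,r)} = \infty$ (the middle one via complementation of the other two), and then implements each branch of the case distinction with the appropriate quantifier pattern over matches, combining everything via the Boolean closure of alternating parity automata. The infinite-match check uses a K\"onig's-Lemma witness argument and a second disjoint copy of~$\autr_r$; parity acceptance is used throughout so that B\"uchi and co-B\"uchi constraints can be mixed while keeping the state count linear.
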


In order to prove this theorem, we first recall in Section~\ref{subsec-markedautomata} how to translate guards~$r$ into finite non-deterministic automata with special features to account for tests.
Then, in Section~\ref{subsec-alternatingautomata}, we present the translation of $\rldl$ into weak alternating Büchi automata of linear size, which can then be further transformed into non-deterministic Büchi automata of exponential size and deterministic parity automata of doubly-exponential size.
Such automata are needed for solving the model checking problem and the synthesis problem, respectively.

\subsection{Translating Guards into Automata}
\label{subsec-markedautomata}

Recall that $P$ is the (finite) set of atomic propositions.
An automaton with tests~$\autr = (Q, \pow{P}, q_\init, \delta, F, \marking)$ consists of a finite set~$Q$ of states, the alphabet~$\pow{P}$, an initial state~$q_\init \in Q$, a transition function~$\delta \colon Q \times (\pow{P}\cup\set{\epsilon}) \rightarrow \pow{Q}$, a set~$F$ of final states, and a partial function~$\marking$, which assigns to states~$q \in Q$ an $\rldl$ formula~$\marking(q)$. These should be thought of as the analogue of tests, i.e., if $\marking(q)$ is defined, then a run visiting $q$ is only successful if the word that remains to be processed from $q$ onwards satisfies the formula $\marking(q)$.

We write $q \xrightarrow{a} q'$ if $q' \in \delta(q, a)$ for $a \in \pow{P} \cup \set{\epsilon}$. An $\epsilon$-path~$\pi$ from $q$ to $q'$ in $\autr$ is a sequence~$\pi = q_1 \cdots q_k$ of $k \ge 1$ states with $q =q_1 \xrightarrow{\epsilon} \cdots \xrightarrow{\epsilon} q_k = q'$.  Let $\marking(\pi) = \set{\marking(q_i) \mid 1 \le i \le k}$  denote the set of tests visited by $\pi$ and let $\Pi(q, q')$ denote the set of all $\epsilon$-paths from $q$ to $q'$.

A run of $\autr$ on $w(0) \cdots w(j-1) \in (\pow{P})^*$ is a sequence~$q_0 q_1 \cdots q_j$ of states such that $q_0 = q_\init$ and for every $j'$ in the range~$0 \le j' \le j-1$ there is a state~$q_{j'}'$ reachable from $q_j$ via an $\epsilon$-path~$\pi_{j'}$ and such that $q_{j'+1} \in \delta(q_{j'}', w(j'))$. The run is accepting if there is a $q_{j}' \in F$ reachable from $q_j$ via an $\epsilon$-path~$\pi_j$.  This slightly unusual definition of runs (but equivalent to the standard one) simplifies our reasoning below. Also, the definition is oblivious to the tests assigned by~$\marking$. To take them into account, we define for $i \in \set{1,2,3,4}$
\begin{multline*}
\robRexp_i(w,\autr) = \{j \mid \text{$\autr$ has an accepting run on $\pref{w}{j}$ with $\epsilon$-paths $\pi_0, \ldots, \pi_j$ s.t.}\\
\text{$\rldleval_i(\suff{w}{j'}, \bigwedge \marking(\pi_{j'}))=1$ for every $j'$ in the range~$0 \le j' \le j$\}}.	
\end{multline*}
Here, $\bigwedge \marking(\pi_{j'})$ is the conjunction of all formulas in $\marking(\pi_{j'})$.

Every guard (which is just a regular expression with tests) can be turned into an equivalent automaton with tests via a straightforward generalization of the classical Thompson construction turning classical regular expressions into $\epsilon$-NFA, to which one adds a rule turning a test into a one-state automaton whose state is labeled with this test (see Figure~2 of Faymonville and Zimmermann~\cite{FaymonvilleZimmermann14} for details).

\begin{lemma}
\label{lemma-guards2automata}
Every guard~$r$ can be translated into an automaton with tests~$\autr_r$ such that $\robRexp_i(w,r) = \robRexp_i(w,\autr_r)$ for every $i \in 
\set{1,2,3,4}$ and with $\size{\autr_r} \in \bigo(\size{r})$. Furthermore, all final states of $\autr_r$ are terminal, i.e., they have no outgoing transitions.
\end{lemma}

The automaton~$\autr_r$ is independent of $i$, as this value only determines how tests are evaluated. These are handled \myquot{externally} in the definition of the semantics.
Having thus demonstrated how to turn guards into automata, we now demonstrate how to do the same for \rldl\ formulas.

\subsection{Translating rLDL into Alternating Automata}
\label{subsec-alternatingautomata}

In this subsection, we translate $\rldl$ formulas into weak alternating Büchi automata, which are known to be translatable into non-deterministic Büchi automata of exponential size~\cite{MH}. Hence, the linear translation from $\rldl$ to weak alternating Büchi automata we are about to present implies the exponential compilation property for $\rldl$. 

An alternating Büchi automaton~$\aut = (Q,\Sigma,q_\init,\delta, \accpar)$ consists of a finite set~$Q$ of states, an alphabet~$\Sigma$, an initial state~$q_\init \in Q$, a transition function~$\delta \colon Q \times \Sigma \to \bplus(Q)$, and a set~$\accpar \subseteq Q$ of accepting states. 
Here, $\bplus(Q)$ denotes the set of positive Boolean combinations over $Q$, which contains in particular the formulas $\ttrue$ (true) and $\ffalse$ (false).

A run of $\aut$ on $w = w(0) w(1) w(2) \cdots \in \Sigma^\omega$ is a directed graph~$\rho = (V, E)$ where $V \subseteq Q \times \nats$ and $((q,n),(q',n')) \in E$ implies $n' = n +1$, such that $(q_\init, 0) \in V$, and such that for all $(q, n) \in V$ we have $\suc{
\rho}{(q,n)} \models \delta(q, w(n))$. Here $\suc{\rho}{(q,n)}$ denotes the set of successors of $(q,n)$ in $\rho$ projected to $Q$. A run~$\rho$ is accepting if all infinite paths (projected to $Q$) through $\rho$ satisfy the Büchi condition, i.e., an accepting state is visited infinitely often on the path. The language~$L(\aut)$ contains all $w \in \Sigma^\omega$ that have an accepting run of $\aut$.

Given $\aut$ as above, its transition graph~$(Q,E)$ is the directed graph such that $(q,q') \in E$ if and only if $q'$ appears in $\delta(q,a)$ for some $a \in \Sigma$. The automaton~$\aut$ is weak, if every strongly connected component~$C$ of $(Q,E)$ is either a subset of $F$ or a subset of $Q \setminus F$, i.e., every cycle has either only accepting states or only rejecting states.

Using standard constructions, weak alternating Büchi automata are easily seen to be closed under all Boolean operations.
 Fix automata~$\aut_0 =  (Q_0,\Sigma,q_\init^0,\delta_0, \accpar_0) $ and $\aut_1= (Q_1,\Sigma,q_\init^1,\delta_1, \accpar_1)$.
\begin{itemize}
	
	\item $(Q_0, \Sigma, q_\init^0,  \overline{\delta_0}, \overline{\accpar})$ recognizes $\Sigma^\omega \setminus L(\aut_0)$, where $\overline{\accpar} = Q_0 \setminus \accpar$ and where $\overline{\delta_0}$ is the dual of $\delta_0$, i.e., $\overline{\delta_0}(q, A)$ is obtained from $\delta_0(q, A)$ by replacing each disjunction by a conjunction, each conjunction by a disjunction, each $\ttrue$ by $\ffalse$, and each $\ffalse$ by $\ttrue$. 
		
	\item The disjoint union of $\aut_0$ and $\aut_1$ with a fresh initial state~$q_\init$ and $\delta(q_\init, A) = \delta_0(q_\init^0, A) \wedge  \delta_1(q_\init^1, A)$ recognizes $L(\aut_0) \cap L(\aut_1)$.

	\item The disjoint union of $\aut_0$ and $\aut_1$ with a fresh initial state~$q_\init$ and $\delta(q_\init, A) = \delta_0(q_\init^0, A) \vee  \delta_1(q_\init^1, A)$ recognizes $L(\aut_0) \cup L(\aut_1)$.

\end{itemize}
The latter two constructions can obviously be generalized to unions and intersections of arbitrary arity while still only requiring a single fresh state. 

We prove the following lemma, which implies Theorem~\ref{theorem-translation-oldcor}, as alternating Büchi automata can be translated into non-deterministic Büchi automata of size $2^{\bigo(n)}$~\cite{MH}.

\begin{lemma}
\label{lemma-translation-alternating}
For every $\rldl$ formula~$\varphi$ and every $\tval \in \bool_4$, there is a weak alternating Büchi automaton~$\aut_{\varphi, \tval}$ with $\bigo(\size{\varphi})$ states recognizing the language~$\set{w \in (\pow{P})^\omega \mid \rldleval(w,\varphi) \succeq \tval} $.
\end{lemma}

\begin{proof}
We first construct the desired automaton by induction over the structure of $\varphi$. Then, we estimate its size. We begin by noting that $\aut_{\varphi,0000}$ is trivial for every formula~$\varphi$, as it has to accept every input. Hence, we only consider $\tval \succ 0000$ in the remainder of the proof.

For an atomic proposition~$p \in P$, $\aut_{p,\tval}$ for $\tval \succ 0000$ is an automaton that accepts exactly those $w$ with $p \in w(0)$. Such an automaton can easily be constructed.

Now, consider a negation~$\varphi = \neg \varphi'$: by definition, we have $\rldleval(w,\varphi) = 0000$ if $\rldleval(w, \varphi') = 1111$, and $\rldleval(w, \varphi) = 1111$ if $\rldleval(w, \varphi') \neq 1111$. Thus, $\aut_{\varphi, \tval}$ for $\tval \succ 0000$ has to accept the language~$\set{w \mid \rldleval(w, \varphi') \neq 1111}$, which is the complement of the language of $\aut_{\varphi',1111}$.
Hence, we obtain the desired automaton by applying the closure properties.

Next, let us consider a conjunction of the form $\varphi = \varphi_0 \wedge \varphi_1$. To this end, recall that $\rldleval(w, \varphi) = \min\set{\rldleval(w, \varphi_0) , \rldleval(w, \varphi_1)}$. Hence, $\aut_{\varphi,\tval}$ has to recognize the language~$L(\aut_{\varphi_0,\tval}) \cap L(\aut_{\varphi_1,\tval}) $.
Again, we obtain the desired automaton by applying the closure properties.

The construction for a disjunction~$\varphi = \varphi_0 \vee \varphi_1$ is dual to the conjunction: we have $\rldleval(w, \varphi) = \max\set{\rldleval(w, \varphi_0) , \rldleval(w, \varphi_1)}$ and thus construct $\aut_{\varphi,\tval}$ such that it recognizes the language $L(\aut_{\varphi_0,\tval}) \cap L(\aut_{\varphi_1,\tval}) $.
Hence, we obtain the desired automaton by applying the closure properties.

For an implication~$\varphi = \varphi_0 \implies \varphi_1$, we again implement the semantics via Boolean combinations of automata. Recall that $\rldleval(w, \varphi_0 \implies \varphi_1)$ is equal to 
 	$1111$ if $\rldleval(w,\varphi_0) \preceq \rldleval(w, \varphi_1)$. Otherwise, it is equal to 
 	$\rldleval(w, \varphi_1)$. 
 	
 	Here, we need to construct auxiliary automata~$\aut^=_{\varphi_i,\tval}$ that accept the traces~$w$ with $\rldleval(w,\varphi_i)=\tval$. 
 	For $\tval =1111$, this automaton is equal to $\aut_{\varphi_i, \tval}$ and for $\tval \prec 1111$ it is obtained by constructing the automaton recognizing $L(\aut_{\varphi_i,\tval}) \setminus \aut_{\varphi_i,\tval'}$, where $\tval'$ is the next-larger truth value after $\tval$. Here, the set difference is implemented by taking the intersection with the complement. 
Note that, unlike $\aut_{\varphi_i,\tval}$, which accepts $w$ if $\rldleval(w,\varphi_i)$ is greater or equal than $\tval$, the automaton~$\aut^=_{\varphi_i,\tval}$ only accepts~$w$ if $\rldleval(w,\varphi_i)$ is equal to $\tval$.
 	
 	Now, we combine the auxiliary automata to construct $\aut_{\varphi,\tval}$ so that it recognizes the language
 	\[\left(\bigcup_{\substack{\tval_0,\tval_1 \in \bool_4\\ \tval_0 \preceq \tval_1}} L(\aut^=_{\varphi_0,\tval_0}) \cap L(\aut^=_{\varphi_1,\tval_1})
 	\right) \cup 
 	\left(\bigcup_{\substack{\tval_0,\tval_1 \in \bool_4\\ \tval_0 \succ \tval_1 \preceq \tval}} L(\aut^=_{\varphi_0,\tval_0}) \cap L(\aut^=_{\varphi_1,\tval_1})\right).\]
 	The left part covers all cases in which the implication evaluates to $1111$. Due to $1111 \succeq \tval$ for every $\tval$, this part is equal for all automata. The right part covers all other cases, which depend on $\tval$.
 	So, we obtain the desired automaton by applying the closure properties.

Now, we turn to the constructions for the guarded temporal operators, which are more involved as we have to combine automata for guards, for the tests occurring in them, and for formulas. We follow the general construction presented by Faymonville and Zimmermann~\cite{FaymonvilleZimmermann17}, but generalize it to deal with the richer truth values underlying the robust semantics. 
Intuitively, for a diamond formula~$\ddiamonddot{r}\varphi'$ we construct an automaton that checks whether its input has a prefix that matches~$r$ (with the required degree) such that the corresponding suffix satisfies $\varphi'$ (again with the required degree). While checking for the prefix matching $r$, the automaton we construct also has to check that the tests in $r$ are satisfied (with the required degree). To this end, we use alternation to spawn copies of the automata we have already constructed for the tests. Finally, the construction for a box formula will be dual, but technically slightly more involved due to the robust semantics of the box-operator and the case distinctions involved in its definition.

First, we consider a diamond formula~$\varphi = \ddiamonddot{r}\varphi'$ with tests~$\theta_1, \ldots, \theta_n$ in $r$. Recall that we have $\rldleval(w, \varphi) = b_1 b_2 b_3 b_4$ where $b_i = \max_{j \in \robRexp_i(w,r)} \rldleval_i(\suff{w}{j}, \varphi')$ for all~$i \in \set{1,2,3,4}$. Thus, $\aut_{\varphi, \tval}$ has to accept $w$ if and only if $w$ has an $r$-match of degree~$\tval$ that is $\varphi'$-satisfying of degree~$\tval$.

By induction hypothesis, we have automata~$\aut_{\varphi',\tval}$ and $\aut_{\theta_j, \tval}$ for every test~$\theta_j$ in $r$. Also, we have an $\epsilon$-NFA with tests~$\autr_r$ equivalent to~$r$ due to Lemma~\ref{lemma-guards2automata}. We combine these automata to the alternating automaton~$\aut_{\varphi,\tval}$ by non-deterministically guessing a (finite) run of $\autr_r$. Whenever the run encounters a final state, the automaton may jump to the initial state of $\aut_{\varphi',\tval}$ and then behave like that automaton. Furthermore, while simulating $\autr_r$, $\aut_{\varphi,\tval}$ also has to verify that the tests occurring along the guessed run of~$\autr_r$ hold true by universally spawning copies of $\aut_{\theta_j,\tval}$ each time a state labeled with $\theta_j$ is traversed. 
Since we do not allow for $\epsilon$-transitions in alternating automata, we have to eliminate the $\epsilon$-transitions of $\autr_r$ during the construction of~$\aut_{\varphi, \tval}$. Finally, in order to prevent~$\aut_{\varphi, \tval}$ from simulating~$\autr_r$ ad infinitum, the states copied from $\autr_r$ are all rejecting, which forces the jump to $\aut_{\varphi, \tval}$ to be executed eventually.

Formally, we define $\aut_{\varphi,\tval} = (Q, \pow{P}, q_\init, \delta, \accpar)$ where
\begin{itemize}
	\item $Q$ is the disjoint union of the sets of states of the automata~$\autr_r$, $\aut_{\theta_j, \tval}$ for $j \in \set{1,\ldots,n}$, and $\aut_{\varphi', \tval}$,
	\item $q_\init$ is the initial state of $\autr_r$,
	\item $\accpar$ is the union of the accepting states of $\aut_{\theta_j, \tval}$ for $j \in \set{1,\ldots,n}$, and of $\aut_{\varphi', \tval}$,
\end{itemize}
and where $\delta$ is defined as follows: if $q$ is a state of $\autr_r$, then\label{elim}
\[\delta(q, A) = \begin{cases}
	\bigvee_{q' \in Q^r}\bigvee_{\pi \in \Pi(q,q')} \bigvee_{p \in \delta^r (q', A)} (p \wedge \bigwedge_{\theta_j \in \marking(\pi)} \delta^j(q_\init^j, A))& \\
	 \hspace{4.cm}\vee &\\
	\bigvee_{q' \in F^r}\bigvee_{\pi \in \Pi(q,q')} (\delta'(q_\init', A) \wedge \bigwedge_{\theta_j \in \marking(\pi)} \delta^j(q_\init^j, A))  & \\
	\end{cases}\]
	where $q_\init^j$ and $q_\init'$ are the initial states of $\aut_{\theta_j,\tval}$ and $\aut_{\varphi',\tval}$, respectively, where $Q^r$ ($F^r$) is the set of (final) states of $\autr_r$, where $\delta_r$, $\delta'$, and $\delta_j$ are the transition functions of $\autr_r$, $\aut_{\varphi',\tval}$, and $\aut_{\theta_j,\tval}$ respectively, and where the sets $\Pi(q,q')$ of $\epsilon$-paths are induced by $\autr_r$. Furthermore, for states~$q$ of $\aut_{\varphi',\tval}$, we define $\delta(q, A) = 	\delta'(q, A) $ and for states~$q$ of $\aut_{\theta_j,\tval}$ we define $\delta(q, A) = \delta^j(q, A)$. The resulting automaton accepts $w$ if and only if $w$ has at least one  $r$-match of degree~$\tval$ that is $\varphi'$-satisfying of degree~$\tval$ (cf.~\cite{FaymonvilleZimmermann17}, where the correctness is proven for plain $\ldl$).

Finally, we consider the box operator, which requires the most involved construction due to the case distinction defining the $b_i'$ and the subsequent maximization to obtain the~$b_i$. First, recall that the semantics of the box operator is not dual to the semantics of the diamond operator. Nevertheless, the dual construction of the one for the diamond operator is useful as a building block. We first present this construction before tackling the construction for the box operator.

In the dual construction, one interprets $\autr_r$ as a universal automaton whose transitions are ignored if the test on the source of the transition fails. Furthermore, each visit to a final state spawns a copy of the automaton~$\aut_{\varphi',\tval}$, as every $r$-match has to be $\varphi'$-satisfying. Thus, the states of $\autr_r$ are now accepting, as all $r$-matches have to be considered, and the automata for the tests are dualized in order to check for the failure of the test.

Formally, this approach yields the weak alternating Büchi automaton~$(Q, \pow{P}, q_\init, \delta, \accpar)$ where $Q$ and $q_\init$ are as above, where
\[\delta(q, A) = \begin{cases}
	\bigwedge_{q' \in Q^r}\bigwedge_{\pi \in \Pi(q,q')} \bigwedge_{p \in \delta^r (q', A)} (p \vee \bigvee_{\theta_j \in \marking(\pi)} \overline{\delta^j}(q_\init^j, A))& \\
	 \hspace{4.cm}\wedge & \\
	\bigwedge_{q' \in F^r}\bigwedge_{\pi \in \Pi(q,q')} (\delta'(q_\init', A) \vee \bigvee_{\theta_j \in \marking(\pi)} \overline{\delta^j}(q_\init^j, A))  & \\
	\end{cases}\]
for states~$q$ of $\autr_r$, where $q_\init^j$ and $q_\init'$ are the initial states of $\aut_{\theta_j,\tval}$ and $\aut_{\varphi',\tval}$, respectively.
Here, we use the fact that the final states of $\autr_r$ have no outgoing transitions, which implies that no match is missed by contracting an $\epsilon$-path.
Additionally, we define $\delta(q, A) = 	\delta'(q, A) $ for states~$q$ of $\aut_{\varphi',\tval}$, and $\delta(q, A) = \overline{\delta^j}(q, A)$ for states~$q$ of $\aut_{\theta_j,\tval}$.
Finally, the set of accepting states is the union of the set of all states of $\autr_r$, the rejecting states of the $\aut_{\theta_j,\tval}$, and the accepting states of $\aut_{\varphi',\tval}$. Recall that dualizing the transition relation and swapping accepting and rejecting states of the automata~$\aut_{\theta_j,\tval}$ amounts to complementation. This allows terminating runs of $\autr_r$ if a test does not hold true. The resulting automaton accepts a trace if and only if every $r$-match of degree~$\tval$ is $\varphi$-satisfying of degree~$\tval$ (cf.~\cite{FaymonvilleZimmermann17}, where the correctness is proven for plain $\ldl$)).

Now, we fix~$\varphi = \bboxdot{r}\varphi'$. Recall that we have $\rldleval(w, \varphi) = b_1 b_2 b_3 b_4$ with $b_i = \max\set{b_1', \ldots, b_i'}$ for some bits~$b_i'$. The maximization is easily implemented using the Boolean closure properties of alternating automata provided we have automata checking that some bit~$b_i'$ is equal to one. Two cases are trivial: Indeed, we have $b_1' = 1$ if and only if every $r$-match of degree~$1111$ is $\varphi$-satisfying of degree~$1111$. This property is checked by the dual automaton constructed above.
Furthermore, $b_4' = 1$ if and only if $ \rldleval(w, \ddiamonddot{r}\varphi') \succeq 0001$ or if there is no $r$-match of degree~$0001$. The former language is recognized by $\aut_{\ddiamonddot{r}\varphi',0001}$, the latter one by an automaton we construct below. We then combine these two automata to obtain $\aut_{\varphi,0001}$.

Hence, it remains to consider $b_2'$ and $b_3'$, which are both defined by a case distinction over the number of $r$-matches of the trace. These case distinctions are implemented using alternation. To this end, we first show how to test for the three cases, i.e., we argue that the following languages are recognizable by weak alternating Büchi automata, where $i \in \set{1,2,3,4}$:
\begin{enumerate}
	\item $L_i^\emptyset(r) = \set{w \in (\pow{P})^\omega \mid \size{\robRexp_i(w,r)} = 0}$.
	\item $L_i^f(r) = \set{w \in (\pow{P})^\omega \mid 0 < \size{\robRexp_i(w,r) } <  \infty}$.
	\item $L_i^\infty(r) = \set{w \in (\pow{P})^\omega \mid \size{\robRexp_i(w,r)} = \infty}$.
\end{enumerate}
Let $\theta_1, \ldots, \theta_n$ be the tests in $r$. By induction hypothesis, we have weak alternating Büchi automata~$\aut_{\theta_j, \tval}$ for every $\theta_j$ and every truth value~$\tval$. 

The first case is already solved, as we have, for each $i \in \set{1,2,3,4}$, $\robRexp_i(w,r) = \emptyset$ if and only if $\rldleval_i(w,\ddiamonddot{r}\ttrue ) = 0$, which is in turn equivalent to $\rldleval(w, \ddiamonddot{r}\ttrue) \prec   \itotruthvalue{i} $, i.e., the complement of the automaton $\aut_{\ddiamonddot{r}\ttrue, \itotruthvalue{i} }$ recognizes $L_i^\emptyset(r)$.

Next, we construct an automaton for the language~$L_i^\infty(r)$. Then, the automaton for~$L_i^f(r)$ is obtained as the intersection of the complement automata for the other two languages (for the given $r$ and $i$). Thus, we need to construct an automaton that accepts~$w$ if there are infinitely many $r$-matches of degree~$\itotruthvalue{i}$.

The construction of an automaton for $L_i^\infty(r)$ is more involved than the previous one, as the automaton~$\autr_r$ checking for matches with $r$ is non-deterministic. Nevertheless, we show that standard arguments about non-deterministic automata still yield the desired result. 
Intuitively, the automaton recognizing $L_i^\infty(r)$ has to determine whether infinitely many prefixes of $w$ are accepted by $\autr_r$ (while dealing with tests appropriately). This is implemented as follows: we start with $\autr_r$, eliminate $\epsilon$-transitions as in the case of diamond formulas on Page~\pageref{elim} (i.e., resulting in a non-deterministic choice ranging over all $\epsilon$-paths, each universally spawning a copy of $\aut_{\theta_j,  \itotruthvalue{i} }$ for each test~$\theta_j$ encountered along the corresponding $\epsilon$-path). Furthermore, on each transition, the run branches universally into a disjoint copy of this structure where all states are rejecting. Also, all states that have a transition leading to a state that was accepting in $\autr_r$ are equipped with a new transition leading to a fresh accepting sink state (with the same transition label and the same test automata being spawned).
Finally, the states of the original copy are all accepting. Thus,  the resulting automaton is weak.

Using König's Lemma, one can show that the resulting automaton accepts $w$ if and only if $w \in L_i^\infty(r)$. We leave the details to the industrious reader and just note that we have now constructed all automata we need to capture the cases in the case distinction.

Extending the construction just presented also allows us to construct an automaton that accepts a trace~$w$ if and only if it has infinitely many $\varphi'$-satisfying $r$-matches (both of degree~$\itotruthvalue{i}$). To this end, the copies spawned to check for matches are not equipped with transitions leading to an accepting sink, but with transitions leading to the initial state of $\aut_{\varphi',  \itotruthvalue{i} }$ to check for satisfaction of $\varphi'$. 
Similarly, we can construct an automaton that accepts a trace~$w$ if and only if it has infinitely many $r$-matches of degree~$\itotruthvalue{i}$ that are \emph{not} $\varphi'$-satisfying of degree~$\itotruthvalue{i}$.
Again, we leave the details to the reader.

These automata also allow us to construct an automaton that accepts a trace~$w$ if and only if $\robRexp_i(w,r)$ is infinite and almost all $r$-matches in $\robRexp_i(w,r)$ are $\varphi'$-satisfying (both of degree~$\itotruthvalue{i}$). This automaton is obtained by taking the automaton checking for infinitely many $\varphi'$-satisfying $r$-matches (both of degree~$\tval$) and intersecting it with the complement of the one checking for infinitely many $r$-matches that are not $\varphi'$-satisfying of degree~$\itotruthvalue{i}$.

Combining the automata checking the cases of the case distinction with the automata checking for $\varphi'$-satisfiability yields the desired automata for $b_2'$ and $b_3'$: A case distinction is easily implemented using the Boolean closure properties and all necessary auxiliary automata have been constructed above.

It is straightforward to verify that each automaton we construct  is weak.
Hence, it remains to argue that $\aut_{\varphi, \tval}$ is of linear size in $\size{\varphi}$. To this end, we say that a weak alternating Büchi automaton~$(Q', \Sigma, q_\init', \delta', \accpar')$ is a subautomaton of $(Q, \Sigma, q_\init, \delta, \accpar)$ if $Q' \subseteq Q$, $\delta'(q, A) = \delta(q,A)$ for every $q \in Q'$ and every $A \in \Sigma$, and $\accpar' = Q' \cap \accpar $. 

Inspecting the construction above shows that an automaton~$\aut_{\varphi, \tval}$ is built from automata for immediate subformulas (w.r.t.\ all truth values if necessary), a test automaton (if applicable), and a constant number of fresh states. Furthermore, if formulas share subformulas, then the construction can also share these subautomata. Hence, we obtain the desired linear upper bound on the size of $\aut_{\varphi, \tval}$.
\end{proof}

It is not straightforward that the equivalent non-deterministic Büchi automata as in Theorem~\ref{theorem-translation-oldcor} can be constructed efficiently, as the definition of the alternating automaton involves $\epsilon$-paths of arbitrary length. However, these can be restricted to paths of bounded lengths, as for every $\epsilon$-path there is one that has the same tests, but no cycles between them. 
Then, as it is done for the similar construction for $\pldl$~\cite{FaymonvilleZimmermann17}, one can show that the Büchi automata can be constructed on-the-fly in polynomial space, relying on the breakpoint construction~\cite{MH} turning an alternating Büchi automaton into a non-deterministic one. This is sufficient for our applications later on.

Furthermore, as non-deterministic Büchi automata can be translated into deterministic parity automata (see, e.g., \cite{GraedelThomasWilke02} for definitions), we obtain the following corollary of Theorem~\ref{theorem-translation-oldcor}.

\begin{corollary}
\label{corollary:rldl2detparity}
	Let $\varphi$ be an $\rldl$ formula, $n = \size{\varphi}$, and $\tval \in \bool_4$.
 There is a deterministic parity automaton~$\autp_{\varphi, \tval}$ with $2^{2^{\bigo(n)}}$ states and with $2^{\bigo(n )}$ colors recognizing the language~$\set{w \in (\pow{P})^\omega \mid \rldleval(w,\varphi) \succeq \tval}$.
\end{corollary}

The translations from logic to automata just proven allow us to study the expressiveness of $\rldl$ and solve its model checking and synthesis problem.

\subsection{Expressiveness}
\label{subsec-rldl-expressiveness}
In this section, we compare the expressiveness of $\rldl$ to that of $\rltl$ and $\ldl$. Following Tabuada and Neider~\cite{TabuadaNeider16} we focus on the fragment~$\rltl$ without next, until and release operators. While the next and until operator could be added easily, the robust semantics of the release operator is incompatible with our definition of the robust box operator. 
 It turns out as expected, that $\rldl$ subsumes $\rltl$ and $\rldl$. Conversely, every $\rldl$ formula~$\varphi$ can be translated into four $\ldl$ formulas~$\varphi_1, \ldots, \varphi_4$ that encode $\varphi$ in the following sense: We have $\rldleval_i(w,\varphi) = \ldleval(w, \varphi_i)$ for every $w$. 

\begin{theorem}
\label{thm-fragments}
Both $\rltl$ and $\ldl$ can be embedded into $\rldl$. 
\end{theorem}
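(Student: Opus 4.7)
The plan is to construct two inductive translations, one witnessing each embedding. For the embedding of $\rltl$ I would map every $\rltl$ operator to the corresponding $\rldl$ operator guarded by $\ttrue^*$. Concretely, set $(\Diamonddot\varphi)^{\dagger} = \ddiamonddot{\ttrue^*}\varphi^{\dagger}$ and $(\Boxdot\varphi)^{\dagger} = \bboxdot{\ttrue^*}\varphi^{\dagger}$, leaving propositions, Boolean connectives, negation, and implication unchanged. The key observation is that $\robRexp_i(w, \ttrue^*) = \nats$ for every $w$ and every $i$, because $\ttrue^*$ contains no tests and matches every finite prefix. Plugging this in, the $\rldl$ semantics of $\ddiamonddot{\ttrue^*}\varphi^{\dagger}$ immediately collapses to the $\rltl$ semantics of $\Diamonddot\varphi^{\dagger}$. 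For the box operator one must additionally observe that the auxiliary values satisfy $b_1' \le b_2' \le b_3' \le b_4'$ in this setting (an easy consequence of the nested quantifiers and the fact that each bit of a value in $\bool_4$ dominates the next), so the correcting maximum $b_i = \max\set{b_1',\ldots,b_i'}$ collapses to $b_i'$ and matches the $\rltl$ semantics of $\Boxdot\varphi^{\dagger}$. An induction on the formula structure then yields $\rltleval(w,\varphi) = \rldleval(w,\varphi^{\dagger})$.

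For the embedding of $\ldl$ into $\rldl$, I would strengthen the inductive invariant and require that $\rldleval(w,\varphi^{\dagger}) \in \set{0000, 1111}$, with $\rldleval(w,\varphi^{\dagger}) = 1111$ iff $\ldleval(w,\varphi) = 1$. Atomic propositions, Boolean connectives, negation, and implication are straightforward because $\rldl$ negation already projects onto $\set{0000,1111}$ and the remaining connectives preserve the Boolean invariant. For the regular-expression guards, I would translate each test $\theta?$ by $\theta^{\dagger}?$; the Boolean invariant on $\theta^{\dagger}$ then implies $\robRexp_i(w, r^{\dagger}) = \Rexp(w,r)$ for every $i$ and every guard $r$, which is the intermediate fact used in both operator cases. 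Setting $(\ddiamond{r}\varphi)^{\dagger} = \ddiamonddot{r^{\dagger}}\varphi^{\dagger}$, the invariant forces each bit $b_i$ of $\rldleval(w, \ddiamonddot{r^{\dagger}}\varphi^{\dagger})$ to equal $1$ precisely when some $j \in \Rexp(w,r)$ satisfies $\ldleval(\suff{w}{j},\varphi) = 1$, so all four bits coincide and the output is Boolean.

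The subtle case, and the one I expect to be the main obstacle, is the box: unlike the diamond, the formula $\bboxdot{r^{\dagger}}\varphi^{\dagger}$ need not produce a value in $\set{0000,1111}$, precisely because of the monotonicity patch already illustrated by Example~\ref{example-monotonicityviolation}. For instance, if $r^{\dagger}$ has finitely many matches and some of them satisfy $\varphi^{\dagger}$ while others do not, then the auxiliary values become $b_1' = b_2' = 0$ while $b_3' = b_4' = 1$, yielding the non-Boolean result $0011$. To repair this I would set $(\bbox{r}\varphi)^{\dagger} = \neg\neg\bboxdot{r^{\dagger}}\varphi^{\dagger}$, which forces the output into $\set{0000, 1111}$ via the $\rldl$ negation. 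It then remains to verify that $\rldleval(w, \bboxdot{r^{\dagger}}\varphi^{\dagger}) = 1111$ iff $b_1 = b_1' = 1$ iff every $j \in \robRexp_1(w, r^{\dagger}) = \Rexp(w,r)$ satisfies $\rldleval_1(\suff{w}{j},\varphi^{\dagger}) = 1$, which by the inductive Boolean invariant on $\varphi^{\dagger}$ is equivalent to $\ldleval(w, \bbox{r}\varphi) = 1$. A straightforward induction on formulas and regular expressions then closes the embedding.
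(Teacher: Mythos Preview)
Your proof is correct, and the $\rltl$ half coincides with the paper's argument. For the $\ldl$ embedding, however, you take a genuinely different route. The paper does not insist that the translated formula evaluate into $\set{0000,1111}$; it only shows that the \emph{first bit} of the robust semantics captures the classical one, i.e., $\ldleval(w,\varphi) = \rldleval_1(w,\varphi')$, where $\varphi'$ is obtained by dotting the temporal operators and rewriting each implication~$\psi_1 \to \psi_2$ as $\neg\psi_1 \vee \psi_2$. This avoids your double-negation trick for the box entirely, at the price of a weaker (bitwise) embedding statement and of having to rewrite implications, since the robust implication on non-Boolean values would not match the classical one on the first bit alone. Your approach, by maintaining the stronger Boolean invariant, handles implication for free but must patch the box with $\neg\neg$. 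Both arguments are short and correct; yours yields a two-valued embedding, the paper's a more direct syntactic one.
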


\begin{proof}
Let us first embed $\rltl$ into $\rldl$ by showing that the syntactic embedding of $\ltl$ into $\ldl$ extends to robust semantics. Recall that $\rltl$ only has temporal operators~$\Diamonddot$ and $\Boxdot$, which we replace by $\ddiamonddot{\ttrue^*}$ and $\bboxdot{\ttrue^*}$. Note that $\robRexp_i(w,\ttrue^*) = \nats$ holds true for every $w$ and every $i$. Hence, a straightforward induction shows that the resulting $\rldl$ formula is equivalent to the original $\rltl$ formula. In particular, only the first case in the case distinctions defining the semantics of the robust box operator of $\rldl$ is used, which mimics the definition of the semantics of the always operator in $\rltl$.

Conversely, using a straightforward induction over the structure of $\ldl$ formulas, we can show that $\ldleval(w, \varphi) = \rldleval_1(w, \varphi')$ for every~$w$ and every $\ldl$ formula~$\varphi$, where $\varphi'$ is the $\rldl$ formula obtained from $\varphi$ by replacing each $\ddiamond{r}$ with $\ddiamonddot{r}$, each $\bbox{r}$ with $\bboxdot{r}$, and each implication $\psi_1 \to \psi_2$ with $\lnot \psi_1 \lor \psi_2$.
This shows that $\ldl$ can be embedded into $\rldl$.
Note, however, that we need to replace each implication with a negation and a disjunction.
This is necessary to account for the more complex definition of implications in $\rltl$/$\rldl$.
\end{proof}

As $\ltl$ is a semantic fragment of $\ldl$, we immediately obtain that $\ltl$ can be embedded into $\rldl$ and, thus, $\rldl$ inherits the lower bounds of $\ltl$. 

Our next theorem states that $\ldl$ and $\rldl$ are of equal expressiveness. 
The direction from $\ldl$ to $\rldl$ was shown in Theorem~\ref{thm-fragments}, hence we focus on the other one.
Following Tabuada and Neider~\cite{TabuadaNeider16}, we construct for every $\rldl$ formula~$\varphi$ four $\ldl$ formulas~$\varphi_1, \ldots, \varphi_4$ encoding $\varphi$ as explained above. The construction relies on Theorem~\ref{theorem-translation-oldcor}, unlike the analogous result translating robust $\ltl$ directly into $\ltl$~\cite{TabuadaNeider16}.

\begin{theorem}\label{thm:rLDL-LDL-equally-expressive}
$\ldl$ and $\rldl$ are equally expressive and the translations are effective.
\end{theorem}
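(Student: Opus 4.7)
The plan is to prove the direction from $\rldl$ to $\ldl$, since the converse embedding is supplied by Theorem~\ref{thm-fragments}. Following the notion of encoding introduced just before the theorem, for each $\rldl$ formula~$\varphi$ I aim to construct four $\ldl$ formulas~$\varphi_1, \varphi_2, \varphi_3, \varphi_4$ with $\ldleval(w, \varphi_i) = \rldleval_i(w, \varphi)$ for every trace~$w$ and every $i \in \set{1,2,3,4}$.

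The construction proceeds bit by bit. For each~$i$, I first identify the threshold $\tval_i \in \bool_4$ such that the $i$-th bit of a truth value is~$1$ precisely when the value is at least~$\tval_i$; concretely, $\tval_4 = 0001$, $\tval_3 = 0011$, $\tval_2 = 0111$, and $\tval_1 = 1111$. By the total order on $\bool_4$, we have $\rldleval_i(w, \varphi) = 1$ if and only if $\rldleval(w, \varphi) \succeq \tval_i$. Hence the Büchi automaton~$\autb_{\varphi, \tval_i}$ supplied by Theorem~\ref{theorem-translation-oldcor} recognizes exactly the language $\set{w \mid \rldleval_i(w, \varphi) = 1}$, and this construction is effective.

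To finish, I invoke the classical expressive-completeness result that $\ldl$ captures exactly the $\omega$-regular languages (equivalently, the languages recognized by non-deterministic Büchi automata); see, e.g., De Giacomo and Vardi. Applying the effective translation from Büchi automata into $\ldl$ to each~$\autb_{\varphi, \tval_i}$ produces an $\ldl$ formula~$\varphi_i$ whose models are exactly the traces accepted by $\autb_{\varphi, \tval_i}$, i.e., exactly those $w$ with $\rldleval_i(w, \varphi) = 1$. This is the required equivalence. Combining the two directions yields that $\ldl$ and $\rldl$ are equally expressive, and the overall translation is effective because every ingredient is: the construction of $\autb_{\varphi, \tval_i}$ from Theorem~\ref{theorem-translation-oldcor}, the choice of the thresholds~$\tval_i$, and the Büchi-to-$\ldl$ translation.

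The main obstacle is conceptual rather than technical: one must recognize that although $\rldl$ is five-valued while $\ldl$ is Boolean, the five truth values are linearly ordered, so each bit of the robust value can be captured as a threshold property and then realized as a single $\omega$-regular language. Once this is seen, the result is an immediate consequence of Theorem~\ref{theorem-translation-oldcor} and the known expressive completeness of $\ldl$ for the $\omega$-regular languages. Note that no complexity claim is made in the theorem; the translation may involve the standard blow-up inherent in converting Büchi automata back into $\ldl$, but this is irrelevant for equi-expressiveness.
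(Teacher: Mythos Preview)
Your proof is correct and follows essentially the same approach as the paper: use Theorem~\ref{theorem-translation-oldcor} to show that each bit-language $\set{w \mid \rldleval_i(w,\varphi)=1}$ is $\omega$-regular, and then invoke the expressive completeness of $\ldl$ for the $\omega$-regular languages to obtain the $\ldl$ formulas~$\varphi_i$. Your explicit identification of the thresholds~$\tval_i$ makes the link between Theorem~\ref{theorem-translation-oldcor} and the bit-languages more transparent than the paper's terse version, but the argument is the same.
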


\begin{proof}
As argued above, we only have to consider the direction from $\rldl$ to $\ldl$. Hence, fix an $\rldl$ formula~$\varphi$ and $i \in \set{1,2,3,4}$. Due to Theorem~\ref{theorem-translation-oldcor}, $\set{w \in (\pow{P})^\omega \mid \rldleval_i(w, \varphi) = 1}$ is $\omega$-regular. Hence, due to $\ldl$ being equi-expressive to the $\omega$-regular languages~\cite{Vardi11}, there is also an $\ldl$ formula~$\varphi_i$ with $\ldleval(w, \varphi_i) = 1$ if and only if $\rldleval_i(w, \varphi) = 1$. Hence, $\varphi_i$ has the desired properties.
\end{proof}

Let us analyze the complexity of the translation in more detail.
A non-deterministic Büchi automaton for an $\rldl$ formula is in general of exponential size and has to be determinized before it can be translated into $\ldl$ (say with max-parity acceptance), which incurs a second exponential blowup.
The resulting deterministic automaton can then be translated into $\ldl$ with an exponential blowup.
The resulting formula expresses that the unique run on the input satisfies the following properties: there is an even color~$c$ and a position~$j$ such that after $j$ no larger color appears, $c$ appears at least once, and every time color~$c$ appears, it is not the last occurrence of $c$ (see~\cite{WeinertZimmermann16} for a similar construction).
All three properties are easily expressed in $\ldl$ by constructing guards~$r_{q,q'}$ that match infixes such that processing the infix starting in $q$ leads to the state~$q'$.
The number of subformulas is polynomial, but the guards~$r_{q,q'}$ have in general exponential size (both measured in the size of the doubly-exponential deterministic automaton).
Hence, the full construction incurs a triply-exponential blowup.
It is open whether this is unavoidable. 
On a more positive note, the resulting $\ldl$ formula is test-free, i.e., it does not contain tests in its guards.

We leave the question of whether there are non-trivial lower bounds on the translation for future work. For the special case of translating~$\rltl$ into $\ltl$ mentioned above, there is only a linear blowup. This translation was presented by Tabuada and Neider~\cite{TabuadaNeider16}, but they only claimed an exponential upper bound. However, closer inspection shows that it is linear if the size of formulas is measured in the number of distinct subformulas, not the length of the formula.

\subsection{Model Checking and Synthesis}
\label{subsec-rldl-modelchecking}

Theorem~\ref{thm:rLDL-LDL-equally-expressive} immediately provides solutions for typical applications of $\rldl$, such as model checking and synthesis, by reducing the problem from the domain of $\rldl$ to that of $\ldl$. However, the price to pay for this approach is a triply-exponential blow-up in the size of the resulting $\ldl$ formula, which is clearly prohibitive for any real-world application. For this reason, we now develop more efficient model checking and synthesis techniques that are based on our direct translation of $\rldl$ into automata (Theorem~\ref{theorem-translation-oldcor}).

We begin with the $\rldl$ model checking checking problem, which is defined as follows.

\begin{problem} \label{prob:rLDL-model-checking}
Let $\varphi$ be an $\rldl$ formula, $\sys$ a transition system, and let $\tval \in \bool_4$. Does $\rldleval(\trace(\rho), \varphi) \succeq \tval$ hold true for all paths~$\rho \in \Pi_\sys$?
\end{problem}

Using the translation of $\rldl$ formulas to weak alternating Büchi automata and subsequently to non-deterministic Büchi automata, Problem~\ref{prob:rLDL-model-checking} can be solved as follows:
\begin{enumerate}
	\item \label{item:rLDL-model-checking:1} Translate the transition system $\sys$ into a non-deterministic Büchi automaton $\autb_\sys$ with $L(\autb_\sys) = \{ \trace(\rho) \in (\pow{P})^\omega \mid \rho \in \Pi_\sys \}$ in the usual way: $\autb_\sys$ has the same states as $\sys$, the transitions are $\{ (s, \lambda(s), s') \mid  (s, s') \in E \}$, and all states are accepting.
	\item \label{item:rLDL-model-checking:2} Construct the weak alternating Büchi automaton $\aut_{\varphi, \tval}$ accepting the language $\{ w \in (2^P)^\omega \mid \rldleval(w, \varphi) \succeq \tval \}$.
	\item \label{item:rLDL-model-checking:3} Complement $\aut_{\varphi, \tval}$ to obtain a weak alternating Büchi  automaton~$\overline{\aut_{\varphi, \tval}}$ accepting the language $\{ w \in (2^P)^\omega \mid \rldleval(w, \varphi) \prec \tval \}$.
	\item \label{item:rLDL-model-checking:4} Convert $\overline{\aut_{\varphi, \tval}}$ into an equivalent non-deterministic Büchi automaton $\overline{\autb_{\varphi, \tval}}$ and compute the product automaton $\autb$ with $L(\autb) = L(\autb_\sys) \cap L(\overline{\autb_{\varphi, \tval}})$ in the usual way.
	\item \label{item:rLDL-model-checking:5} Check whether $L(\autb) = \emptyset$ using a standard algorithm such as a nested depth-first search~\cite{BaierKatoen08}. The answer to Problem~\ref{prob:rLDL-model-checking} is ``yes'' if and only if $L(\autb) = \emptyset$.
\end{enumerate}

The number of states of the weak alternating Büchi automata in Step~\ref{item:rLDL-model-checking:2} and \ref{item:rLDL-model-checking:3} is both in $\bigo{(\size{\varphi})}$.
Thus, the number of states of the non-deterministic Büchi automaton $\overline{\autb_{\varphi, \tval}}$ constructed in Step~\ref{item:rLDL-model-checking:4} is in $2^{\bigo(\size{\varphi} )}$, and that of $\autb$ is in $\size{\sys} \cdot 2^{\bigo{(\size{\varphi} )}}$, where $\size{\sys}$ denotes the number of states of the transition system $\sys$ (cf.\ Theorem~\ref{theorem-translation-oldcor}).
Finally, the time required for the emptiness check in Step~\ref{item:rLDL-model-checking:5} is quadratic in the number of states of $\autb$ (linear in the number of $\autb$'s transitions).
Consequently, the $\rldl$ model checking problem can be solved in time $\size{\sys}^2 \cdot 2^{\bigo{(\size{\varphi} )}}$ and, hence, is in $\exptime$.

We now show that the problem is not only in $\exptime$, but that it is, in fact, $\pspace$-complete.
To this end, we leverage the exponential compilation property (see Theorem~\ref{theorem-translation-oldcor}) and standard on-the-fly techniques for checking emptiness of exponentially-sized Büchi automata~\cite{VardiWolper94}, which yield a $\pspace$ upper bound on the complexity of Problem~\ref{prob:rLDL-model-checking}.
The matching lower bound follows from the subsumption of $\ldl$ shown above, as model checking $\ldl$ is $\pspace$-complete.

\begin{theorem} \label{theorem:rldl-model-checking-complexity}
$\rldl$ model checking is $\pspace$-complete.
\end{theorem}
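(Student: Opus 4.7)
The proof splits into a $\pspace$ upper bound via an on-the-fly product construction with the automaton of Theorem~\ref{theorem-translation-oldcor}, and a matching lower bound obtained from the embedding of $\ldl$ into $\rldl$ established in Theorem~\ref{thm-fragments}.

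For the upper bound, fix $\varphi$, $\sys$, and $\tval \in \bool_4$, and let $n = \size{\varphi}$. Since $\bool_4$ is totally ordered, the instance is positive iff no path $\rho \in \Pi_\sys$ yields $\rldleval(\trace(\rho),\varphi) \prec \tval$; for $\tval = 0000$ this is trivially the case, so assume $\tval \neq 0000$. We therefore need a non-deterministic Büchi automaton recognising the complement language $L = \{w \mid \rldleval(w,\varphi) \prec \tval\}$. The construction behind Theorem~\ref{theorem-translation-oldcor} passes through an alternating parity automaton of $\bigo(n)$ states; since alternating parity automata are closed under complementation by dualising transitions and shifting priorities, we obtain a size-$\bigo(n)$ alternating parity automaton for $L$, which translates in the standard way to an NBA $\autb$ of size $2^{\bigo(n \log n)}$. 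Model checking then reduces to deciding whether the product of $\sys$ and $\autb$ admits an accepting lasso.

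The crucial point is that $\autb$ is not materialised: as noted after Theorem~\ref{theorem-translation-oldcor}, the construction is performed on-the-fly in polynomial space. Each state of $\autb$ admits a polynomial-size description in terms of the underlying linear-size alternating automaton, and its transition relation is locally computable, so a lasso can be guessed step by step while only a polynomial amount of working space is used. A standard nested-reachability argument (or a direct appeal to Savitch's theorem) then places the whole procedure in $\pspace$. The main technical obstacle lies in verifying that the case distinctions in the semantics of $\bboxdot{r}$, which hinge on finiteness properties of match sets, do not break on-the-fly constructibility; as suggested by the sketch of Theorem~\ref{theorem-translation-oldcor}, these finiteness checks are absorbed once and for all into the automaton structure rather than into its runtime behaviour, so locality is preserved.

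For the lower bound, Theorem~\ref{thm-fragments} yields, for every $\ldl$ formula $\psi$, an $\rldl$ formula $\psi'$ satisfying $\rldleval(w,\psi') = 1111$ iff $\ldleval(w,\psi) = 1$. Hence the $\ldl$ model checking instance $(\psi,\sys)$ reduces in logarithmic space to the $\rldl$ instance $(\psi',\sys,1111)$, transferring the well-known $\pspace$-hardness of $\ldl$ model checking to $\rldl$ and completing the proof.
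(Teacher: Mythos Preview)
Your proof is correct and follows essentially the same approach as the paper: the upper bound is obtained by complementing the alternating parity automaton underlying Theorem~\ref{theorem-translation-oldcor}, translating it into an exponential-size Büchi automaton, and checking emptiness of the product with $\sys$ on-the-fly in polynomial space via the standard Vardi--Wolper argument; the lower bound is obtained from the embedding of $\ldl$ into $\rldl$ from Theorem~\ref{thm-fragments}. The only cosmetic difference is that the paper traces hardness back one further step to $\ltl$ (via $\ldl$) rather than stopping at $\ldl$, but both choices are equally valid.
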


\begin{proof}
As shown above, the $\rldl$ model checking problem is in $\exptime$.
To show membership in $\pspace$, we use the observation that given two states of $\autb$, one can decide in polynomial space whether the second state is a successor of the first one (cf. Vardi and Wolper~\cite{VardiWolper94}).
Moreover, one can represent states of $\autb$ in polynomial space. This allows running the classical model checking algorithm, which searches for a counterexample, in polynomial space by guessing an appropriate run.

$\pspace$-hardness, on the other hand, follows immediately from the facts that (a) the $\ltl$ semantics is embedded in $\rldl$, via the embedding of $\ldl$ in $\rldl$ (see Theorem~\ref{thm:rLDL-LDL-equally-expressive}), and (b) $\ltl$ model checking is $\pspace$-hard~\cite{SistlaClarke85}.
Thus, $\rldl$ model checking is $\pspace$-hard as well.
\end{proof}

Before we move on to reactive synthesis, let us briefly remark that the model checking problem for $\rltl$ is defined slightly differently. Instead of asking whether $\rltleval(\lambda(\rho), \varphi) \succeq \tval$, Tabuada and Neider~\cite{TabuadaNeider16} fix a set $B \subseteq \bool_4$ and ask whether $\rltleval(\lambda(\rho), \varphi) \in B$ for all paths $\rho \in \Pi_\sys$.
However, this slightly more general problem can easily be answered by a simple adaptation of Step~\ref{item:rLDL-model-checking:2} of the procedure above: given a (finite) set $B \subseteq \bool_4$, we construct a weak alternating Büchi  automaton accepting the language $\{ w \in (2^P)^\omega \mid \rldleval(w, \varphi) \in B \}$ using Boolean combinations of the automata $\aut_{\varphi, \tval}$. Then, it is not hard to verify that this variant of the $\rldl$ model checking problem is also $\pspace$-complete.

Similar to model checking, the translation from $\rldl$ formulas to automata provides us with an effective means to synthesize reactive controllers from $\rldl$ specifications, i.e., for the following problem, where an $\rldl$ game has the form~$(\ggraph, \varphi, \beta)$ and Player~$0$ wins a play if and only if its trace~$w$ satisfies $\rldleval(w, \varphi) \ge \beta$.

\begin{problem} \label{prob:rLDL-synthesis}
Let $\game$ be an $\rldl$ game and $v$ a vertex.  Determine whether Player~$0$ has a winning strategy for $\game$ from $v$ and compute a finite-state winning strategy if so.
\end{problem}

Corollary~\ref{corollary:rldl2detparity} provides a straightforward way to solve Problem~\ref{prob:rLDL-synthesis} by reducing it to solving classical parity games (again, see~\cite[Chapter~2]{GraedelThomasWilke02} for an introduction to parity games) while the lower bound follows from the subsumption of $\ldl$.

\begin{theorem}\label{theorem-rldlgames}
Solving $\rldl$ games is $\twoexp$-complete.
\end{theorem}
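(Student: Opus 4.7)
The plan is to establish the upper bound via the exponential compilation property (Theorem~\ref{theorem-translation-oldcor}) and a standard game-automaton product, and to obtain the matching lower bound from the embedding of $\ldl$ into $\rldl$ (Theorem~\ref{thm-fragments}) together with the known $\twoexp$-hardness of $\ltl$ games.

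For the upper bound, I would first invoke Theorem~\ref{theorem-translation-oldcor} to obtain, for the given formula $\varphi$ and threshold $\tval$, a non-deterministic Büchi automaton $\autb_{\varphi,\tval}$ of size $2^{\bigo(n \log n)}$, where $n = \size{\varphi}$, accepting exactly those traces $w$ with $\rldleval(w,\varphi) \succeq \tval$. Using Piterman's determinization (or Safra's construction), I would convert $\autb_{\varphi,\tval}$ into an equivalent deterministic parity automaton $\autp_{\varphi,\tval}$ of doubly-exponential size in $n$ and with a doubly-exponential number of priorities. The $\rldl$ game $\game = (\ggraph, \varphi, \tval)$ then reduces to a parity game on the synchronized product of $\ggraph$ with $\autp_{\varphi,\tval}$: Player~$0$ has a winning strategy from $v$ in $\game$ if and only if she has a winning strategy from the corresponding initial vertex in this parity game. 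Since parity games can be solved in time polynomial in the size of the arena with the exponent depending on the number of priorities, and since the arena is of doubly-exponential size in $\size{\varphi}$ (and polynomial in $\size{\ggraph}$), this yields a $\twoexp$ algorithm overall. A finite-state winning strategy can be extracted from the positional winning strategy of the resulting parity game in the standard way (see~\cite[Chapter~2]{GraedelThomasWilke02}).

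For the lower bound, I would rely on Theorem~\ref{thm-fragments}, which embeds $\ldl$ (and hence $\ltl$) into $\rldl$: given any $\ldl$ formula $\psi$, there is an $\rldl$ formula $\psi'$ together with a truth value $\tval \in \bool_4$ such that $\rldleval(w,\psi') \succeq \tval$ iff $\ldleval(w,\psi) = 1$. Any instance of the $\ldl$ (in fact, $\ltl$) synthesis problem thus reduces in polynomial time to an $\rldl$ game. Since $\ltl$ synthesis is known to be $\twoexp$-hard~\cite{gameswithboxes}, so is solving $\rldl$ games.

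The main obstacle I anticipate is not the game-theoretic step itself but the careful accounting of the double blow-ups: the translation of Theorem~\ref{theorem-translation-oldcor} is already exponential, and determinization introduces a further exponential factor, so I would have to verify that the resulting deterministic parity automaton still falls within the sizes and priority counts required to keep the parity game solvable within $\twoexp$. Once this bookkeeping is done, both directions follow from results that are either proven earlier in the paper or are standard.
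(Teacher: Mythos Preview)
Your approach is essentially identical to the paper's: construct a deterministic parity automaton for the winning condition, take the synchronous product with the game graph, and solve the resulting parity game; the lower bound comes from the embedding of $\ltl$ (via $\ldl$) into $\rldl$. The paper packages the determinization step as Corollary~\ref{corollary:rldl2detparity} and cites the quasi-polynomial parity-game algorithm, but the structure is the same.

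One concrete correction: your stated bound of a \emph{doubly}-exponential number of priorities is wrong and would, as you rightly worry, push the algorithm out of $\twoexp$. Safra/Piterman determinization of a B\"uchi automaton with $m$ states yields a parity automaton with $2^{\bigo(m\log m)}$ states and only $\bigo(m)$ priorities. Since $m = 2^{\bigo(n\log n)}$ here, the deterministic parity automaton has $2^{2^{\bigo(n\log n)}}$ states but only $2^{\bigo(n\log n)}$ colors, exactly as in Corollary~\ref{corollary:rldl2detparity}. With this single-exponential priority count, the product parity game has doubly-exponentially many vertices and exponentially many colors, and standard parity-game algorithms (even the classical $n^{\bigo(d)}$ bound, let alone the quasi-polynomial one) stay within $\twoexp$. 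Once you fix this bookkeeping, your argument goes through as written.
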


\begin{proof}
We proceed in several steps constructing a reduction to a parity game.
\begin{enumerate}
	\item \label{item:rLDL-games:1} Construct the deterministic parity automaton $\autp_{\varphi, \tval}$ recognizing the language $\{ w \in (\pow{P})^\omega \mid \rldleval(w,\varphi) \succeq \tval \}$ according to Corollary~\ref{corollary:rldl2detparity}.
	\item \label{item:rLDL-games:2} Construct the product of $\autp_{\varphi, \tval} = (Q, \pow{P}, q_\init, \delta, \Omega)$ and the labeled game graph $\ggraph = (V_0, V_1, E, \lambda)$. This product is a classical (non-labeled) parity game $\game' = (\ggraph', \Omega')$ consisting of a game graph $\ggraph' = (V'_0, V'_1, E')$ with $V'_0 = V_0 \times Q$, $V'_1 = V_1 \times Q$, and $E' = \{ ((v, q), (v', \delta(q, \lambda(v)))) \mid (v, v') \in E \}$ as well as a parity winning condition $\Omega'$ with $\Omega'((v, q)) = \Omega(q)$ for each $(v, q) \in V'_0 \cup V'_1$.
	One obtains a play $\rho$ in the original game $\game$ from a play $\rho'$ in the extended game $\game'$ by projecting the vertices of $\rho'$ onto the first component.
	Thus, Player~0 wins a play $\rho'$ in~$\mathcal G'$ from a vertex $(v, q_\init)$ if and only if the trace $\trace(\rho)$ obtained from the corresponding play~$\rho$ in $\mathcal G$ satisfies $\rldleval(\trace(\rho), \varphi) \succeq \tval$.
	\item \label{item:rLDL-games:3} Solve the game $\mathcal G'$ with standard algorithms for parity games, e.g., the recent quasi-polynomial time algorithm~\cite{DBLP:conf/stoc/CaludeJKL017}. Finally, check and return whether Player~$0$ has a winning strategy from vertex $(v, q_\init)$ and return a finite-state winning strategy if so.
\end{enumerate}

The above reduction is a standard game reduction, whose correctness can be shown using standard techniques. In fact, it provides a $\twoexp$ algorithm to solve Problem~\ref{prob:rLDL-synthesis}: due to Corollary~\ref{corollary:rldl2detparity}, the deterministic parity automaton $\autp_{\varphi, \tval}$ constructed in Step~\ref{item:rLDL-games:1} has $2^{2^{\bigo{(\size{\varphi} )}}}$ states and $2^{\bigo{(\size{\varphi} )}}$ colors; consequently, the parity game $\game'$ of Step~\ref{item:rLDL-games:2} has $\size{V} \cdot 2^{2^{\bigo{(\size{\varphi} )}}}$ vertices and $2^{\bigo{(\size{\varphi} )}}$ colors; thus, using the quasi-polynomial algorithm for solving parity games~\cite{DBLP:conf/stoc/CaludeJKL017} results in a doubly-exponential algorithm for Problem~\ref{prob:rLDL-synthesis}.

On the other hand, the fact that $\rldl$ subsumes $\ldl$ and, hence, $\ltl$ immediately implies that solving $\rldl$ games is $\twoexp$-hard since solving $\ltl$ games is already $\twoexp$-hard~\cite{PnueliRosner89a}.
\end{proof}

\section{Towards Robust and Prompt Linear Dynamic Logic}
\label{sec-towardsrpldl}
In the previous sections, we studied robust $\ldl$, i.e., we combined robustness and increased expressiveness, and robust $\prompt$, i.e., we combined robustness and quantitative operators. The third combination of two aspects, i.e., quantitative operators and increased expressiveness, has been studied before~\cite{FaymonvilleZimmermann17}. For all three resulting logics, model checking and synthesis have the same complexity as for plain $\ltl$. 

Here, we consider the combination of all three extensions, obtaining the logic $\rpromptldl$, robust $\promptldl$.
As this logic is an extension of $\prompt$, it features negations only at the level of atomic propositions and does not allow implications.
The formulas of \rpromptldl\ are given by the grammar
\begin{align*}
\varphi &\cceq p \mid \neg p \mid \varphi \wedge \varphi \mid \varphi \vee \varphi \mid   \ddiamonddot{r} \varphi 
  \mid \bboxdot{r} \varphi \mid \promptddiamonddot{r} \varphi \\
    r & \cceq \phi \mid \varphi? \mid r+r \mid r \conc r \mid r^*
\end{align*} 
where $p$ again ranges over the atomic propositions in $P$ and $\phi$ over propositional formulas over $P$. Furthermore, the size of a formula is defined as for $\ldl$ and $\rldl$. 

The semantics are defined as expected: We add a bound~$k$ to the semantics of $\rldl$, which bounds the scope of the prompt diamond operator. 
\begin{itemize}
\item $\rpromptldleval(w,k,p) =\begin{cases}
	1111 &\text{if $p \in w(0)$,}\\
	0000 &\text{if $p \notin w(0)$,}
\end{cases}$ \quad\quad $\text{\normalfont\bfseries --} \,\,\, \rpromptldleval(w,k, \neg p) =\begin{cases}
	0000 &\text{if $p \in w(0)$,}\\
	1111 &\text{if $p \notin w(0)$,}
\end{cases}$

\item
$\rpromptldleval(w, k,\varphi_0 \wedge \varphi_1) = \min\set{\rpromptldleval(w, k,\varphi_0), \rpromptldleval(w,k,\varphi_1) }$, 
\item $\rpromptldleval(w,k, \varphi_0 \vee \varphi_1) = \max\set{\rpromptldleval(w,k, \varphi_0), \rpromptldleval(w,k,\varphi_1) }$, 

\item $\rpromptldleval(w,k,  \ddiamonddot{r}\varphi) = b_1 b_2 b_3 b_4$ where $b_i = \max\nolimits_{j \in \robpromptRexp_i(w,k,r)} \rpromptldleval_i(\suff{w}{j}, k,\varphi)$,

\item $\rpromptldleval(w,k, \bboxdot{r}\varphi) = b_1 b_2 b_3 b_4$ with $b_i = \max\set{b_1', \ldots, b_i'}$ for every $i \in \set{ 1,2,3,4}$, where
\begin{itemize}
	
	\item $b_1' = \min_{j \in \robpromptRexp_1(w,k,r)} \rpromptldleval_1(\suff{w}{j},k,\varphi)$,
	
	\item $b_2' = \begin{cases}
	\max_{j' \in \nats} \min_{j \in \robpromptRexp_2(w,k,r) \cap \set{j', j'+1, j'+2, \ldots}} \rpromptldleval_2(\suff{w}{j},k,\varphi)&\text{if $\size{\robpromptRexp_2(w,k,r)} = \infty$},\\
	
	\min_{j \in \robpromptRexp_2(w,k,r)} \rpromptldleval_2(\suff{w}{j},k,\varphi)&\text{if $0 < \size{\robpromptRexp_2(w,k,r)} < \infty$},\\
	
	1 &\text{if $\size{\robpromptRexp_2(w,k,r)} = 0$},
	\end{cases}$
	
	\item $b_3' = \begin{cases}
	\min_{j' \in \nats} \max_{j \in \robpromptRexp_3(w,k,r) \cap \set{j', j'+1, j'+2, \ldots}} \rpromptldleval_3(\suff{w}{j},k,\varphi)&\text{if $\size{\robpromptRexp_3(w,k,r)} = \infty$},\\
	
	\max_{j \in \robpromptRexp_3(w,k,r)} \rpromptldleval_3(\suff{w}{j},k,\varphi)&\text{if $0 < \size{\robpromptRexp_3(w,k,r)} < \infty$},\\
	
	1 &\text{if $\size{\robpromptRexp_3(w,k,r)} = 0$},
	\end{cases}$
	
	\item $b_4' =\begin{cases} \max_{j \in \robpromptRexp_4(w,k,r)} \rpromptldleval_4(\suff{w}{j},k,\varphi) &\text{if $\size{\robpromptRexp_4(w,k,r)} > 0$,}\\
1&\text{if $\size{\robpromptRexp_4(w,k,r)} = 0$,}
\end{cases}
$ and
\end{itemize}

\item $\rpromptldleval(w,k,  \promptddiamonddot{r}\varphi) = b_1 b_2 b_3 b_4$ where 
$b_i = \max\nolimits_{j \in \robpromptRexp_i(w,k,r) \cap \set{0, \ldots, k}} \rpromptldleval_i(\suff{w}{j}, k,\varphi)$.
\end{itemize}
Here, we adapt the definition of $\robRexp_i$ to account for the parameter~$k$: $\robpromptRexp_i(w,k,\varphi?) = \set{0}$ if $\rpromptldleval_i(w,k, \varphi)=1$ and $\robpromptRexp_i(w,k,\varphi?) =  \emptyset$ otherwise. All other cases are defined as before, but propagate the parameter~$k$.

$\rldl$ without negation and implication (and $\ldl$, for which negation and implication can be eliminated) and $\promptldl$ formulas can easily be translated into equivalent $\rpromptldl$ formulas.
$\prompt$, however cannot necessarily be translated into equivalent $\rpromptldl$ formulas, as the semantics of the release operator is not compatible with the semantics of $\rpromptldl$.

\begin{example}
\label{example-rpromptldl}
Consider the formula~$\bboxdot{((\neg t)^*\conc t\conc (\neg t)^*\conc t )^*} \promptddiamonddot{\ttrue^*} s $ and interpret $t$ as the tick of a clock and $s$ as a synchronization. Then, the formula intuitively expresses that every other tick of the clock is followed after a bounded number of steps (not ticks!) by a synchronization. 

More formally, the different degrees of satisfaction of $\varphi$ express the following possibilities, with respect to a given bound~$k$:
(i) every even clock tick is followed by a synchronization within $k$ steps;
(ii) almost every even clock tick is followed by a synchronization within $k$ steps;
(iii) infinitely many even clock ticks are followed by a synchronization within $k$ steps;
(iv) there is at least one even clock tick that is followed by a synchronization within $k$ steps.

This property can neither be expressed in (robust) $\ldl$ nor in (robust) $\prompt$. Also note that unlike for the similar formula from Example~\ref{example-rprompt}, the last two possibilities are not trivial, as we now only consider positions with an even clock tick and not all positions. 
\end{example}

In the previous sections, we have seen two approaches to translating robust logics into Büchi automata, the direct and the reduction-based one. Both are extensions of translations originally introduced by Tabuada and Neider for robust $\ltl$. The former one translates a formula of a robust logic directly into an equivalent Büchi automaton while the latter one first translates a formula of a robust logic  into an equivalent classical (non-robust) logic, for which a translation into equivalent Büchi automata is already known. For robust $\ltl$, both approaches are applicable~\cite{TabuadaNeider16} and yield Büchi automata of exponential size. 
Here, out of necessity, we apply both approaches: for robust $\ldl$, we present a direct translation while we present a reduction-based approach for robust $\prompt$. Let us quickly elaborate the reasons for this.

First, consider the reduction-based approach for robust $\ltl$, which translates a formula~$\varphi$ of robust $\ltl$ and a truth value~$\beta \succ 0000$ into an $\ltl$ formula~$\varphi_\beta$ that captures $\varphi$ with respect to $\beta$. To this end, the formula~$\varphi_\beta$ implements the intuitive meaning of the robust semantics for the always operator, e.g., we have $(\Boxdot p )_{1111} = \Box p$, $(\Boxdot p )_{0111} = \Diamond\Box p$, $(\Boxdot p )_{0011} = \Box\Diamond p$, and $(\Boxdot p )_{0001} = \Diamond p$. 

Trying to apply this approach to the  $\rldl$ formula~$\varphi = \bboxdot{r} p$, say for $\beta = 0111$, would imply using a formula of the form~$\ddiamonddot{r_0} \bboxdot{r_1} p$ where $r_0$ and $r_1$ are obtained by \myquot{splitting} up $r$. It captures the robust semantics of $\varphi$ with respect to $\beta$ on some trace~$w$ by expressing that there is an $r_0$-match~$j$ such that every $r_1$-match in $\suff{w}{j}$ is $p$-satisfying with degree~$\beta$. Thus, $r_0$ and $r_1$ have to be picked such that the $r_1$-matches in $\suff{w}{j}$ as above correspond exactly to the $r$-matches in $w$.
Further, to obtain a translation of optimal complexity, $r_0$ and $r_1$ have to be of polynomial size in $\size{r}$. It is an open problem whether such a splitting is always possible, in particular in the presence of tests in $r$ and guards with only finitely many $r$-matches.

Secondly, recall that the direct approach to robust $\ltl$ translates a formula~$\varphi$ of $\rltl$ into a Büchi automaton that captures $\varphi$ with respect to all $\beta \in \bool_4$ (by considering five initial states, one for each $\beta$). Trying to apply this approach to robust $\prompt$ requires using a more general automaton model that is able to capture the quantitative nature of the prompt diamond operator while still yielding a model checking and a synthesis algorithm with the desired complexity. To the best of our knowledge, no such translation from $\prompt$ to automata has been presented in the literature, which would be a special case of our construction here.

Thus, according to the state-of-the-art, the direct approach is the only viable one for robust extensions of $\ldl$ while the reduction-based approach is the only viable one for robust extensions of $\prompt$.
This leaves us with no viable approach for $\rpromptldl$. 

Nevertheless, we identify a fragment of $\rpromptldl$ for which both the model checking and the synthesis problem are decidable.
We obtain this fragment by disallowing tests in guards and by requiring them to always have infinitely many matches.
For such formulas, one can translate the guard into a deterministic finite automaton (without tests) and then use this automaton to \myquot{split} $r$.
However, this involves multiple exponential blowups and hence does not prove that the fragment has the exponential compilation property.
Nonetheless, this translation shows that both model checking and synthesis are decidable for this fragment.
The decidability of these problems for full $\rpromptldl$ is left for further research and seemingly requires new approaches.

\subsection{Restricting Guards in \boldmath$\rpromptldl$}
\label{subsec-fragment}

We say that a guard~$r$ is test-free if it does not contain tests as atoms, but only propositional formulas over the atomic propositions. A formula is test-free if each of its guards is test-free. In the remainder, we only consider test-free formulas. As the adaptions made to define $\robpromptRexp_i$ are only concerned with tests, they can be ignored when reasoning about test-free formulas.

\begin{remark}
Let $r$ be a test-free guard. Then, $\robpromptRexp_i(w,k,r)$ is independent of~$i$ and~$k$ for every trace~$w$. 	
\end{remark}

Hence, in the following, we use $\Rexp(w,r)$ (as defined for $\ldl$) instead of $\robpromptRexp_i(w,k,r)$, since the definitions coincide for test-free guards. 

We say that a test-free guard~$r$ is limit-matching if we have $\size{\Rexp(w,r)} = \infty$ for every trace~$w$.
This is well-defined due to the previous remark.
Again, a test-free formula is limit-matching if each of its guards is limit-matching. 

\begin{lemma}
\label{lemma-syntaxeffective}
The problem \myquot{Given a test-free formula~$\varphi$, is $\varphi$ limit-matching?} is in~$\pspace$.
\end{lemma}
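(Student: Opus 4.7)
The plan is to reduce the question to an emptiness check on the subset construction of a small NFA associated with each guard of $\varphi$. The key observation is that $\varphi$ is limit-matching iff every guard $r$ occurring in $\varphi$ is limit-matching, and since $\pspace$ is closed under complementation (via Savitch's theorem), it suffices to give a non-deterministic polynomial-space algorithm for the complementary question: does there exist a guard $r$ in $\varphi$ and a trace $w$ with $\size{\Rexp(w, r)} < \infty$? I would begin by non-deterministically guessing such an $r$ within $\varphi$.

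Next, I would build via a standard Thompson-style construction an NFA $A_r = (Q_r, \pow{P}, q_0^r, \delta_r, F_r)$ of size linear in $\size{r}$ such that, for every trace $w$ and every $j \in \nats$, one has $j \in \Rexp(w, r)$ iff $A_r$ accepts $\pref{w}{j}$. Letting $S_w(j) \subseteq Q_r$ denote the set of $A_r$-states reachable after reading $\pref{w}{j}$, we have $j \in \Rexp(w, r)$ iff $S_w(j) \cap F_r \neq \emptyset$, and thus $\size{\Rexp(w, r)} < \infty$ iff the sequence $S_w(0), S_w(1), S_w(2), \ldots$ eventually avoids $F_r$. Since this sequence is the unique run of the deterministic subset-construction automaton $A_r^{\det}$ on $w$, and any infinite walk in a finite graph eventually stays inside a single strongly connected component, the existence of such a $w$ is equivalent to the existence of a reachable cycle in $A_r^{\det}$ every vertex $S$ of which satisfies $S \cap F_r = \emptyset$.

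This cycle condition is easily decidable in $\mathrm{NPSPACE}$: guess a subset $S \subseteq Q_r$ with $S \cap F_r = \emptyset$; verify reachability of $S$ from $\set{q_0^r}$ in $A_r^{\det}$ by guessing a sequence of alphabet letters one at a time while maintaining only the current subset and computing successors via $\delta_r$; then verify that some non-empty word leads from $S$ back to $S$ while passing exclusively through non-$F_r$-intersecting subsets, again guessing the letters on the fly and checking the $F_r$-disjointness condition after every step. Every object stored along the way has polynomial size, so the whole procedure runs in non-deterministic polynomial space; Savitch's theorem then converts it into the deterministic polynomial-space algorithm claimed by the lemma.

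The principal obstacle is that $A_r^{\det}$ has exponentially many states, so it cannot be materialized explicitly. The remedy is precisely the on-the-fly reasoning sketched above: a reachable non-$F_r$-intersecting cycle is certified by a short witness (the subset $S$) together with two words that are guessed letter by letter, and each individual subset fits in polynomial space because $\size{Q_r}$ is linear in $\size{r}$. Beyond this bookkeeping I do not anticipate further technical difficulties, since test-freeness removes any recursive dependence of guard evaluation on subformulas of $\varphi$.
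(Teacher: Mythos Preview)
Your proposal is correct. The paper's proof takes a slightly more abstract route: it observes that a test-free guard~$r$ is limit-matching iff every trace has infinitely many accepting prefixes in the (test-free) automaton~$\autr_r$, which by K\"onig's Lemma is equivalent to~$\autr_r$ being universal when viewed as a B\"uchi automaton; it then invokes the classical $\pspace$ bound for B\"uchi universality. Your argument is essentially an explicit unfolding of that same reduction: the condition ``there is a reachable non-$F_r$-intersecting cycle in the subset automaton'' is exactly the standard characterization of B\"uchi non-universality for an NFA whose final states serve as B\"uchi accepting states, and your on-the-fly subset-walk is the textbook $\mathrm{NPSPACE}$ algorithm for it. So the two proofs differ mainly in packaging: the paper cites a known result, while you reprove the relevant special case from scratch. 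Your version is more self-contained; the paper's is shorter.
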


\begin{proof}
The problem is in $\pspace$ if one can decide in polynomial space whether a single test-free guard is limit-matching. Hence, let $r$ be such a guard, which is limit-matching if and only if infinitely many prefixes of each trace~$w$ match $r$. An application of König's Lemma yields that the latter condition is equivalent to each $w$ being Büchi-accepted by $\autr_r$. Due to test-freeness, $\autr_r$ can indeed be seen as a Büchi automaton with $\epsilon$-transitions. Hence, $r$ is limit-matching if and only if $\autr_r$ is universal, which can be decided in polynomial space~\cite{SistlaVardiWolper85} (after eliminating $\epsilon$-transitions). The automaton being of the same size as the guard and being efficiently constructible concludes the proof.
\end{proof}

\begin{example}
Recall the formula~$\varphi = \bboxdot{((\neg t)^*\conc t\conc (\neg t)^*\conc t )^*} \promptddiamonddot{\ttrue^*} s$ from Example~\ref{example-rpromptldl}. It is test-free, but not limit-matching as traces with finitely many $t$ only have finitely many $((\neg t)^*\conc t\conc (\neg t)^*\conc t )^*$-matches.

Nevertheless, test-free and limit-matching $\rpromptldl$ formulas can make use of arbitrary modulo counting, a significant advance in expressiveness over classical $\ltl$, thus witnessing the usefulness of the fragment.

For example, the formula~$\bboxdot{r}\promptddiamonddot{r}s$ with $r = (\ttrue\conc\ttrue)^*$ expresses, when evaluated with respect to a bound~$k$, that the distance between synchronizations at even positions is bounded by $k$, i.e., we use the test-free limit-matching guards to \myquot{filter out} the odd positions. 
\end{example}

Let us  note that for $\ldl$, the test-free fragment is of equal expressive power as full $\ldl$, albeit potentially less succinct. This claim follows easily from translating Büchi automata into $\ldl$ formulas, which results in test-free formulas. 

In the following, we consider the model checking and the synthesis problem for test-free limit-matching formulas. To this end, we proceed as in the case of $\rprompt$: We reduce these problems to those for $\promptldl$, i.e., we present a reduction-based translation to Büchi automata.

 Due to only considering limit-matching formulas, we do not have to deal with the cases of having only finitely many matches of a guard. On the other hand, we have to \myquot{split} guards to capture the semantics of the robust diamond operator (recall the discussion in Section~\ref{sec-towardsrpldl}). Here, we exploit the formula under consideration being test-free.

The main technical result on this fragment states that the logic can be derobustified, i.e., translated into $\promptldl$.

\begin{theorem}
\label{thm-prldl2pldl}
For every test-free limit-matching $\rpromptldl$ formula~$\varphi$ and every $\tval \in \bool_{4}$, there is a $\promptldl$ formula~$\varphi_\tval$ such that $\rpromptldleval(w, k, \varphi) \succeq \tval$ if and only if $\promptldleval(w, k, \varphi_\tval) = 1$.
\end{theorem}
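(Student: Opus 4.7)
The proof is by structural induction on $\varphi$, generalising the construction of Lemma~\ref{lemma-prltl2pltl} to the guarded setting. Set $\varphi_{0000} = \ttrue$ for every $\varphi$. For $\tval \succ 0000$, atoms and Boolean connectives are handled as in Lemma~\ref{lemma-prltl2pltl}: $p_\tval = p$, $(\neg p)_\tval = \neg p$, $(\varphi_0 \wedge \varphi_1)_\tval = (\varphi_0)_\tval \wedge (\varphi_1)_\tval$, and $(\varphi_0 \vee \varphi_1)_\tval = (\varphi_0)_\tval \vee (\varphi_1)_\tval$.

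For the diamond operators, test-freeness makes $\robpromptRexp_i(w, k, r)$ coincide with $\Rexp(w, r)$ independently of $i$. Hence $\rpromptldleval(w, k, \ddiamonddot{r}\varphi) \succeq \tval$ iff some $j \in \Rexp(w, r)$ yields $\rpromptldleval(\suff{w}{j}, k, \varphi) \succeq \tval$, so the induction hypothesis justifies $(\ddiamonddot{r}\varphi)_\tval = \ddiamond{r}\,\varphi_\tval$; the prompt diamond is handled analogously by $(\promptddiamonddot{r}\varphi)_\tval = \promptddiamond{r}\,\varphi_\tval$.

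The robust box is the heart of the argument. Limit-matching of $r$ forces $|\Rexp(w, r)| = \infty$ on every trace, eliminating the finite- and zero-match branches in the definition of $\bboxdot{r}\varphi$. The extremal degrees are immediate: $(\bboxdot{r}\varphi)_{1111} = \bbox{r}\,\varphi_{1111}$ and $(\bboxdot{r}\varphi)_{0001} = \ddiamond{r}\,\varphi_{0001}$. For the two intermediate degrees we must express "almost all $r$-matches satisfy $\varphi_{0111}$" and "infinitely many $r$-matches satisfy $\varphi_{0011}$" inside $\promptldl$. The usual $\Diamond\Box$ and $\Box\Diamond$ idioms fail because shifting the evaluation point changes which positions qualify as $r$-matches. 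To circumvent this, we exploit test-freeness and convert $r$ into a deterministic finite automaton $A_r = (Q, q_0, \delta, F)$ recognising the same language. For each pair $q, q' \in Q$ let $r_{q,q'}$ be a test-free regular expression (obtained by state elimination) for the words that take $A_r$ from $q$ to $q'$, and let $r_{q, F}$ be the $+$-union of all $r_{q, q_f}$ with $q_f \in F$. We then set
\begin{align*}
(\bboxdot{r}\varphi)_{0111} &= \bigvee\nolimits_{q \in Q} \ddiamond{r_{q_0, q}}\,\bbox{r_{q, F}}\,\varphi_{0111},\\
(\bboxdot{r}\varphi)_{0011} &= \bigwedge\nolimits_{q \in Q} \bbox{r_{q_0, q}}\,\ddiamond{r_{q, F}}\,\varphi_{0011}.
\end{align*}
Determinism of $A_r$ guarantees that for each position $j'$ there is exactly one $q \in Q$ with $\pref{w}{j'}$ in the language of $r_{q_0, q}$, namely the unique state reached by $A_r$ on $\pref{w}{j'}$; moreover, from that $q$, the $r_{q, F}$-matches of $\suff{w}{j'}$ are in bijection $j \mapsto j' + j$ with the $r$-matches of $w$ at positions $\ge j'$. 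Hence the first formula captures "there exists $j'$ beyond which every $r$-match of $w$ is $\varphi_{0111}$-satisfying" and the second captures "at every position $j'$ some later $r$-match of $w$ is $\varphi_{0011}$-satisfying", as required.

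The main technical hurdle will be formally justifying the bijection between shifted $r_{q, F}$-matches and late $r$-matches of $w$, and checking that the case distinction over $q \in Q$ interacts correctly with the induction hypothesis—in particular, that both determinism of $A_r$ and test-freeness of $r$ are genuinely needed (tests would make matching depend on the truth value $i$ and destroy the correspondence, whereas non-determinism would make the disjunction over $q$ unsound because several $q$ could satisfy $\pref{w}{j'} \in L(r_{q_0, q})$). The construction stays inside $\promptldl$, since only classical guarded diamond, box and prompt diamond operators are introduced. No complexity bound is claimed: the determinisation of $A_r$ and the subsequent state elimination yielding the $r_{q, q'}$ contribute multiple exponential blowups, matching the remark in Section~\ref{sec-towardsrpldl} that this reduction is not optimal.
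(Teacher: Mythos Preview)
Your proposal is correct and follows essentially the same approach as the paper: the paper also determinises $r$ to a DFA, introduces the expressions~$r_{q_\init,q}$ and $r_{q,F}$, and defines $(\bboxdot{r}\varphi)_{0111}$ and $(\bboxdot{r}\varphi)_{0011}$ by exactly the disjunction and conjunction you wrote, while treating atoms, Booleans, and the (prompt) diamond operators identically. Your discussion of the bijection and of why determinism and test-freeness are both needed goes slightly beyond the paper, which dismisses correctness as a \myquot{straightforward induction}, but the underlying argument is the same.
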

	
\begin{proof}

Before we present the translation, we need to explain how to \myquot{split} guards, which is necessary to implement the semantics of the robust box operator (recall the discussion in Section~\ref{sec-towardsrpldl}).
For example, we have to check that almost all $r$-matches are $\psi$-satisfying for some guard~$r$ and some subformula~$\psi$. In $\ltl$, \myquot{almost all} is expressed by $\Diamond\Box$. We will use the analogous $\ldl$ operators, i.e., a formula of the form~$\ddiamond{ \cdot} \bbox{\cdot }$. But now we need guards~$r_0$ and $r_1$ for the diamond and the box operator so that the concatenation $r_0r_1$ is equivalent to $r$. To this end, we transform $r$ into a deterministic automaton. Then, for each state~$q$ of that automaton there exists a guard~$r_{q_\init,q}$ capturing the words leading from the initial state to $q$, and a guard~$r_{q,F}$ capturing all words leading from $q$ to an accepting state.  Ultimately, we end up with a formula of the form~$\bigvee_q\ddiamond{ r_{q_\init,q}} \bbox{ r_{q,F} }$. 

Let $r$ be a test-free guard. Applying Lemma~\ref{lemma-guards2automata} to $r$ yields an $\epsilon$-NFA~$\autr_r$ without tests. Hence, eliminating $\epsilon$-transitions and determinizing the resulting automaton yields a deterministic finite automaton~$\autd_r$ such that $\pref{w}{j}$ is accepted by $\autd_r$ if and only if $j \in \Rexp(w,r)$. Furthermore, due to test-freeness, acceptance of $\pref{w}{j}$ by $\autd_r$ only depends on the prefix~$\pref{w}{j}$ of $w$, but not on the corresponding suffix~$\suff{w}{j}$. This property, which underlies the following construction, does not hold true for guards with tests. 

Now, let $Q$ be the set of states of $\autd_r$, $q_\init$ the initial state, and $F$ the set of final states. Then, for every $q \in Q$, one can efficiently construct regular expressions (i.e., guards)~$r_{q_\init,q}$ and $r_{q,F}$ such that $w \in (\pow{P})^*$ is in the language of $r_{q_\init,q}$ (of $r_{q,F}$) if and only if the unique run of $\autd_r$ starting in $q_\init$ (in $q$) ends in $q$ (in $F$).

Now, we are ready to construct~$\varphi_\tval$. Again, the case~$\tval = 0000$ is trivial. Hence, we assume $\tval \succ 0000$ in the following. We proceed by induction over the construction of the formula:
\begin{itemize}

	\item $p_\tval = p$ and $(\neg p)_\tval = \neg p$ for all atomic propositions~$p \in P$ and all $\tval \succ 0000$. 
	
	\item $(\varphi_0 \wedge \varphi_1)_\tval = (\varphi_0)_\tval \wedge (\varphi_1)_\tval$ for all $\tval \succ 0000$. 

	\item $(\varphi_0 \vee \varphi_1)_\tval = (\varphi_0)_\tval \vee (\varphi_1)_\tval$ for all $\tval \succ 0000$. 
	
	\item $(\ddiamonddot{r}\varphi )_\tval = \ddiamond{r}(\varphi_\tval)$ for all $\tval \succ 0000$
	
	 \item $(\bboxdot{r} \varphi)_{1111} = \bbox{r}(\varphi_{1111})$, 

	 \item 	$(\bboxdot{r} \varphi)_{0111} = \bigvee_{q \in Q} \ddiamond{r_{q_\init,q}}\bbox{r_{q,F}}(\varphi_{0111})$, where $\autd_r = (Q, \pow{P}, q_\init, \delta, F)$,

	 \item 	$(\bboxdot{r} \varphi)_{0011} = \bigwedge_{q \in Q} \bbox{r_{q_\init,q}}\ddiamond{r_{q,F}} (\varphi_{0011})$, where $\autd_r = (Q, \pow{P}, q_\init, \delta, F)$,

	 \item 	$(\bboxdot{r} \varphi)_{0001} = \ddiamond{r} (\varphi_{0001})$, and
	
	\item $(\promptddiamonddot{r} \varphi )_\tval = \promptddiamond{r} (\varphi_\tval)$ for all $\tval \succ 0000$.

\end{itemize}
In the construction of $(\bboxdot{r} \varphi)_{0011}$, we rely on $\autd_r$ being deterministic, since we quantify over all states reached by a prefix. In a non-deterministic automaton, there could be rejecting runs on accepted words, which would still be required to be completable to an accepting run by $(\bboxdot{r} \varphi)_{0011}$. 

A straightforward induction over the construction of $\varphi$, relying on the fact that $\varphi$ is limit-matching, yields the correctness of the translation. The fact that $\varphi$ is limit-matching explains the construction of $(\bboxdot{r} \varphi)_{\beta}$, which only has to implement the first case (\myquot{$\size{\Rexp(w,r)} = \infty$}) of the definition of the semantics.
\end{proof}

Now, the model checking and the synthesis problem for $\rpromptldl$, which are defined as expected, can be solved by reducing them to their analogues for $\promptldl$ (cf.~Section~\ref{subsec-promptldl}). We obtain the following results.

\begin{corollary}
The $\rpromptldl$ model checking and synthesis problem are decidable for the test-free limit-matching fragment.
\end{corollary}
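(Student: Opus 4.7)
The plan is to reduce both problems in the obvious way to their $\promptldl$ counterparts by invoking Theorem~\ref{thm-prldl2pldl}. Concretely, given a test-free limit-matching $\rpromptldl$ formula~$\varphi$ and a threshold~$\tval \in \bool_4$, we first compute the $\promptldl$ formula~$\varphi_\tval$ provided by Theorem~\ref{thm-prldl2pldl}. By that theorem, we have $\rpromptldleval(w, k, \varphi) \succeq \tval$ iff $\promptldleval(w, k, \varphi_\tval) = 1$ for every trace~$w$ and every bound~$k$. Hence a $\rpromptldl$ specification is satisfied with respect to~$\tval$ on an input trace exactly when the corresponding $\promptldl$ specification is satisfied on that trace, independently of~$k$.

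For the model checking problem, this immediately yields that the $\rpromptldl$ instance $(\sys, \varphi, \tval)$ is a \myquot{yes}-instance iff there is a $k \in \nats$ with $\promptldleval(\trace(\rho), k, \varphi_\tval) = 1$ for all paths~$\rho \in \Pi_\sys$, which is a standard $\promptldl$ model checking instance and therefore decidable (it is even $\pspace$-complete, as $\promptldl$ is a fragment of Parametric $\ldl$~\cite{FaymonvilleZimmermann17}). For the synthesis problem, the reduction is analogous: Player~$0$ wins the $\rpromptldl$ game~$(\ggraph, \varphi, \tval)$ from a vertex~$v$ iff she wins the $\promptldl$ game~$(\ggraph, \varphi_\tval)$ from~$v$, and the latter is decidable (again, inheriting the $\twoexp$ upper bound from Parametric $\ldl$). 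Moreover, a finite-state winning strategy for the $\promptldl$ game is immediately a finite-state winning strategy for the $\rpromptldl$ game.

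Two routine points should be addressed to close the proof. First, the translation of Theorem~\ref{thm-prldl2pldl} is effective, so $\varphi_\tval$ can be constructed from $\varphi$ and $\tval$ and passed to the $\promptldl$ algorithm. Second, one should verify that a given formula is indeed test-free and limit-matching before applying the translation; test-freeness is a syntactic check, and being limit-matching is in $\pspace$ by Lemma~\ref{lemma-syntaxeffective}, so this preprocessing does not affect decidability.

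The main obstacle is not the reduction itself, which is straightforward once Theorem~\ref{thm-prldl2pldl} is in hand, but rather the cost of that translation: the construction of $\varphi_\tval$ involves determinizing the automaton~$\autd_r$ associated with each guard~$r$ and then enumerating splittings~$r_{q_\init, q}, r_{q, F}$ over its states, which incurs (at least) exponential blow-ups. Hence, while decidability is preserved, one should not expect the reduction to yield algorithms of the same asymptotic complexity as for plain $\promptldl$; obtaining tight complexity bounds for this fragment requires a more refined, direct treatment and is left for future work.
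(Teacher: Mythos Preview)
Your proposal is correct and follows exactly the approach the paper takes: reduce both problems to their $\promptldl$ analogues via Theorem~\ref{thm-prldl2pldl}, then invoke decidability of $\promptldl$ model checking and games (inherited from Parametric $\ldl$). You supply more detail than the paper does---in particular the explicit remarks about effectiveness, the preprocessing check using Lemma~\ref{lemma-syntaxeffective}, and the caveat on complexity blow-ups---but the argument is the same.
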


We refrain from specifying the exact complexity of the algorithms, as we conjecture them to be several exponents away from optimal algorithms: The guards~$r_{q_\init,q}$ and $r_{q,F}$ are already of doubly-exponential size and we still have to translate the formula~$\varphi_\tval$ containing these guards into (deterministic) automata to solve the problems. 

Note that our approach for the fragment, which relies on a translation to $\promptldl$, cannot easily be extended to formulas with tests and to formulas with non-limit-matching guards. The existence of tests complicates the construction of the deterministic automaton required to \myquot{split} the guards. Consider, for example, the guard~$(\varphi_0?\conc a \conc \varphi_0' ) + (\varphi_1?\conc a \conc \varphi_1' )$: after processing an $a$, depending on which tests hold true before the $a$, the automaton still has to distinguish whether $\varphi_0'$ or $\varphi_1'$ has to hold after processing the $a$. Implementing this requires non-determinism that cannot be resolved while only reading a prefix of a trace. 

 Complicating the situation even further, the lack of negations in prompt logics does not allow to \myquot{disambiguate} the guard.  Similarly, allowing non-limit-matching guards requires us to implement the full case distinction in the definition of the semantics of the robust box operator. However, implementing a case distinction in $\promptldl$ is again complicated by the lack of negations.

\section{Conclusion}
\label{sec-conc}
We addressed the problems of verification and synthesis with robust, expressive, and quantitative linear temporal specifications.
Inspired by robust $\ltl$, we have first developed robust extensions of the logics $\ldl$ and $\prompt$, named $\rldl$ and $\rprompt$, respectively.
Then, we combined $\rldl$ and $\rprompt$ into a third logic, named $\rpromptldl$, which has the expressiveness of $\omega$-regular languages and allows robust reasoning about timing bounds.

For~$\rldl$ and~$\rprompt$, we have shown how to solve the model checking and synthesis problem relying on the exponential compilation property. Hence, all these problems are not harder than those for plain $\ltl$. 
The situation for the combination of all three basic logics, i.e., for $\rpromptldl$, is less encouraging.
We show the problems to be decidable for an important fragment, but due to a blowup of the formulas during the reduction, we (most likely) do not obtain optimal algorithms.
Decidability for the full logic remains open. 

In future work, we aim to determine the exact complexity of the model checking and synthesis problem for (full) $\rpromptldl$. One promising approach is to generalize the translation of $\rldl$ into weak alternating Büchi automata. However, this requires a suitable quantitative alternating automata model with strong closure properties that can be transformed into equivalent non-deterministic and deterministic automata. 

Another promising direction for further research is to study the semantics for the robust box operator proposed in Footnote~\ref{footnote-altsemantics} on Page~\pageref{footnote-altsemantics}. In particular, it is open whether the translation into alternating automata can be generalized to this setting without a blowup.
Also, we leave open whether full robust $\ltl$, i.e., with until and release, can be embedded into $\rldl$. As is, the robust semantics of the release operator (see~\cite{TabuadaNeider16}) is not compatible with our robust semantics for $\rldl$. In future work, we plan to study generalizations of full robust $\ltl$. 

Another natural question is whether the techniques developed for $\rldl$ can be applied to a robust version of the Property Specification Language~\cite{EisnerFismanPSL}. 

\paragraph*{Acknowledgements} We would like to the thank the reviewers for their detailed feedback, which improved the paper considerably.

\bibliographystyle{splncs03}
\bibliography{biblio}

\newcommand{\noopsort}[1]{}
\begin{thebibliography}{10}
\providecommand{\url}[1]{\texttt{#1}}
\providecommand{\urlprefix}{URL }

\bibitem{AlurEtessamiLaTorrePeled01}
Alur, R., Etessami, K., Torre, S.L., Peled, D.: Parametric temporal logic for
  ``model measuring''. ACM Trans. Comput. Log.  2(3),  388--407 (2001)

\bibitem{gameswithboxes}
Alur, R., {La Torre}, S., Madhusudan, P.: Playing games with boxes and
  diamonds. In: Amadio, R.M., Lugiez, D. (eds.) {CONCUR} 2003. LNCS, vol. 2761,
  pp. 127--141. Springer (2003)

\bibitem{AlurLatorre04}
Alur, R., Torre, S.L.: Deterministic generators and games for {LTL} fragments.
  ACM Trans. Comput. Log.  5(1),  1--25 (2004)

\bibitem{AnevlavisNPT19}
Anevlavis, T., Neider, D., Phillipe, M., Tabuada, P.: Evrostos: the {rLTL}
  verifier. In: Ozay, N., Prabhakar, P. (eds.) {HSCC} 2019. pp. 218--223. {ACM}
  (2019)

\bibitem{AnevlavisPNT18}
Anevlavis, T., Philippe, M., Neider, D., Tabuada, P.: Verifying {rLTL}
  formulas: now faster than ever before! In: {CDC} 2018. pp. 1556--1561. {IEEE}
  (2018)

\bibitem{BaierKatoen08}
Baier, C., Katoen, J.P.: Principles of Model Checking. The MIT Press (2008)

\bibitem{DBLP:conf/cav/BloemCGHJ10}
Bloem, R., Chatterjee, K., Greimel, K., Henzinger, T.A., Jobstmann, B.:
  Robustness in the presence of liveness. In: {CAV} 2010. LNCS, vol. 6174, pp.
  410--424. Springer (2010)

\bibitem{DBLP:conf/stoc/CaludeJKL017}
Calude, C.S., Jain, S., Khoussainov, B., Li, W., Stephan, F.: Deciding parity
  games in quasipolynomial time. In: {STOC} 2017. pp. 252--263. {ACM} (2017)

\bibitem{DallalNeiderTabuada16}
Dallal, E., Neider, D., Tabuada, P.: Synthesis of safety controllers robust to
  unmodeled intermittent disturbances. In: CDC 2016. pp. 7425--7430 (2016)

\bibitem{GiacomoVardi13}
{De Giacomo}, G., Vardi, M.Y.: Linear temporal logic and linear dynamic logic
  on finite traces. In: Rossi, F. (ed.) IJCAI. IJCAI/AAAI (2013)

\bibitem{DBLP:conf/formats/DonzeM10}
Donz{\'{e}}, A., Maler, O.: Robust satisfaction of temporal logic over
  real-valued signals. In: Chatterjee, K., Henzinger, T.A. (eds.) {FORMATS}
  2010. LNCS, vol. 6246, pp. 92--106. Springer (2010)

\bibitem{DBLP:conf/acsd/DoyenHLN10}
Doyen, L., Henzinger, T.A., Legay, A., Nickovic, D.: Robustness of sequential
  circuits. In: Gomes, L., Khomenko, V., Fernandes, J.M. (eds.) {ACSD} 2010.
  pp. 77--84. {IEEE} Computer Society (2010)

\bibitem{EisnerFismanPSL}
Eisner, C., Fisman, D.: A Practical Introduction to PSL. Integrated Circuits
  and Systems, Springer (2006)

\bibitem{DBLP:journals/tcs/FainekosP09}
Fainekos, G.E., Pappas, G.J.: Robustness of temporal logic specifications for
  continuous-time signals. Theor. Comput. Sci.  410(42),  4262--4291 (2009)

\bibitem{FaymonvilleZimmermann14}
Faymonville, P., Zimmermann, M.: Parametric linear dynamic logic. In: Peron,
  A., Piazza, C. (eds.) GandALF 2014. {EPTCS}, vol. 161, pp. 60--73 (2014)

\bibitem{FaymonvilleZimmermann17}
Faymonville, P., Zimmermann, M.: Parametric linear dynamic logic. Inf. Comput.
  253,  237--256 (2017)

\bibitem{Fix08}
Fix, L.: Fifteen years of formal property verification in intel. In: Grumberg,
  O., Veith, H. (eds.) 25 Years of Model Checking - History, Achievements,
  Perspectives. LNCS, vol. 5000, pp. 139--144. Springer (2008)

\bibitem{GraedelThomasWilke02}
Gr{\"a}del, E., Thomas, W., Wilke, T. (eds.): Automata, Logics, and Infinite
  Games: A Guide to Current Research, LNCS, vol. 2500. Springer (2002)

\bibitem{Koymans90}
Koymans, R.: Specifying real-time properties with metric temporal logic.
  Real-Time Systems  2,  255--299 (1990)

\bibitem{KupfermanPitermanVardi09}
Kupferman, O., Piterman, N., Vardi, M.Y.: From liveness to promptness. Formal
  Methods in System Design  34(2),  83--103 (2009)

\bibitem{LeuckerSanchez07}
Leucker, M., S{\'{a}}nchez, C.: Regular linear temporal logic. In: Jones, C.B.,
  Liu, Z., Woodcock, J. (eds.) ICTAC 2007. LNCS, vol. 4711, pp. 291--305.
  Springer (2007)

\bibitem{DBLP:conf/rtss/MajumdarS09}
Majumdar, R., Saha, I.: Symbolic robustness analysis. In: Baker, T.P. (ed.)
  {RTSS} 2009. pp. 355--363. {IEEE} Computer Society (2009)

\bibitem{monitoring}
Mascle, C., Neider, D., Schwenger, M., Tabuada, P., Weinert, A., Zimmermann,
  M.: From {LTL} to {rLTL} monitoring: Improved monitorability through robust
  semantics. In: HSCC 2020. SCM, New York, NY, USA (2020)

\bibitem{MH}
Miyano, S., Hayashi, T.: Alternating finite automata on omega-words. Theor.
  Comput. Sci.  32,  321--330 (1984)

\bibitem{NeiderWeinertZimmermann18}
Neider, D., Weinert, A., Zimmermann, M.: Synthesizing optimally resilient
  controllers. In: Ghica, D.R., Jung, A. (eds.) CSL 2018. LIPIcs, vol. 119, pp.
  34:1--34:17. Schloss Dagstuhl - LZI (2018)

\bibitem{DBLP:journals/corr/abs-1909-08538}
Neider, D., Weinert, A., Zimmermann, M.: Robust, expressive, and quantitative
  linear temporal logics: Pick any two for free. In: Leroux, J., Raskin, J.
  (eds.) GandALF 2019. {EPTCS}, vol. 305, pp. 1--16 (2019)

\bibitem{Pnueli77}
Pnueli, A.: The temporal logic of programs. In: FOCS 1977. pp. 46--57. IEEE
  (Oct 1977)

\bibitem{PnueliRosner89a}
Pnueli, A., Rosner, R.: On the synthesis of an asynchronous reactive module.
  In: Ausiello, G., Dezani-Ciancaglini, M., Rocca, S.R.D. (eds.) ICALP 1989.
  LNCS, vol. 372, pp. 652--671. Springer (1989)

\bibitem{SistlaClarke85}
Sistla, A.P., Clarke, E.M.: The complexity of propositional linear temporal
  logics. J. ACM  32(3),  733--749 (1985)

\bibitem{SistlaVardiWolper85}
Sistla, A.P., Vardi, M.Y., Wolper, P.: The complementation problem for
  {B}{\"{u}}chi automata with applications to temporal logic (extended
  abstract). In: Brauer, W. (ed.) ICALP 1985. LNCS, vol. 194, pp. 465--474.
  Springer (1985)

\bibitem{DBLP:journals/tac/TabuadaCRM14}
Tabuada, P., Caliskan, S.Y., Rungger, M., Majumdar, R.: Towards robustness for
  cyber-physical systems. {IEEE} Trans. Automat. Contr.  59(12),  3151--3163
  (2014)

\bibitem{TabuadaNeider16}
Tabuada, P., Neider, D.: Robust linear temporal logic. In: Talbot, J., Regnier,
  L. (eds.) {CSL} 2016. LIPIcs, vol.~62, pp. 10:1--10:21. Schloss Dagstuhl -
  LZI (2016)

\bibitem{Vardi11}
Vardi, M.Y.: The rise and fall of {LTL}. In: D'Agostino, G., Torre, S.L. (eds.)
  GandALF 2011. {EPTCS}, vol.~54 (2011)

\bibitem{VardiWolper94}
Vardi, M.Y., Wolper, P.: Reasoning about infinite computations. Inf. Comput.
  115(1),  1--37 (1994)

\bibitem{WeinertZimmermann16}
Weinert, A., Zimmermann, M.: Visibly linear dynamic logic. In: Lal, A., Akshay,
  S., Saurabh, S., Sen, S. (eds.) {FSTTCS} 2016. LIPIcs, vol.~65, pp.
  28:1--28:14. Schloss Dagstuhl - LZI (2016)

\bibitem{Wolper83}
Wolper, P.: Temporal logic can be more expressive. Information and Control
  56(1/2),  72--99 (1983)

\bibitem{Zimmermann13}
Zimmermann, M.: Optimal bounds in parametric {LTL} games. Theor. Comput. Sci.
  493,  30--45 (2013)

\bibitem{Zimmermann18}
Zimmermann, M.: Parameterized linear temporal logics meet costs: still not
  costlier than {LTL}. Acta Inf.  55(2),  129--152 (2018)

\end{thebibliography}

\end{document}